\newtheorem{theorem}{Theorem}
\crefname{lemma}{Lemma}{lemma}
\crefname{definition}{Definition}{definition}
\crefname{corollary}{Corollary}{corollary}
\crefname{fact}{Fact}{fact}
    \newcommand{\inp}[2]{\langle{#1},{#2}\rangle} 
    \def\L{\mathcal{L}}
    \newcommand{\Z}{\ensuremath{\mathbb{Z}}}
    \newcommand{\R}{\ensuremath{\mathbb{R}}}
    \DeclareMathOperator{\poly}{poly}
    \newtheorem{definition}[theorem]{Definition}
    \newtheorem{fact}[theorem]{Fact}
    \newtheorem{lemma}[theorem]{Lemma}
\definecolor{Joa}{rgb}{0.4,0.3,0.9}
\definecolor{Ale}{rgb}{0.1,0.5,0.5}
\definecolor{Adi}{rgb}{0.9,0.0,0.0}
\definecolor{George}{rgb}{0.1,0.8,0.2}
\title{On the practicality of quantum sieving algorithms for the shortest vector problem}
\author[1,2]{Joao F. Doriguello\thanks{\scriptsize Corresponding author: doriguello@renyi.hu}}
\author[2,3]{George Giapitzakis}
\author[2]{Alessandro Luongo}
\author[2]{Aditya Morolia}
\affil[1]{HUN-REN Alfr\'ed R\'enyi Institute of Mathematics, Budapest, Hungary}
\affil[2]{Centre for Quantum Technologies, National University of Singapore, Singapore}
\affil[3]{David R. Cheriton School of Computer Science, University of Waterloo, Waterloo, Canada}
\date{\today}
\begin{document}

\maketitle

\begin{abstract}
    One of the main candidates of post-quantum cryptography is lattice-based cryptography. Its cryptographic security against quantum attackers is based on the worst-case hardness of lattice problems like the shortest vector problem (SVP), which asks to find the shortest non-zero vector in an integer lattice. Asymptotic quantum speedups for solving SVP are known and rely on Grover's search. However, to assess the security of lattice-based cryptography against these Grover-like quantum speedups, it is necessary to carry out a precise resource estimation beyond asymptotic scalings. In this work, we perform a careful analysis on the resources required to implement several sieving algorithms aided by Grover's search for dimensions of cryptographic interests. For such, we take into account fixed-point quantum arithmetic operations, non-asymptotic Grover's search, the cost of using quantum random access memory (QRAM), different physical architectures, and quantum error correction. We find that even under very optimistic assumptions like circuit-level noise of $10^{-5}$, code cycles of $100$~ns, reaction time of $1$~$\mu$s, and using state-of-the-art arithmetic circuits and quantum error-correction protocols, the best sieving algorithms require $\approx 10^{13}$ physical qubits and $\approx 10^{31}$ years to solve SVP on a lattice of dimension $400$, which is roughly the dimension in which SVP is to be solved in order to break the minimally secure post-quantum cryptographic standards currently being proposed by NIST. We estimate that a $6$-GHz-clock-rate single-core classical computer would take roughly the same amount of time to solve the same problem. We conclude that there is currently little to no quantum speedup in the dimensions of cryptographic interest and the possibility of realising a considerable quantum speedup using quantum sieving algorithms would require significant breakthroughs in theoretical protocols and hardware development.
\end{abstract}

\tableofcontents


\section{Introduction}
\label{sec:introduction}

Lattice-based cryptography~\cite{hoffstein1998ntru,regev2005lattices,regev2006lattice,micciancio2012trapdoors} has emerged as an important alternative to traditional discrete-log-based cryptosystems like RSA, DSA, and Elliptic-curve cryptography since the advent of Shor's algorithm in 1994~\cite{shor1994algorithms,shor1999polynomial}. Apart from the belief of being cryptographically secure against quantum attacks, lattice-based cryptography has several other important properties, like being based on {\em worst-case hardness} of lattice problems, e.g., the shortest vector problem (SVP)~\cite{ajtai96generating}, and allowing fully homomorphic encryption schemes~\cite{gentry2009fully,brakerski2014fully}. For these reasons, lattice-based cryptography is still considered one of the main candidates of \emph{post-quantum cryptography}~\cite{bernstein2017postquantum}, to the point of being one of the finalist in NIST's undertaking of the standardization of post-quantum cryptography schemes~\cite{nist_cfp}. It is therefore of paramount importance to understand the security level provided by lattice-based cryptography not only against classical attackers but also against quantum ones in order to determine the security guaranteed at various parameter regimes.

The security assumptions of such schemes are related to the problem of finding the shortest non-zero vector in a lattice, in the sense that the best attacks on them make use of an oracle for (approximate) SVP. There are currently three main types of algorithms to solve SVP: sieving~\cite{ajtai2001sieve,Ajtai2001overview,micciancio2010faster,aggarwal2015solving}, enumeration~\cite{fincke1985improved,Kannan1983improved,Pohst1981computation}, and constructing the Voronoi cell of the lattice~\cite{Agrell2002closest,micciancio2010deterministic}. Heuristic versions of lattice sieving and enumeration have seen a lot of success in solving SVP practically, with lattice sieving~\cite{kirshanova2023new} holding the record for breaking an NTRU~\cite{hoffstein1998ntru} challenge by Security Innovation Inc.~\cite{NTRU_challenge} with largest dimension. By using the
BKZ (Block-Korkine-Zolotarev) algorithm~\cite{Schnorr1994Lattice} with lattice sieving, Kirshanova, May, and Nowakowski~\cite{kirshanova2023new} recently broke a lattice-based construction in dimension $D=181$ in $20$ core years. Despite this and a long line of work on such algorithms~\cite{Kannan1983improved, Pohst1981computation, fincke1985improved, nguyen2008sieve, micciancio2010faster, Laarhoven2016search}, however, enumeration and sieving algorithms remain notoriously hard to analyze. The situation is further complicated by the introduction of quantum subroutines into sieving and enumeration algorithms like Grover's search~\cite{grover1996fast,grover97quantum}, which makes unclear how secure lattice-based cryptography is against these ``quantumly-enhanced'' algorithms. It is thus of critical importance to assess the actual quantum advantage that subroutines like Grover's search provide in solving SVP.

A few different works have tried to estimate the amount of resources required, and thus the computational advantage provided, by Grover's search in sieving~\cite{albrecht2020estimating,kim2024finding,kim2024quantum} and in enumeration~\cite{bai2023concrete,bindel2024quantum,prokop2024grover} algorithms. However, all of the existing work on such algorithms ignores the spacetime cost of quantum random access memory ($\mathsf{QRAM}$)~\cite{giovannetti2008architectures,giovannetti2008quantum} and/or of quantum error correction on fault tolerant quantum computers, which can add a significant overhead. In this work, we perform a very thorough analysis of the quantum resources required to enhance several \emph{sieving algorithms} with Grover's search by taking into consideration fixed-point arithmetic operations, non-asymptotic Grover's search, the cost of $\mathsf{QRAM}$, and quantum error correction.

\subsection{Previous works}

Sieving algorithms, introduced by Ajtai, Kumar, and Sivakumar~\cite{ajtai2001sieve,Ajtai2001overview}, attempt to solve SVP by sampling several vectors and combining them together in order to generate shorter vectors. The sampled vectors are thus repeatedly ``sieved'' using a norm-reducing operation until a vector with shortest norm remains. The first practical and heuristic sieving algorithm was designed by Nguyen and Vidick~\cite{nguyen2008sieve}. The Nguyen-Vidick sieve ($\mathtt{NVSieve}$) solves SVP in a $D$-dimensional lattice in time $2^{0.415D + o(D)}$ under heuristic assumptions. Shortly after, Micciancio and Voulgaris~\cite{micciancio2010faster} presented $\mathtt{GaussSieve}$, a heuristic sieving algorithm with a time complexity conjectured to be equal to that of $\mathtt{NVSieve}$, i.e., $2^{0.415D + o(D)}$, but with better performance in practice. Since then, several new sieving algorithms have been proposed~\cite{wang2011improved,zhang2014three,laarhoven15angular,laarhoven2015faster,becker2015speeding,becker2016efficient,becker2016new}. In particular, heuristic sieves like $\mathtt{NVSieve}$ and $\mathtt{GaussSieve}$ have been improved with nearest-neighbour-search methods~\cite{Indyk1998approximate} like locality sensitive hashing (LSH)~\cite{charikar2002similarity,Andoni2014beyond,Andoni2015optimal} and locality sensitive filtering (LSF)~\cite{becker2015speeding,becker2016new}. These techniques allow to reduce the number of vector comparisons by storing low-dimensional sketches (hashes) such that nearby vectors have a higher chance of sharing the same hash than far away vectors. The asymptotically best classical sieving algorithms are the $\mathtt{NVSieve}$/$\mathtt{GaussSieve}$ enhanced with spherical LSH~\cite{laarhoven2015faster} and $\mathtt{NVSieve}$/$\mathtt{GaussSieve}$ enhanced with spherical LSF~\cite{becker2016new}, which can heuristically solve SVP in time $2^{0.2971D +o(D)}$ and $2^{0.2925D + o(D)}$, respectively. For more on sieving algorithms, see the review~\cite{sun2020review}.

Quantum algorithms for SVP have recently been explored. 
Laarhoven, Mosca, and van de Pol~\cite{Laarhoven2015Quantum} studied the impact of Grover's search on the asymptotic complexity of various classical sieving algorithms, including $\mathtt{NVSieve}$ and $\mathtt{GaussSieve}$. They concluded that SVP can be heuristically solved on a quantum computer in time $2^{0.2671D + o(D)}$ by employing Grover's search on $\mathtt{NVSieve}$/$\mathtt{GaussSieve}$ with spherical LSH, a $\approx 9\%$ reduction in exponent compared to the classical complexity of~\cite{laarhoven2015faster}. Later, Laarhoven~\cite{Laarhoven2016search} improved the time complexity to $2^{0.2653D + o(D)}$ by employing Grover's search on $\mathtt{NVSieve}$/$\mathtt{GaussSieve}$ with spherical LSF, again leading to a $\approx 9\%$ reducing in exponent compared to its classical counterpart~\cite{becker2016new}. Chailloux and Loyer~\cite{chailloux21}, on the other hand, presented a modified algorithm in which Grover's search over a filtered list is replaced with a quantum random walk~\cite{magniez2007search}. This brings down the asymptotic time of the quantum algorithm to $2^{0.2570D + o(D)}$. We note that their algorithm still uses Grover's search in the update operation of the quantum random walk. A subsequent work~\cite{bonnetain2023finding} managed to reduce the complexity even further to $2^{0.2563D + o(D)}$ via improved results on quantum random walks. Other works on quantum heuristic sieving algorithms include~\cite{kirshanova2019quantum}. Regarding provably correct algorithms, Aggarwal et al.~\cite{aggarwal2020improved} more recently gave a provable quantum algorithm that solves SVP in time $2^{0.95D+o(D)}$ and requires $2^{0.5D+o(D)}$ classical memory and $\operatorname{poly}(D)$ qubits. If given access to a $\mathsf{QRAM}$ of size $2^{0.293D+o(D)}$, their algorithm requires time $2^{0.835D+o(D)}$ while using the same amount of classical memory and qubits. This improves upon the previously known fastest classical provable algorithm~\cite{aggarwal2015solving}.

Beyond asymptotic complexities, Albrecht et al.~\cite{albrecht2020estimating} analysed the cost of quantum algorithms for nearest neighbor search with focus on sieving algorithms. They presented a quantum circuit for performing a simple version of LSF using a \emph{population count} filter, which lets two vectors through the same filter whenever their hashes (using Charikar’s LSH scheme~\cite{charikar2002similarity}) have small Hamming distance. The authors then employed Grover's algorithm inside a quantum amplitude amplification routine~\cite{brassard2002quantum} to search over the filtered list of nearby vectors to a given vector. By assuming $32$ bits of precision, taking quantum arithmetic operations into consideration, disregarding the cost of $\mathsf{QRAM}$, and using a simplified quantum error-correction analysis, Albrecht et al.~\cite{albrecht2020estimating} compared the number of classical and quantum operations employed by three different sieving algorithms: the $\mathtt{NVSieve}$~\cite{nguyen2008sieve}, the bgj1 specialisation~\cite{albrecht2019general} of the Becker-Gama-Joux sieve~\cite{becker2015speeding} (which is akin to $\mathtt{NVSieve}$ with angular LSH~\cite{laarhoven15angular}), and the $\mathtt{NVSieve}$ with spherical LSF~\cite{becker2016new}. They concluded that the number of quantum operations is indeed asymptotically smaller than the number of classical operations, but are comparable at cryptographic dimensions of interest. For example, at dimension $D=400$, which is roughly the dimension in which SVP has to be solved to be able to break the minimally secure post-quantum cryptographic standards currently being standardised~\cite{Dilithium2021Algorithm}, Albrecht et al.~\cite[Figure~2]{albrecht2020estimating} estimated that $\mathtt{NVSieve}$ with spherical LSF (called ListDecodingSearch in their paper) requires either $\approx 10^{42}$ quantum operations or $\approx 10^{43}$ classical operations. 

Regarding other works on resource estimations of quantum sieving algorithms, Kim et al.~\cite{kim2024quantum} estimated the number of logical qubits and logical quantum gates required by Grover's search on $\mathtt{NVSieve}$ to solve SVP in lattices of small dimensions. As an example, by ignoring $\mathsf{QRAM}$ and quantum error correction, the authors estimated that a \emph{single} Grover's search would require $\approx 7\cdot 10^{7}$ logical quantum gates and $\approx 1.5\cdot 10^{6}$ logical qubits in dimension $D=70$ (cf.~\cite[Table~3]{kim2024quantum}). On the other hand, Prokop et al.~\cite{prokop2024grover} proposed a quantum circuit for and studied the resource requirements of a Grover oracle for SVP and analysed how to combine Grover's search with the BKZ algorithm. Beyond sieving algorithms, we briefly mention a variational quantum algorithm proposal with resource estimations for the NISQ era~\cite{albrecht2023variationalquantum} and estimations for quantum enumeration algorithms~\cite{bai2023concrete,bindel2024quantum} and for Grover's search attacks on EAS~\cite{grassl2016applying,almazrooie2018quantum,bonnetain2019quantum,jaques2020implementing} and on SHA-2/SHA-3~\cite{amy16estimating}.

\subsection{Our contributions}

In this paper, we study how practical quantum speedups for lattice sieves are by performing a precise estimate on the amount of resources required to implement Grover's search on several sieving algorithms. The sieving algorithms considered in this work are the plain $\mathtt{NVSieve}$~\cite{nguyen2008sieve} and $\mathtt{GaussSieve}$~\cite{micciancio2010faster} and their enhanced versions with angular/hyperplane LSH (also known as $\mathtt{HashSieve}$)~\cite{laarhoven15angular}, with spherical/hypercone LSH (also known as $\mathtt{SphereSieve}$)~\cite{laarhoven2015faster}, and with spherical/hypercone LSF (also known as $\mathtt{BDGL}$ sieve)~\cite{becker2016new}, to a total of $8$ different sieves. Each of these sieving algorithms perform several Grover's searches per sieving step in order to find lattice vectors that combined yield a new lattice vector with a smaller norm. We compute the amount of physical qubits and time required to perform \emph{all} Grover's searches in a typical instance of the aforementioned sieves. For such, we take into account the following. We stress that all the assumptions and values below are physically motivated and realistic, although optimistic.
\begin{enumerate}
    \item \textbf{Fixed-point quantum arithmetic.} Every entry of a $D$-dimensional vector is stored using two's-complement representation with $\kappa = 32$ (qu)bits and arithmetic operations on a quantum computer are performed modulo $2^{\kappa}$. Possible overflows are ignored. We decompose the Grover oracle behind each sieving algorithm into basic arithmetic operations like addition, comparison, and multiplication, and employ quantum circuits for each such arithmetic operation. For quantum addition and comparison, we utilise Gidney's out-of-place quantum adder~\cite{gidney2018halving}, which has the lowest $\mathsf{Toffoli}$-count of all quantum adders that we are aware of. For quantum multiplication, we utilise a simple decomposition into quantum adders based on schoolbook multiplication that has lower $\mathsf{Toffoli}$-count compared to previous works. A similar construction has appeared in~\cite{babbush2018encoding} and, very recently, in~\cite{litinski2024quantum}.
    
    \item \textbf{Non-asymptotic Grover's search.} It is well known that Grover's search requires $\lfloor \frac{\pi}{4}\sqrt{N/M}\rfloor$ iterations to find one out of $M$ marked elements in a database of size $N$ with high probability if $M$ and $N$ are known beforehand. We do \emph{not} assume to know the number of solutions to any Grover's search within a sieving algorithm. This requires an exponential search Grover's algorithm~\cite{boyer1998tight} whose complexity beyond an asymptotic scaling was analysed by Cade, Folkertsma, Niesen, and Weggemans~\cite{Cade2023quantifyinggrover} and which we borrow.

    \item \textbf{Quantum random access memory.} We take into consideration the cost of employing quantum random access memory ($\mathsf{QRAM}$) to quantumly access a classical database within Grover's search. We work exclusively with $\mathsf{QRAM}$s that access \emph{classical data} and consider the circuit implementation from Arunachalam et al.~\cite{arunachalam2015robustness} (see also~\cite{di2020fault}) of the bucket-brigade $\mathsf{QRAM}$ architecture~\cite{giovannetti2008architectures,giovannetti2008quantum}, which is conceptually simple, has shallow depth, and is noise resilient~\cite{hann2021resilience}. We assume that the memory content can be classically rewritten without affecting the $\mathsf{QRAM}$ circuit.

    \item \textbf{Physical architectures.} It is necessary to specify a physical architecture for a general-purpose fault-tolerant quantum computer. Here we assume two different types of architectures: \emph{baseline} architectures with nearest-neighbor logical two-qubit interactions on a 2D grid~\cite{Litinski2019gameofsurfacecodes,fowler2018low,Chamberland2022universal,Chamberland2022building,bombin2021interleaving}, of which Google's Sycamore processor~\cite{arute2019quantum} is an example, and the \emph{active-volume} architecture recently proposed by Litinski and Nickerson~\cite{litinski2022active} that employs a logarithmic number of non-local connections between logical qubits.

    \item \textbf{Quantum error correction.} Physical quantum computers are heavily affected by noise and a realistic resource estimate should take this into consideration. In this paper we assume an incoherent circuit-level noise model for the physical qubits with error $p_{\rm phy} = 10^{-5}$. In order to protect against errors, we use surface codes introduced by Kitaev~\cite{kitaev1997quantum,kitaev2003fault} to encode logical qubits, or more precisely, a patch-based surface-code encoding~\cite{horsman2012surface}. The time required to measure all surface-code check operators as part of error detecting and correction defines a \emph{code cycle}, which we assume to be $100$~ns. The most expensive operations on surface codes are non-Clifford gates like $\mathsf{T}$ and $\mathsf{Toffoli}$ gates, which can be performed by consuming ``magic states''~\cite{Bravyi2005universal} akin to teleportation protocols. We take into consideration space and time overheads to consume magic states by following the framework of~\cite{Litinski2019gameofsurfacecodes,litinski2022active}. In order to prepare low-error magic states, short error-detecting quantum procedures known as magic state distillation protocols~\cite{Bravyi2005universal,Reichardt2005quantum} are used. Here we employ the distillation protocols from Litinski~\cite{Litinski2019magicstate} which are one of the best we know of. More specifically, we employ a three-level concatenation distillation protocol by using two 15-to-1 punctured Reed-Muller codes~\cite{Bravyi2005universal,Haah2018codesprotocols} followed by a third and final 8-to-CCZ distillation protocol~\cite{Gidney2019efficientmagicstate} to obtain $|CCZ\rangle$ magic states with errors smaller than $10^{-40}$, which are used to perform fault tolerant $\mathsf{Toffoli}$ gates. Finally, the time required to perform a layer of measurements, feed the measurement outcomes into a classical decoder, perform a classical decoding algorithm like minimum-weight perfect matching~\cite{Edmonds1965paths,dennis2002topological} or union-find~\cite{delfosse2020lineartime,Delfosse2021almostlineartime}, and use the result to send new instructions to the quantum computer is called \emph{reaction time}. We assume a reaction time of $1~\mu$s. We note that, although the values used here for circuit-level noise, code cycle, and reaction time are not strictly impossible, they are quite optimistic~\cite{ballance2016high,gaebler2016high,madjarov2020high,clark2021high,moskalenko2022high,ruskuc2022nuclear,Chen2021exponential,Anderson2021realization,Krinner2022realizing,Battistel2023realtime}.

    \item \textbf{Classical hashing operations.} Hashing techniques can be used to decrease the time searching for reducing vectors and require purely classical operations. We take into consideration the amount of time required to classically hash vectors on top of the time required to quantumly search for reducing vectors with Grover algorithm. We break the hashing operations into additions and multiplications and assume that one addition takes $1$ cycle/instruction while one multiplication takes $4$ cycles/instructions. We consider a $6$-GHz-clock-rate single-core classical computer, i.e., it performs $6\cdot 10^9$ instructions per second. We disregard memory allocation times.
\end{enumerate}

For the sake of comparison, we also consider classical versions of $\mathtt{NVSieve}$ and $\mathtt{GaussSieve}$ in which the searching part is perform classically in a sequential manner instead of using Grover algorithm. For such, we decompose the searching operation into basic arithmetic operations like addition and multiplications (this decomposition is the same for the Grover oracle). Similarly to the classical hashing operations, we assume that one addition takes $1$ instruction and one multiplication takes $4$ instructions. We consider a $6$-GHz-clock-rate single-core classical computer.

Although resource estimates as comprehensive as ours have been carried out under similar considerations for algorithms like Shor's~\cite{Gidney_2021,litinski2023compute}, we are not aware of similar results on sieving (or enumeration) algorithms. The work of Albrecht et al.~\cite{albrecht2020estimating} is the closest to our results, but they fall short of considering $\mathsf{QRAM}$ and conducting a more rigorous analysis on quantum error correction. As an example, the scripts provided by Gidney and Eker\aa~\cite{Gidney_2021} and adapted by Albrecht et al.\ consider two-level distillation protocols which, although enough in the context of Shor's algorithm, cannot produce magic states with small enough errors for sieving algorithms in high dimensions. A three or four-level distillation protocol is required to reach errors below $10^{-40}$ or even $10^{-50}$.

Since $\mathtt{NVSieve}$ and $\mathtt{GaussSieve}$ are inherently randomised algorithms, we carried out the resource estimates under heuristic assumptions on the value of internal parameters of these sieves. As examples, we assume that the initial list size in $\mathtt{NVSieve}$ is $D\cdot 2^{0.2352D + 0.102\log_2{D} + 2.45}$ as numerically computed by Nguyen and Vidick~\cite{nguyen2008sieve}, while the maximum list size in $\mathtt{GaussSieve}$ is $2^{0.193D + 2.325}$ as calculated by us and similarly reported by Mariano et al.~\cite{mariano14comprehensive}. The number of sieving steps in $\mathtt{GaussSieve}$ has been reported to grow as $2^{0.283D + 0.335}$ by Mariano et al.~\cite{mariano14comprehensive} and independently checked by us. We refer the reader to \cref{sec:heuristics} for a complete list of assumptions on the average performance of $\mathtt{NVSieve}$ and $\mathtt{GaussSieve}$. On the other hand, the use of hashing techniques (LSH and LSF) introduces two tunable parameters: the size of the hash space and the number of hash tables. The values used for these parameters are highly heuristic in practice, while in asymptotic analyses they are chosen so to guarantee that nearby vectors collide (have the same hash) in at least one hash table with high probability and to balance out the time spent hashing and the time spent searching for reducing vectors. Here we follow a (slightly more detailed) version of the asymptotic analysis. To be more precise, we set the parameters in order to balance the classical hashing time and the quantum searching time by \emph{ignoring overall complexity constant factors}, meaning that classically hashing a list of certain size would be roughly as costly as quantumly searching the same list. Although not an entirely realistic assumption, it is optimistic in that it lessens the computational burden on hashing. We leave a more detail analysis on the hashing parameters for a future work.

\begin{figure}[t]
    \centering
    \begin{subfigure}[b]{0.49\textwidth}
        \includegraphics[width=\textwidth]{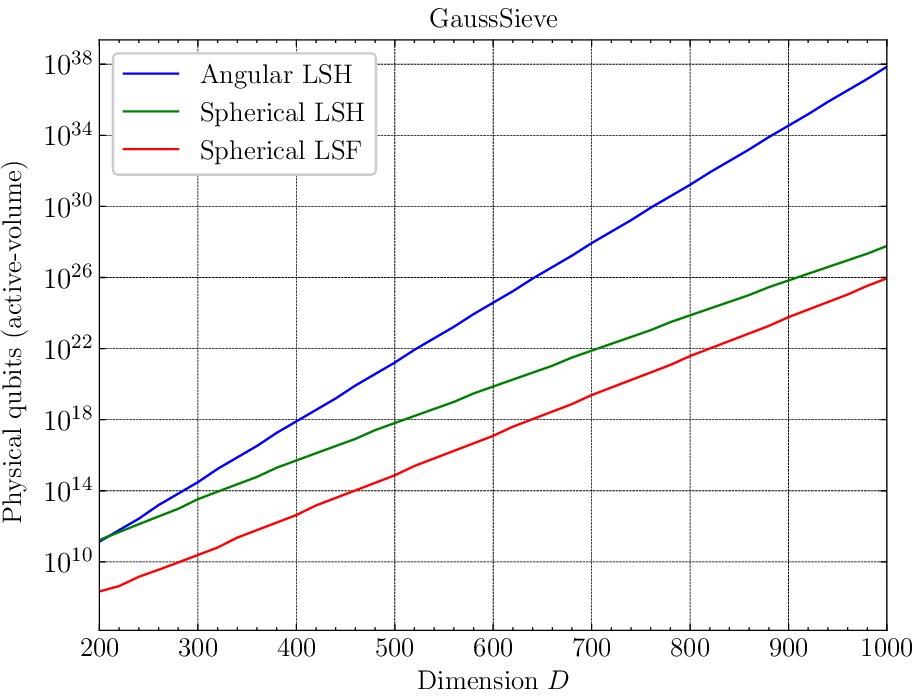}
        \caption{Active-volume physical qubits of $\mathtt{GaussSieve}$}
        \label{fig:heuristic_assumptions_intro_a}
    \end{subfigure}
    \begin{subfigure}[b]{0.49\textwidth}
        \includegraphics[width=\textwidth]{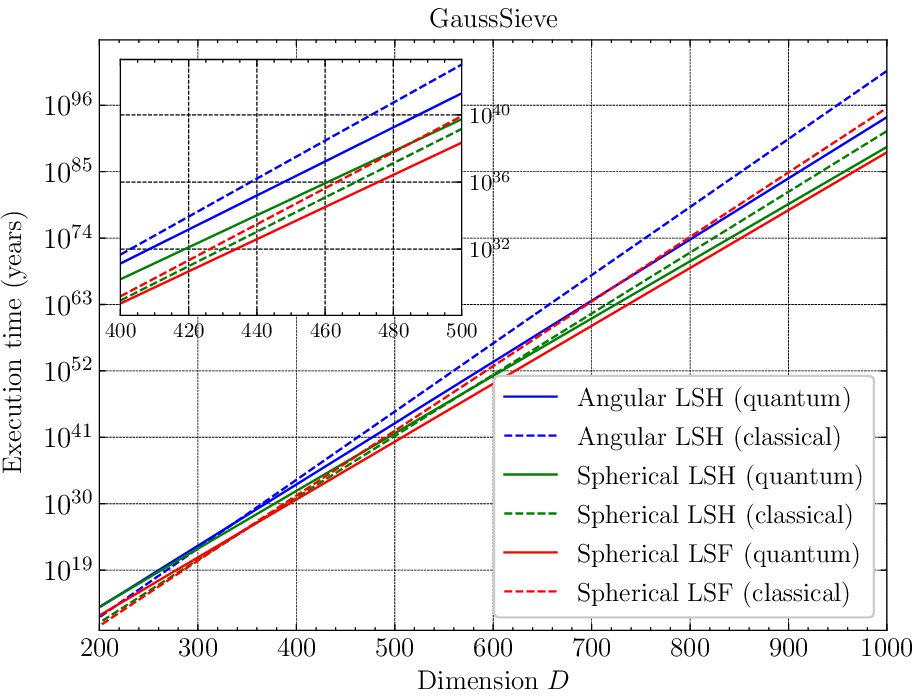}
        \caption{Execution time of $\mathtt{GaussSieve}$}
        \label{fig:heuristic_assumptions_intro_b}
    \end{subfigure}
    \caption{Number of physical qubits and execution time of all Grover's searches in $\mathtt{GaussSieve}$ with LSH/LSF as a function of the lattice dimension $D$. We assume an underlying active-volume physical architecture. The execution time is the sum of the time spent searching for pairs of reducing vectors (either quantumly or classically) and the classical time spent hashing.}
    \label{fig:heuristic_assumptions_intro}
\end{figure}

Our main results are condensed in \cref{fig:heuristic_assumptions_intro}, where we show the amount of physical qubits and time required by $\mathtt{GaussSieve}$ with LSH/LSF as a function of the lattice dimension $D$. We consider an active-volume architecture and omit results for the $\mathtt{NVSieve}$ for now as $\mathtt{GaussSieve}$ has a better performance. The number of physical qubits from \cref{fig:heuristic_assumptions_intro_a} is the number of physical qubits required to run the largest Grover's search in $\mathtt{GaussSieve}$, since physical qubits can be reused in different searches. On the other hand, \cref{fig:heuristic_assumptions_intro} shows the time required to execute both a classical and a quantum version of $\mathtt{GaussSieve}$, i.e., where the searching is performed either classically or via Grover's search. More precisely, the execution time of the classical $\mathtt{GaussSieve}$ is the sum of all searching and hashing operations, while the execution time of the quantum $\mathtt{GaussSieve}$ is the time required to \emph{sequentially} execute all Grover's searches plus the time required to classically hash all vectors.

At dimensions of cryptographic interest, e.g., at dimension $D=400$ which corresponds roughly to the first NIST Security level~\cite[Table~1]{Dilithium2021Algorithm}, $\mathtt{GaussSieve}$ with spherical LSF requires $\approx 10^{13}$ physical qubits to solve SVP in $\approx 10^{31}$ years. As shown in \cref{sec:qramcost}, most of the physical qubits are coming from the use of a bucket-brigade-style $\mathsf{QRAM}$, since it requires a number of logical qubits roughly equal to the size of the accessed database. The total time comes mostly from the quantum arithmetic circuits and the fact that Grover's search requires, at the end of the day, a deep circuit. A classical version of $\mathtt{GaussSieve}$ with spherical LSF also requires roughly the same amount of time to solve SVP.

\cref{fig:heuristic_assumptions_intro} paints a somewhat pessimistic scenario for quantum sieving algorithms, with the number of physical qubits surpassing modern transistor counts by a few orders of magnitude and a total execution time comparable to their classical counterpart and greater than the age of the universe. While Albrecht et al.~\cite{albrecht2020estimating} compared the number of (arithmetic) classical and quantum operations, which is not ideal as the cost of various elementary operations can vary significantly, here we resolve both classical and quantum operations into actual execution times. The end result as seen in \cref{fig:heuristic_assumptions_intro_b} is a small quantum advantage for dimensions beyond $400$: at $D=1000$, Grover's search provides a speedup by roughly five orders of magnitude. 

We stress that the above numbers ignore all the memory fetch operations, which although should worsen both classical and quantum runtimes, will most likely impact the quantum one more severely since, as we argue in \cref{sec:sieving_algorithms}, the use of hashing techniques yields lists of candidate vectors that require several RAM calls to be accessed via $\mathsf{QRAM}$ and thus be searched with Grover algorithm. Moreover, classical searching operations can be more easily parallelised than Grover's search~\cite{zalka99grover}, in the sense that $F$ parallel Grover algorithms running on $F$ separate search spaces have a total width that is larger by a factor of $F$ compared to a single Grover algorithm on the whole search space while only reducing the depth by a factor of $\sqrt{F}$. 

Although we have not considered the asymptotically better quantum-random-walk-based sieving algorithms from Chailloux and Loyer~\cite{chailloux21} and Bonnetain et al.~\cite{bonnetain2023finding}, we believe that the same problems highlighted above might plague these algorithms. In reality, the situation is likely to be worse as quantum random walks should have overall higher constant factors. The magnitude of such factors, though, has not be explored in the literature yet, which hinders a proper resource estimation of their sieving algorithms akin to what has been done here.

It is expected that several assumptions, numbers, and protocols used in this work will become dated in a few years and several new results on circuit design, quantum error correction, and $\mathsf{QRAM}$ will be discovered (and a few new improvements have indeed been posted online by the time this manuscript was been finalised~\cite{mukhopadhyay2024quantum,gidney2024magic,wills2024constant}), but we believe that the overall message remains that Grover's search (and quadratic improvements for that matter) offers very little advantage over classical search in sieving algorithms at dimensions of cryptographic interest. Any considerable speedups will occur on dimensions far larger than the ones needed for most cryptographic purposes or require significant breakthroughs in theoretical protocols and hardware development.

The remainder of the paper is organised as follows. In \cref{sec:prelim} we review basic concepts from quantum computation and hashing techniques like LSH and LSF. In \cref{sec:error_correction} we review several key ideas from quantum error correction like surface codes, baseline and active-volume architectures, and magic state distillation protocols. \cref{sec:arithmetic} covers all quantum arithmetic circuits employed in our paper. \cref{sec:grover_search} reviews Grover's search algorithm, while \cref{sec:qram} reviews the bucket-brigade $\mathsf{QRAM}$. In \cref{sec:sieving_algorithms} we describe the $\mathtt{NVSieve}$ and $\mathtt{GaussSieve}$ with and without LSH/LSF and construct the Grover oracles for them. In \cref{sec:results} we perform our resource estimation analysis. This section is divided into a few parts: \cref{sec:case_study} describes how the resource estimation is performed for the example when $D=400$; \cref{sec:heuristics} describes our main results; \cref{sec:qramcost} analyses the cost of $\mathsf{QRAM}$; \cref{sec:nist} explores the impact of depth restrictions as proposed by NIST post-quantum cryptography
standardisation process~\cite{nist_cfp}. Finally, we conclude in \cref{sec:discussionandopen}. The source code and data can be found in~\cite{paper_data}.


\section{Preliminaries}
\label{sec:prelim}

Given $n\in\mathbb{N} := \{1,2,\dots\}$, define $[n] := \{1, \ldots, n\}$. Let $\mathsf{X} = \bigl(\begin{smallmatrix} 0&1 \\ 1&0 \end{smallmatrix} \bigr)$, $\mathsf{Y} = \bigl(\begin{smallmatrix} 0&-i \\ i&0 \end{smallmatrix} \bigr)$, and $\mathsf{Z} = \bigl(\begin{smallmatrix} 1&0 \\ 0&-1 \end{smallmatrix} \bigr)$ be the usual Pauli matrices and $\mathsf{I}_n$ the $n$-dimensional identity matrix. We shall refer to $\mathsf{I}_n$ simply as $\mathsf{I}$ when the dimension is clear from context. Let $\mathbf{1}[\text{clause}]\in\{0,1\}$ be the indicator function that equals $1$ if the clause is true and $0$ otherwise. Given vectors $\mathbf{v},\,\mathbf{w} \in \mathbb{R}^D$, let $\|\mathbf{v}\| := (\sum_{i=1}^D v_i^2)^{1/2}$ be the Euclidean norm of $\mathbf{v}$, $\theta(\mathbf{v},\mathbf{w})$ the angle between $\mathbf{v}$ and $\mathbf{w}$, and $\inp{\mathbf{v}}{\mathbf{w}} := \sum_{i=1}^D v_iw_i$ their inner product. Let $\Gamma(z)$ be the gamma function. We denote by $\mathcal{S}^{D-1} := \{\mathbf{v}\in\mathbb{R}^D:\|\mathbf{v}\| = 1\}$ the $D$-dimensional unit hypersphere and by $\mathcal{H}_{\mathbf{v},\alpha} := \{\mathbf{x}\in\mathbb{R}^D: \langle \mathbf{v},\mathbf{x}\rangle \geq \alpha\}$ the half-spaces, where $\mathbf{v}\in\mathcal{S}^{D-1}$. Let $\mathcal{C}_D(\alpha)$ be the measure of the spherical cap $\mathcal{C}_{\mathbf{v},\alpha} := \mathcal{S}^{D-1}\cap \mathcal{H}_{\mathbf{v},\alpha}$ and $\mathcal{W}_D(\alpha,\beta,\theta)$ be the measure of the spherical wedge $\mathcal{W}_{\mathbf{v},\alpha,\mathbf{w},\beta} := \mathcal{S}^{D-1}\cap \mathcal{H}_{\mathbf{v},\alpha}\cap \mathcal{H}_{\mathbf{w},\beta}$, where $\mathbf{v},\mathbf{w}\in\mathcal{S}^{D-1}$ with $\langle \mathbf{v},\mathbf{w}\rangle = \cos\theta$. We shall need the next known facts.
\begin{fact}[{\cite[Lemma~10.7]{Laarhoven2016search}}]
    The probability density function $\Theta_{[\theta_1,\theta_2]}(\theta)$ of angles between vectors $\mathbf{v},\mathbf{w}\in\mathcal{S}^{D-1}$ drawn at random from the unit sphere and such that $\theta_1 \leq \theta(\mathbf{v},\mathbf{w}) \leq \theta_2$ is
    \begin{align*}
        \Theta_{[\theta_1,\theta_2]}(\theta) = \frac{\sin^{D-2}\theta}{\int_{\theta_1}^{\theta_2}\sin^{D-2}\phi~{\rm d}\phi}.
    \end{align*}
\end{fact}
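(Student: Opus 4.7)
The plan is to reduce the problem to a one-dimensional marginal on the unit sphere by exploiting rotational invariance, then read off the density from the standard volume element in hyperspherical coordinates.

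First, I would invoke the rotational symmetry of the uniform distribution on $\mathcal{S}^{D-1}$: since we may replace the pair $(\mathbf{v},\mathbf{w})$ by $(R\mathbf{v},R\mathbf{w})$ for any orthogonal $R$ without changing the joint law, I may condition on $\mathbf{v}$ being fixed to the ``north pole'' $\mathbf{e}_1$. The angle $\theta(\mathbf{v},\mathbf{w})$ then equals the polar angle of $\mathbf{w}$ measured from $\mathbf{e}_1$, and $\mathbf{w}$ is distributed uniformly on $\mathcal{S}^{D-1}$. So the task collapses to computing the marginal density of the polar angle of a uniform point on $\mathcal{S}^{D-1}$, restricted to $\theta\in[\theta_1,\theta_2]$.

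Next, I would parameterise $\mathcal{S}^{D-1}$ by hyperspherical coordinates $(\theta,\phi_2,\dots,\phi_{D-1})$, where $\theta\in[0,\pi]$ is the polar angle from $\mathbf{e}_1$ and $(\phi_2,\dots,\phi_{D-1})$ parameterise the $(D-2)$-dimensional unit sphere transverse to $\mathbf{e}_1$. The standard surface measure then factorises as $\mathrm{d}\sigma = \sin^{D-2}\theta\,\mathrm{d}\theta\,\mathrm{d}\sigma_{D-2}(\phi_2,\dots,\phi_{D-1})$, where $\mathrm{d}\sigma_{D-2}$ is the surface measure on $\mathcal{S}^{D-2}$. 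Integrating out the transverse angles leaves a marginal density on $\theta$ proportional to $\sin^{D-2}\theta$. I would verify the $\sin^{D-2}\theta$ factor either by induction on $D$ (peeling off one spherical coordinate at a time gives one more power of $\sin$) or by the coarea formula applied to the map $\mathbf{w}\mapsto \theta=\arccos\langle\mathbf{e}_1,\mathbf{w}\rangle$, noting that the level set $\{\theta=\text{const}\}$ is a $(D-2)$-sphere of radius $\sin\theta$.

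Finally, I would condition on the event $\{\theta_1\leq\theta\leq\theta_2\}$: by the definition of conditional density, the conditional density is obtained by restricting the unconditional density $\propto\sin^{D-2}\theta$ to $[\theta_1,\theta_2]$ and renormalising. Since the normalising constants from the transverse integration cancel between numerator and denominator, I obtain
\begin{align*}
\Theta_{[\theta_1,\theta_2]}(\theta)=\frac{\sin^{D-2}\theta}{\int_{\theta_1}^{\theta_2}\sin^{D-2}\phi\,\mathrm{d}\phi},
\end{align*}
as claimed. The main (modest) obstacle is the derivation of the $\sin^{D-2}\theta$ Jacobian; everything else is symmetry plus elementary conditioning.
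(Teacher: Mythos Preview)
Your proposal is correct and follows the standard derivation: rotational invariance lets you fix $\mathbf{v}$, the hyperspherical volume element gives the $\sin^{D-2}\theta$ factor, and conditioning on $[\theta_1,\theta_2]$ yields the normalisation. The paper itself does not prove this statement; it is quoted as a Fact with a citation to \cite[Lemma~10.7]{Laarhoven2016search}, so there is no in-paper argument to compare against, but your argument is exactly the one that reference uses.
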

\begin{fact}[\cite{li2010concise}]
    Let $\mathbf{v}\in\mathcal{S}^{D-1}$ and $\alpha\in(0,1)$. The measure  $\mathcal{C}_D(\alpha)$ of the spherical cap $\mathcal{C}_{\mathbf{v},\alpha}$ is
    \begin{align*}
        \mathcal{C}_D(\alpha) := \frac{\mu(\mathcal{C}_{\mathbf{v},\alpha})}{\mu(\mathcal{S}^{D-1})} = \frac{1}{\sqrt{\pi}}\frac{\Gamma(\frac{D}{2})}{\Gamma(\frac{D-1}{2})}\int_0^{\arccos\alpha} \sin^{D-2}\phi ~{\rm d}\phi. 
    \end{align*}
\end{fact}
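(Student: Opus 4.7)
The plan is to exploit the rotational symmetry of $\mu$ on $\mathcal{S}^{D-1}$ to reduce the computation to a one-dimensional integral, and then identify the normalizing denominator via the beta-function integral of $\sin^{D-2}\phi$. Because $\mu$ is invariant under the orthogonal group $O(D)$, I may assume without loss of generality that $\mathbf{v}=\mathbf{e}_D=(0,\dots,0,1)$, so that $\mathcal{C}_{\mathbf{v},\alpha}=\{\mathbf{x}\in\mathcal{S}^{D-1}: x_D\geq \alpha\}$. Both sides of the claimed identity depend on $\mathbf{v}$ only through $\alpha$, so this reduction is free.

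Next, I would slice the sphere by latitude. For $\phi\in[0,\pi]$, the level set $\{\mathbf{x}\in\mathcal{S}^{D-1}: x_D=\cos\phi\}$ is a copy of $\mathcal{S}^{D-2}$ scaled by $\sin\phi$, and the standard polar parametrization $(\phi,\mathbf{y})\mapsto(\sin\phi\cdot\mathbf{y},\cos\phi)$ with $\mathbf{y}\in\mathcal{S}^{D-2}$ yields the product decomposition of the surface measure ${\rm d}\mu=\sin^{D-2}\phi\,{\rm d}\phi\,{\rm d}\sigma_{D-2}$, where $\sigma_{D-2}$ is the surface measure on $\mathcal{S}^{D-2}$. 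Integrating separately over the cap ($\phi\in[0,\arccos\alpha]$) and over the whole sphere ($\phi\in[0,\pi]$) and taking the ratio cancels $\mu(\mathcal{S}^{D-2})$ and leaves
\begin{align*}
\mathcal{C}_D(\alpha)=\frac{\int_0^{\arccos\alpha}\sin^{D-2}\phi\,{\rm d}\phi}{\int_0^{\pi}\sin^{D-2}\phi\,{\rm d}\phi}.
\end{align*}

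It then remains to evaluate the denominator. Using the symmetry of $\sin\phi$ about $\phi=\pi/2$ and the substitution $u=\sin^2\phi$, the denominator equals the beta integral $B\!\left(\tfrac{D-1}{2},\tfrac{1}{2}\right)=\sqrt{\pi}\,\Gamma(\tfrac{D-1}{2})/\Gamma(\tfrac{D}{2})$. Substituting this back gives the prefactor $\tfrac{1}{\sqrt{\pi}}\Gamma(\tfrac{D}{2})/\Gamma(\tfrac{D-1}{2})$ in the claim.

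There is no serious obstacle here; the only step requiring a little care is justifying the product-measure factorization on $\mathcal{S}^{D-1}$, which can be done either by invoking spherical coordinates directly or by computing the Jacobian of the parametrization $(\phi,\mathbf{y})\mapsto(\sin\phi\cdot\mathbf{y},\cos\phi)$ and observing that it decouples into a $\sin^{D-2}\phi$ factor (from the shrinking of the $(D{-}2)$-dimensional tangential directions) and a unit factor in the $\phi$-direction. Everything else is a routine change of variables and an application of the Gamma function identity $\Gamma(\tfrac{1}{2})=\sqrt{\pi}$.
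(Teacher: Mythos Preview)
Your argument is correct and is the standard derivation of the normalized spherical cap area. Note, however, that the paper does not actually prove this statement: it is recorded as a cited fact from \cite{li2010concise} and used without proof, so there is no ``paper's own proof'' to compare against. Your write-up would serve perfectly well as a self-contained justification if one were desired.
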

\begin{fact}[{\cite[Case~8]{lee2014concise}}]
    Let $\mathbf{v},\mathbf{w}\in\mathcal{S}^{D-1}$ with $\langle \mathbf{v},\mathbf{w}\rangle = \cos\theta$. Let $\alpha,\beta\in(0,1)$ such that $\theta < \arccos\alpha + \arccos\beta$ and $(\alpha - \beta\cos\theta)(\beta - \alpha\cos\theta) > 0$. Define $\theta^\ast \in (0,\frac{\pi}{2})$ by $\tan\theta^\ast = \alpha/(\beta\sin\theta) - 1/\tan\theta$. The measure $\mathcal{W}_D(\alpha,\beta,\theta)$ of the spherical wedge $\mathcal{W}_{\mathbf{v},\alpha,\mathbf{w},\beta}$ is
    \begin{align*}
        \mathcal{W}_D(\alpha,\beta,\theta) := \frac{\mu(\mathcal{W}_{\mathbf{v},\alpha,\mathbf{w},\beta})}{\mu(\mathcal{S}^{D-1})} = J_D(\theta^\ast, \arccos\beta) + J_D(\theta - \theta^\ast, \arccos\alpha),
    \end{align*}
    where
    \begin{align*}
        J_D(\theta_1, \theta_2) := \frac{1}{\sqrt{\pi}}\frac{\Gamma(\frac{D}{2})}{\Gamma(\frac{D-1}{2})} \int_{\theta_1}^{\theta_2} \mathcal{C}_{D-1}\left(\arccos\left(\frac{\tan{\theta_1}}{\tan\phi}\right)\right) \sin^{D-2}\phi ~{\rm d}\phi.
    \end{align*}
\end{fact}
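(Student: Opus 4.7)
The plan is to identify a hyperplane $H^*$ through the origin that contains the entire ridge $\{\mathbf{x}\in\mathcal{S}^{D-1}: \langle \mathbf{v}, \mathbf{x}\rangle = \alpha, \langle \mathbf{w}, \mathbf{x}\rangle = \beta\}$ of the wedge, split $\mathcal{W}_{\mathbf{v},\alpha,\mathbf{w},\beta}$ into its two halves by $H^*$, and recognise each half as the region captured by one of the two $J_D$ integrals appearing in the statement.

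First, I would set coordinates with $\mathbf{v} = \mathbf{e}_1$ and $\mathbf{w} = \cos\theta\,\mathbf{e}_1 + \sin\theta\,\mathbf{e}_2$, so that the two cap boundaries meet inside $P := \operatorname{span}(\mathbf{v},\mathbf{w})$ at the point $\mathbf{p}^* = \alpha\,\mathbf{e}_1 + \tfrac{\beta - \alpha\cos\theta}{\sin\theta}\,\mathbf{e}_2$. Setting $\mathbf{m} = \mathbf{p}^*/\|\mathbf{p}^*\|$, an elementary trigonometric computation using the definition of $\theta^*$ shows that $\mathbf{m}$ makes angle $\eta_1 := \theta - \theta^*$ with $\mathbf{v}$, angle $\eta_2 := \theta^*$ with $\mathbf{w}$, and satisfies the identity
\begin{align*}
\alpha\cos\eta_2 \;=\; \beta\cos\eta_1 \;=\; \frac{\alpha\beta\sin\theta}{\sqrt{\alpha^2+\beta^2-2\alpha\beta\cos\theta}}.
\end{align*}
Taking $H^* := \mathbb{R}\mathbf{m} \oplus P^{\perp}$, which then contains the whole ridge, and letting $\mathbf{n}^* = -\sin\eta_1\,\mathbf{v} + \cos\eta_1\,\mathbf{e}_2$ be its in-plane unit normal, the halfspaces $S_v := \{\mathbf{x}:\langle \mathbf{n}^*,\mathbf{x}\rangle < 0\}$ and $S_w := \{\mathbf{x}:\langle \mathbf{n}^*,\mathbf{x}\rangle > 0\}$ separate $\mathbf{v}$ from $\mathbf{w}$ and partition the wedge (up to a measure-zero set) into $\mathcal{W}_{\mathbf{v},\alpha,\mathbf{w},\beta}\cap S_v$ and $\mathcal{W}_{\mathbf{v},\alpha,\mathbf{w},\beta}\cap S_w$.

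The key technical claim, which I expect to be the main obstacle, is that within $S_w$ the constraint $\langle \mathbf{w},\mathbf{x}\rangle \geq \beta$ is automatic given $\langle \mathbf{v},\mathbf{x}\rangle \geq \alpha$, with the symmetric statement holding inside $S_v$. To verify this, parametrise $\mathbf{x} = \cos\phi\,\mathbf{v} + \sin\phi\,\mathbf{u}$ with $\mathbf{u} \in \mathcal{S}^{D-2} \subset \mathbf{v}^{\perp}$ and let $u_2 := \langle \mathbf{e}_2, \mathbf{u}\rangle$. The condition $\mathbf{x} \in S_w$ then unpacks to $u_2 > \tan\eta_1/\tan\phi$, which in particular forces $\phi > \eta_1$; substituting this lower bound back into $\langle \mathbf{w},\mathbf{x}\rangle = \cos\theta\cos\phi + \sin\theta\sin\phi\,u_2$ and simplifying with angle-sum identities yields $\langle \mathbf{w},\mathbf{x}\rangle > \cos\phi\cdot\cos\eta_2/\cos\eta_1$, which is at least $\beta$ once one combines $\cos\phi \geq \alpha$ with the identity above. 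This redundancy is what makes the chosen hyperplane the correct one: it is the unique hyperplane through the origin for which one cap constraint trivialises on each side.

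Finally, the measure of $\mathcal{W}_{\mathbf{v},\alpha,\mathbf{w},\beta}\cap S_w$ computes as an iterated integral over $\phi \in [\eta_1, \arccos\alpha]$ and $\mathbf{u} \in \mathcal{S}^{D-2}$ satisfying $u_2 > \tan\eta_1/\tan\phi$; the inner measure is the spherical cap $\mathcal{C}_{D-1}(\tan\eta_1/\tan\phi)$ on $\mathcal{S}^{D-2}$, while the outer surface element $\sin^{D-2}\phi\,d\phi$ together with the ratio $\mu(\mathcal{S}^{D-2})/\mu(\mathcal{S}^{D-1}) = \Gamma(D/2)/(\sqrt{\pi}\,\Gamma((D-1)/2))$ reproduces exactly $J_D(\theta - \theta^*, \arccos\alpha)$. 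Exchanging the roles of $\mathbf{v}$ and $\mathbf{w}$ and repeating the computation gives $J_D(\theta^*, \arccos\beta)$ for $\mathcal{W}_{\mathbf{v},\alpha,\mathbf{w},\beta}\cap S_v$, and since $H^*$ has measure zero on $\mathcal{S}^{D-1}$, adding the two contributions yields the claimed identity.
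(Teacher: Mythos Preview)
The paper does not supply its own proof of this statement; it is quoted verbatim as a Fact from \cite{lee2014concise} and used as a black box thereafter. So there is nothing in the paper to compare against.

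That said, your argument is correct and is precisely the standard geometric derivation (indeed the one in Lee's original note). A couple of minor remarks: your assertion that ``$\mathbf{x}\in S_w$ unpacks to $u_2>\tan\eta_1/\tan\phi$, which in particular forces $\phi>\eta_1$'' silently uses $\phi<\pi/2$, which you only get after invoking $\cos\phi\ge\alpha$; reordering those two lines would make the logic airtight. Also, the hypothesis $(\alpha-\beta\cos\theta)(\beta-\alpha\cos\theta)>0$ forces both factors to be positive (the both-negative case is impossible for $\alpha,\beta>0$, $\cos\theta<1$), and together with $\theta<\arccos\alpha+\arccos\beta$ this guarantees $\|\mathbf{p}^*\|<1$, i.e.\ $\eta_1<\arccos\alpha$ and $\eta_2<\arccos\beta$, so the $J_D$ integrals have nonempty domains. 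It would strengthen the write-up to make these checks explicit, but none of them is a gap in the mathematics.
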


\subsection{Quantum computing}

We assume the reader is somewhat familiar with quantum computing. The quantum state of a quantum system is described by a vector from a Hilbert space $\mathscr{H}$, i.e., a complex vector space with inner product structure. A qubit, the quantum equivalent of a bit, is a quantum system described by a vector in $\mathscr{H} \cong\mathbb{C}^2$, while an $n$-qubit system is described by a vector $|\psi\rangle$ in $\mathscr{H} \cong\mathbb{C}^{2^n}$. Equivalently, an $n$-qubit quantum system can also be described by a density matrix $\rho\in\mathbb{C}^{2^n\times 2^n}$, i.e., a semi-definite positive matrix with unit trace. The evolution of a quantum state $|\psi\rangle\in\mathbb{C}^{2^n}$ is described by a unitary operator $\mathsf{U}\in\mathbb{C}^{2^n\times 2^n}$, $\mathsf{U}\mathsf{U}^\dagger = \mathsf{I}$ where $\mathsf{U}^\dagger$ is the Hermitian conjugate of $\mathsf{U}$. A unitary operator is also referred to as a quantum gate. In order to extract classical information from a quantum system, a quantum measurement is usually performed. A quantum measurement is expressed as a positive operator-valued measure (POVM), i.e., a set $\{\mathsf{E}_m\}_m$ of positive operators $\mathsf{E}_m \succ \mathsf{0}$ that sum to identity, $\sum_m \mathsf{E}_m = \mathsf{I}$. The probability of measuring $\mathsf{E}_m$ on $|\psi\rangle$ is $p_m = \langle\psi|\mathsf{E}_m|\psi\rangle$. A quantum circuit is a sequence of quantum gates acting on a set of qubits. At the end of the circuit, a measurement is performed and a classical outcome is observed. We refer the reader to~\cite{nielsen2010quantum,de2019quantum} for more information.

There are a few sets of universal gates that can serve as building blocks for any quantum circuit. One of the most common is the Clifford+T gate set comprising the one and two-qubit gates
\begin{align*}
    \mathsf{H} = \frac{1}{\sqrt{2}}\begin{pmatrix}
        1 & 1 \\
        1 & -1
    \end{pmatrix}, ~
    \mathsf{S} = \begin{pmatrix}
        1 & 0 \\
        0 & i
    \end{pmatrix},~
    \mathsf{T} = \begin{pmatrix}
        1 & 0 \\
        0 & e^{i\pi/4} 
    \end{pmatrix},  ~
    \mathsf{CNOT} = \begin{pmatrix}
         1 & 0 & 0 & 0 \\ 
         0 & 1 & 0 & 0 \\ 
         0 & 0 & 0 & 1 \\
         0 & 0 & 1 & 0 
    \end{pmatrix}.
\end{align*}
Here, $\mathsf{H}, \mathsf{CNOT}, \mathsf{S}$ are Clifford gates, while the $\mathsf{T}$ gate is a non-Clifford gate (it does not normalise the Pauli group). The Clifford+T gate set $\{\mathsf{H}, \mathsf{S}, \mathsf{T}, \mathsf{CNOT}\}$ is universal~\cite{divincenzo1995two,boykin1999universal}, meaning that any quantum circuit can be written in terms of its elements as accurately as required. Another universal gate set is $\{\mathsf{H},\mathsf{S},\mathsf{CNOT},\mathsf{Toffoli}\}$~\cite{Shor1996fault}, where
\begin{align*}
    \mathsf{Toffoli} = \begin{pmatrix}
         1 & 0 & 0 & 0 & 0 & 0 & 0 & 0 \\ 
         0 & 1 & 0 & 0 & 0 & 0 & 0 & 0 \\ 
         0 & 0 & 1 & 0 & 0 & 0 & 0 & 0 \\ 
         0 & 0 & 0 & 1 & 0 & 0 & 0 & 0 \\ 
         0 & 0 & 0 & 0 & 1 & 0 & 0 & 0 \\ 
         0 & 0 & 0 & 0 & 0 & 1 & 0 & 0 \\ 
         0 & 0 & 0 & 0 & 0 & 0 & 0 & 1 \\ 
         0 & 0 & 0 & 0 & 0 & 0 & 1 & 0 \\ 
    \end{pmatrix}.
\end{align*}
Here, $\mathsf{Toffoli}$ is a non-Clifford gate. Define also the $\mathsf{CZ}$ and $\mathsf{CCZ}$ gates as  $\mathsf{CZ} = (\mathsf{I}_2\otimes \mathsf{H})\mathsf{CNOT}(\mathsf{I}_2\otimes \mathsf{H})$ and $\mathsf{CCZ} = (\mathsf{I}_4\otimes \mathsf{H})\mathsf{Toffoli}(\mathsf{I}_4\otimes \mathsf{H})$, respectively. In this work, we shall focus on the $\{\mathsf{H},\mathsf{S},\mathsf{CNOT},\mathsf{Toffoli}\}$ universal gate set, as all of our circuits can be naturally decomposed using these gates. We shall also consider the $\mathsf{CCZ}$ gate to have the same cost as the $\mathsf{Toffoli}$ gate and will count them as a single resource.

By \emph{ancillary qubits} (or simply \emph{ancillae}) we mean qubits that can be re-used across computation, so that a gate $\mathsf{U}_2$ can use ancillae from some previous gate $\mathsf{U}_1$. This means that if two gates $\mathsf{U}_1$ and $\mathsf{U}_2$ use $c_1$ and $c_2$ ancillae, respectively, then the joint gate $\mathsf{U}_1\mathsf{U}_2$ requires $\max(c_1, c_2)$ ancillae. By \emph{dirty ancillae} we mean auxiliary qubits employed in a quantum gate that are left entangled with other qubits and therefore cannot be reused in later computations afresh. This means that if two gates $\mathsf{U}_1$ and $\mathsf{U}_2$ use $c_1$ and $c_2$ dirty ancillae, respectively, then the joint gate $\mathsf{U}_1\mathsf{U}_2$ requires $c_1 + c_2$ dirty ancillae. We will routinely keep dirty ancillae after some computation to facilitate its uncomputation at a later time.  

By $\mathsf{C}^{(k)}\text{-}\mathsf{X}$ we mean an $\mathsf{X}$ gate controlled on $k$ qubits, i.e., an $\mathsf{X}$ gate is applied conditioned on all $k$ qubits being on the $|1\rangle$ state. This means that $\mathsf{C}^{(1)}\text{-}\mathsf{X} = \mathsf{CNOT}$ and $\mathsf{C}^{(2)}\text{-}\mathsf{X} = \mathsf{Toffoli}$. It is possible to decompose $\mathsf{C}^{(k)}\text{-}\mathsf{X}$ into $\mathsf{Toffoli}$ gates  as summarised in the next well-known result.

\begin{fact}[Multi-controlled $\mathsf{Toffoli}$]
    \label{fact:mctoff}
    The multi-controlled $\mathsf{Toffoli}$ gate $\mathsf{C}^{(k)}\text{-}\mathsf{X}$ with $k>1$ controls can be implemented using $k-1$ $\mathsf{Toffoli}$ gates and $k-2$ ancillae.
\end{fact}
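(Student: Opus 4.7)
The plan is to give an explicit cascade construction and verify the count by a short induction on $k$. Concretely, I introduce $k-2$ fresh ancillae $a_1, \ldots, a_{k-2}$ and use them to sequentially store partial conjunctions of the controls $c_1, \ldots, c_{k-1}$, and then apply one final $\mathsf{Toffoli}$ to flip the target qubit $t$ conditioned on the last ancilla together with $c_k$. This ``linear ladder'' decomposition is structurally the standard one for generalised Toffolis, so no new idea is needed; the point is only to verify that the counts line up.

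First I handle the base case $k = 2$, where the claim reduces to $k-1 = 1$ Toffoli and $k-2 = 0$ ancillae, which is just the definition of $\mathsf{Toffoli} = \mathsf{C}^{(2)}\text{-}\mathsf{X}$. For $k \geq 3$, the construction is: apply $\mathsf{Toffoli}(c_1, c_2; a_1)$, which sets $a_1 \leftarrow c_1 \wedge c_2$; then, for $i = 2, \ldots, k-2$, apply $\mathsf{Toffoli}(a_{i-1}, c_{i+1}; a_i)$, which by induction on $i$ places $c_1 \wedge \cdots \wedge c_{i+1}$ into $a_i$; finally apply $\mathsf{Toffoli}(a_{k-2}, c_k; t)$. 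Correctness — that $t$ is flipped precisely when all $k$ controls are $|1\rangle$ — reduces to the observation that the last Toffoli sees $c_1 \wedge \cdots \wedge c_{k-1}$ on one control line and $c_k$ on the other. Counting, $k-2$ Toffolis are used to build the cascade and $1$ is used to flip the target, giving $k-1$ Toffolis total, with $k-2$ ancillae involved, matching the stated bound.

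I do not expect any real obstacle, but there is one bookkeeping subtlety worth flagging. After the cascade, the ancillae are left in a dirty state encoding the intermediate partial ANDs, so they cannot immediately be reused as clean ancillae. This is consistent with the paper's convention that dirty ancillae accumulate across gates, and the statement is phrased to count only the Toffolis of the forward computation. If a clean version were required, one would append a mirrored ladder of further Toffolis to uncompute the ancillae (roughly doubling the count), or use measurement-based uncomputation of each AND with only Clifford corrections; neither affects the tally claimed in the fact.
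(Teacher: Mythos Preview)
Your construction is correct and is precisely the standard linear-ladder decomposition underlying this fact. The paper does not actually give a proof: it introduces the statement as ``the next well-known result'' and states it without argument, so your proposal supplies exactly the expected verification, including the correct observation about dirty ancillae matching the paper's conventions.
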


\subsection{Locality-sensitive hashing and locality-sensitive filtering}
\label{sec:lsh}

In this work, we consider lattice sieving algorithms. These are algorithms that start with (an exponentially) large list of lattice vectors consisting of long vectors and use it to find shorter lattice vectors. If the length of the vectors in the initial list is roughly the same, then this can be done by finding nearby lattice vectors in the list, since their difference would be a shorter lattice vector. More precisely, we would like to 
\begin{align*}
    \text{find vectors}~ \mathbf{v},\mathbf{w} ~\text{from a list}~ L ~\text{such that}~ \|\mathbf{v}\pm \mathbf{w}\| \leq \max\{\|\mathbf{v}\|, \|\mathbf{w}\|\},
\end{align*}
which is equivalent to
\begin{align*}
    \text{find vectors}~ \mathbf{v},\mathbf{w} ~\text{from a list}~ L ~\text{such that}~ \theta(\mathbf{v}, \pm \mathbf{w}) \leq \pi/3
\end{align*}
if $\|\mathbf{v}\|\approx \|\mathbf{w}\|$. The above problem can naturally be framed as a \emph{nearest neighbour search}. In the nearest neighbour search, a list of $D$-dimensional vectors $L = \{\mathbf{w}_1,\dots,\mathbf{w}_N\}\subset\mathbb{R}^D$ is given and the task is to preprocess $L$ in such a way that given a new vector $\mathbf{v}\notin L$, it is possible to efficiently find an element $\mathbf{w}\in L$ close(st) to $\mathbf{v}$. \emph{Locality-sensitive hashing} (LSH) is a well-known technique to speed up nearest neighbour search and it makes use of locality-sensitive hash functions~\cite{Indyk1998approximate}. A locality-sensitive hash function $h(\cdot)$ projects a $D$-dimensional vector into a low-dimension sketch and has the property that nearby vectors have a higher probability of collision than far away vectors. This sketch can then be used to bucket vectors in $L$ such that the vectors in the same bucket are close and hence speed up the search. A family of hash functions $\mathcal{H}=\{ h : \mathbb{R}^D \to U \subset \mathbb{N}\}$ is characterised by the collision probability
\begin{align*}
    p(\theta) := \Pr_{h\sim \mathcal{H}}[h(\mathbf{v}) = h(\mathbf{w})~|~\mathbf{v},\mathbf{w}\in\mathcal{S}^{D-1}, \langle\mathbf{v}, \mathbf{w}\rangle = \cos\theta],
\end{align*}
where $h\sim \mathcal{H}$ means a hash function $h$ uniformly picked over $\mathcal{H}$.

Another well-known technique is \emph{locality-sensitive filtering} (LSF)~\cite{becker2016new}, which employs a filter that maps a vector to a binary value: a vector either passes a filter or not. A filter that a vector $\mathbf{v}$ passes through is called a \emph{relevant filter} for $\mathbf{v}$. Applied to a list $L$, a filter $f$ maps $L$ to an output filtered list $L_f \subset L$ of points that survive the filter. The idea is to choose a filter that yields an output list $L_f$ of only nearby vectors. A family of filter functions $\mathcal{F}=\{f : \mathbb{R}^D \to \{0,1\}\}$ is characterised by the collision probability
\begin{align*}
    p(\theta) := \Pr_{f\sim \mathcal{F}}[\mathbf{v},\mathbf{v}\in L_f~|~\mathbf{v},\mathbf{w}\in\mathcal{S}^{D-1}, \langle\mathbf{v}, \mathbf{w}\rangle = \cos\theta],
\end{align*}
where $f\sim \mathcal{F}$ means a filter function $f$ uniformly picked over $\mathcal{F}$. We note that while $p(0) = 1$ for hash families, the same is not true for most filter families, since in general the collision probability of $\mathbf{v}$ with itself is $p(0) < 1$.

A hash/filter family with $p(\theta_1) \gg p(\theta_2)$ can efficiently distinguish nearby vectors at angle $\theta_1$ from distant vectors at angle $\theta_2$ by looking at their hash/filter values. The existence of hash/filter families with $p(\theta_1) \approx 1$ and $p(\theta_2) \approx 0$ is, however, not straightforward. A common technique is to first construct a hash/filter family with $p(\theta_1) \approx p(\theta_2)$ and use a series of AND- and OR-compositions to amplify the gap between $p(\theta_1)$ and $p(\theta_2)$ and obtain a new hash/filter family with $p'(\theta_1) > p(\theta_1)$ and $p'(\theta_2) < p(\theta_2)$.

\paragraph*{AND-composition.} Given a hash family $\mathcal{H}$ with collision probability $p(\theta)$, it is possible to construct a hash family $\mathcal{H}' = \mathcal{H}^k$ with collision probability $p(\theta)^k$ by taking $k$ different and pairwise independent hash functions $h_1,\dots,h_k\in\mathcal{H}$ and defining $h\in\mathcal{H}'$ such that $h(\mathbf{v}) = (h_1(\mathbf{v}),\dots,h_k(\mathbf{v}))$. Clearly $h(\mathbf{v}) = h(\mathbf{w})$ if and only if $h_i(\mathbf{v}) = h_i(\mathbf{w})$ for all $i\in[k]$, and thus $p'(\theta) = p(\theta)^k$. Similarly for a filter family $\mathcal{F}$.

\paragraph*{OR-composition.} Given a hash family $\mathcal{H}$ with collision probability $p(\theta)$, it is possible to construct a hash family $\mathcal{H}'$ with collision probability $1 - (1-p(\theta))^t$ by taking $t$ different and pairwise independent hash functions $h_1,\dots,h_t\in\mathcal{H}$ and defining $h\in\mathcal{H}'$ by the relation $h(\mathbf{v}) = h(\mathbf{w})$ if and only if $h_i(\mathbf{v}) = h_i(\mathbf{w})$ for some $i\in[t]$. Clearly $h(\mathbf{v}) \neq h(\mathbf{w})$ if and only if $h_i(\mathbf{v}) \neq h_i(\mathbf{w})$ for all $i\in[t]$, and thus $1 - p'(\theta) = (1-p(\theta))^t$. Similarly for a filter family $\mathcal{F}$. \\

Suitable hash/filter families together with AND and OR-compositions can be used to find nearest neighbors as first described by Indyk and Motwani~\cite{Indyk1998approximate}. The idea is to choose $t\cdot k$ hash functions $h_{i,j}\in\mathcal{H}$ from some hash family $\mathcal{H}$ and use the AND-composition to combine $k$ of them at a time to build $t$ new hash functions $h_1,\dots,h_t$, where $h_i(\cdot) = (h_{i,1}(\cdot),\dots,h_{i,k}(\cdot))$ for $i\in[t]$. Then, given the list $L$, we build $t$ different hash tables $\mathcal{T}_1,\dots,\mathcal{T}_t$ and for each hash table $\mathcal{T}_i$ we insert a vector $\mathbf{w}\in L$ from the list into the bucket labelled by $h_i(\mathbf{w})$. This means that all the vectors from $L$ are inserted into an appropriate bucket in each hash table. Finally, given a target vector $\mathbf{v}$, we compute its $t$ hash images $h_1(\mathbf{v}),\dots,h_t(\mathbf{v})$ and look only for candidate vectors in the bucket labelled $h_i(\mathbf{v})$ in hash table $\mathcal{T}_i$, for all $i\in[t]$ (OR-composition). In other words, we consider only the vectors that collide with $\mathbf{v}$ in at least one of the hash tables. A similar idea applies to filter families. A vector $\mathbf{v}$ is inserted into a filtered bucket $\mathcal{B}_i$ if and only if it survives the concatenated filter $f_i$ made out of filters $f_{i,1},\dots,f_{i,k}$, for $i\in[t]$.

\subsubsection{Angular LSH}

A famous hash family is the angular (or hyperplane) locality-sensitive hash method of Charikar~\cite{charikar2002similarity}, which, as we will see in \cref{sec:sieving_algorithms}, can be used to improve sieving algorithms~\cite{laarhoven15angular}. Charikar proposed the following hash family $\mathcal{H}_{\rm ang}$,
\begin{align*}
    \mathcal{H}_{\rm ang} = \{h_{\mathbf{a}}:\mathbb{R}^D\to\{0,1\} ~|~ \mathbf{a}\in\mathcal{S}^{D-1}\}, \qquad h_{\mathbf{a}}(\mathbf{v}) = \begin{cases}
        1 &\text{if}~\langle\mathbf{a}, \mathbf{v}\rangle \geq 0,\\
        0 &\text{if}~\langle\mathbf{a}, \mathbf{v}\rangle < 0.
    \end{cases}
\end{align*}
The vector $\mathbf{a}$ defining the hash function $h_{\mathbf{a}}$ also defines a hyperplane (for which $\mathbf{a}$ is a normal vector), and $h_{\mathbf{a}}$ maps the two regions separated by the hyperplane onto different bits. Charikar proved~\cite{charikar2002similarity} that the probability of collision is $p(\theta) = 1 - \theta/\pi$, which can be seen from the fact that two vectors $\mathbf{v},\mathbf{w}$ define a two-dimensional plane and these two vectors are mapped onto different hashes if a random line (the intersection between this plane and the hyperplane defined by $\mathbf{a}$) separates $\mathbf{v}$ and $\mathbf{w}$.

Under the angular hash family $\mathcal{H}_{\rm ang}$, consider $t$ hash tables, each with $2^k$ hash buckets, constructed via AND and OR-compositions with randomly sampled hash functions $h_{i,j}\in\mathcal{H}_{\rm ang}$ as previously described. It is possible to calculate the average probability $p_1^\ast$ that two vectors $\mathbf{v},\mathbf{w}\in\mathbb{R}^D$ with $\theta(\mathbf{v},\mathbf{w}) \leq \pi/3$ collide in at least one of the $t$ hash tables:
\begin{align*}
    p_1^\ast = \operatorname*{Pr}_{h_{i,j}\sim\mathcal{H}_{\rm ang}}[\exists i\in[t], h_i(\mathbf{v}) = h_i(\mathbf{w})~|~\theta(\mathbf{v},\mathbf{w}) \leq \pi/3] = \int_0^{\frac{\pi}{3}} \Theta_{[0,\frac{\pi}{3}]}(\theta) \big(1 - (1 - (1 - \theta/\pi)^k)^t\big) {\rm d}\theta.
\end{align*}
It can be shown (see~\cite[Lemma~10.5]{Laarhoven2016search}) that $p_1^\ast \geq 1-\varepsilon$ if $k = \log_{3/2}t - \log_{3/2}\ln(1/\varepsilon)$. On the other hand, the average probability $p_2^\ast$ that two vectors $\mathbf{v},\mathbf{w}\in\mathbb{R}^D$ with $\theta(\mathbf{v},\mathbf{w}) > \pi/3$ collide in at least one of the $t$ hash tables is
\begin{align}\label{eq:probability_collision_angular_hash}
    p_2^\ast = \operatorname*{Pr}_{h_{i,j}\sim\mathcal{H}_{\rm ang}}[\exists i\in[t], h_i(\mathbf{v}) = h_i(\mathbf{w})~|~\theta(\mathbf{v},\mathbf{w}) > \pi/3] = \int_{\frac{\pi}{3}}^{\frac{\pi}{2}} \Theta_{[\frac{\pi}{3},\frac{\pi}{2}]}(\theta) \big(1 - (1 - (1 - \theta/\pi)^k)^t\big) {\rm d}\theta.
\end{align}
It can be shown (see~\cite[Lemma~10.8]{Laarhoven2016search}) that $p_2^\ast \leq t\cdot 2^{-\beta D + o(D)}$ if $k = \log_{3/2}t + O(1)$, where 
\begin{align}\label{eq:beta_parameter_angular_hash}
    \beta = - \max_{\theta\in(\frac{\pi}{3},\frac{\pi}{2})}\left\{\log_2\sin\theta + \frac{\log_2{t}}{D\log_2(3/2)}\log_2(1-\theta/\pi) \right\} > 0.
\end{align}
Ultimately, the choice for $t$ will depend on the balance between the time hashing and the time searching, as we shall see in \cref{sec:sieving_algorithms}.

\subsubsection{Spherical LSH}

Another important hash family that can be used to improve sieving algorithms~\cite{laarhoven2015faster} is the spherical LSH proposed by Andoni et al.~\cite{Andoni2014beyond,Andoni2015optimal}. The spherical LSH partitions the unit sphere $\mathcal{S}^{D-1}$ by first sampling $u = 2^{\Theta(\sqrt{D})}$ vectors $\mathbf{g}_1,\dots,\mathbf{g}_u\in\mathbb{R}^D$ from a standard $D$-dimensional Gaussian distribution $\mathcal{N}(0,1)^D$. A hash region $\mathcal{R}_i$ is then associated to each $\mathbf{g}_i$ as
\begin{align*}
    \mathcal{R}_i = \{\mathbf{x}\in \mathcal{S}^{D-1}: \langle \mathbf{x},\mathbf{g}_i\rangle \geq D^{1/4}\}\setminus \bigcup_{j=1}^{i-1}\mathcal{R}_j, \qquad \forall i\in[u].
\end{align*}
This procedure sequentially ``carves'' spherical caps of radius $\sqrt{2} - o(1)$. The hash of a vector $\mathbf{v}$ is given by the index of the region $\mathcal{R}_i$ it lies in. Moreover, the choice of $u=2^{\Theta(\sqrt{D})}$ guarantees that the unit sphere is entirely covered by the hash regions with high probability since each hash region covers a fraction $2^{-\Theta(\sqrt{D})}$ of the sphere. Indeed, $\operatorname{Pr}_{\mathbf{g}\sim \mathcal{N}(0,1)^D}[\langle \mathbf{x},\mathbf{g}\rangle \geq D^{1/4}] \geq (2\pi)^{-1/2}(D^{-1/4} - D^{-3/4})e^{-\sqrt{D}/2}$ for any fixed point $\mathbf{x}\in\mathcal{S}^{D-1}$~\cite{Karger1998approximate}, and by following the argument in~\cite[Appendix~A.3]{andoni2015optimalarxiv}, $u=2^{\sqrt{D}}$ hash regions is enough to cover the unit sphere with failure probability super-exponentially small in $D$. Andoni et al.~\cite{Andoni2014beyond,Andoni2015optimal} proved that the collision probability for the spherical hash family $\mathcal{H}_{\rm sph}$ is
\begin{align*}
    p(\theta) = \exp\left(-\frac{\sqrt{D}}{2}\tan^2\left(\frac{\theta}{2}\right)(1+o(1))\right).
\end{align*}

Under the spherical hash family $\mathcal{H}_{\rm sph}$ with randomly sampled hash functions $h_{i,j}\in\mathcal{H}_{\rm sph}$, the average probability $p_1^\ast$ that two vectors $\mathbf{v},\mathbf{w}\in\mathbb{R}^D$ with $\theta(\mathbf{v},\mathbf{w}) \leq \pi/3$ collide in at least one of $t$ hash tables is
\begin{align*}
    p_1^\ast = \operatorname*{Pr}_{h_{i,j}\sim\mathcal{H}_{\rm sph}}[\mathbf{v},\mathbf{w} ~\text{collide}~|~\theta(\mathbf{v},\mathbf{w}) \leq \pi/3] 
    = \int_0^{\frac{\pi}{3}} \Theta_{[0,\frac{\pi}{3}]}(\theta) \bigg(1 - \Big(1 - e^{-\frac{k\sqrt{D}}{2}\tan^2\left(\frac{\theta}{2}\right)(1+o(1))}\Big)^t\bigg) {\rm d}\theta.
\end{align*}
It can be shown (see~\cite[Lemma~11.5]{Laarhoven2016search}) that $p_1^\ast \geq 1 - \varepsilon$ if $k=6(\ln{t} - \ln\ln(1/\varepsilon))/\sqrt{D}$. On the other hand, the average probability $p_2^\ast$ that two vectors $\mathbf{v},\mathbf{w}\in\mathbb{R}^D$ with $\theta(\mathbf{v},\mathbf{w}) > \pi/3$ collide in at least one of $t$ hash tables is
\begin{align}\label{eq:probability_collision_spherical_hash}
    p_2^\ast = \!\operatorname*{Pr}_{h_{i,j}\sim\mathcal{H}_{\rm sph}}[\mathbf{v},\mathbf{w} ~\text{collide}~|~\theta(\mathbf{v},\mathbf{w}) > \pi/3] 
    = \!\int_{\frac{\pi}{3}}^{\frac{\pi}{2}} \Theta_{[\frac{\pi}{3},\frac{\pi}{2}]}(\theta) \bigg( 1 - \Big(1 - e^{-\frac{k\sqrt{D}}{2}\tan^2\left(\frac{\theta}{2}\right)(1+o(1))}\Big)^t\bigg) {\rm d}\theta.
\end{align}
It can be shown (see~\cite[Lemma~11.6]{Laarhoven2016search}) that $p_2^\ast \leq 2^{-\beta D + o(D)}$ if $k = 6\ln(t)/\sqrt{D} + o(1)$, where
\begin{align}\label{eq:beta_parameter_spherical_hash}
    \beta = - \max_{\theta\in(\frac{\pi}{3},\frac{\pi}{2})}\left\{\log_2\sin\theta - \left(3\tan^2\left(\frac{\theta}{2}\right) - 1 \right)\frac{\log_2{t}}{D}\right\} > 0.
\end{align}

\subsubsection{Spherical LSF}
\label{sec:spherical_lsf}

Becker et al.~\cite{becker2016new} proposed the spherical LSF family akin to spherical LSH. In spherical LSF, a filter is constructed by drawing a random $\mathbf{a}\in\mathcal{S}^{D-1}$ and a vector $\mathbf{v}$ passes the filter if $\langle\mathbf{a}, \mathbf{v}\rangle \geq \alpha$ for some parameter $\alpha > 0$. In other words,
\begin{align*}
    \mathcal{F}_{\rm sph} = \{f_{\mathbf{a}}:\mathbb{R}^D \to \{0,1\} ~|~ \mathbf{a}\in\mathcal{S}^{D-1}\}, \qquad f_{\mathbf{a}}(\mathbf{v}) = \begin{cases}
        1 &\text{if}~\langle\mathbf{a}, \mathbf{v}\rangle \geq \alpha,\\
        0 &\text{if}~\langle\mathbf{a}, \mathbf{v}\rangle < \alpha.
    \end{cases}
\end{align*}
As shown by Becker et al.~\cite{becker2016new}, the collision probability for the spherical filter family $\mathcal{F}_{\rm sph}$ is
\begin{align}\label{eq:collision_probability_filter}
    p(\theta) = \mathcal{W}_D(\alpha,\alpha,\theta) = \exp\left(\frac{D}{2}\ln\left(1 - \frac{2\alpha^2}{1 + \cos\theta} \right)(1+o(1))\right),
\end{align}
while the collision probability of a vector with itself is
\begin{align*}
    p(0) = \mathcal{C}_D(\alpha) = \exp\left(\frac{D}{2}\ln(1 - \alpha^2 )(1+o(1))\right).
\end{align*}

Under the spherical filter family $\mathcal{F}_{\rm sph}$ with randomly sampled filters $f_{i,j}\in\mathcal{F}_{\rm sph}$, the average probability $p_1^\ast$ that two vectors $\mathbf{v},\mathbf{w}\in\mathbb{R}^D$ with $\theta(\mathbf{v},\mathbf{w}) \leq \pi/3$ collide in at least one of $t$ filters is
\begin{align}\label{eq:probability_collision_spherical_filter}
    p_1^\ast = \operatorname*{Pr}_{f_{i,j}\sim \mathcal{F}_{\rm sph}}[\exists i\in[t],\mathbf{v},\mathbf{w}\in L_{f_i}~|~\theta(\mathbf{v},\mathbf{w}) \leq \pi/3] = \int_0^{\frac{\pi}{3}} \Theta_{[0,\frac{\pi}{3}]}(\theta) (1 - (1 - \mathcal{W}_D(\alpha,\alpha,\theta)^k)^t){\rm d}\theta.
\end{align}
Since $\mathcal{W}_D(\alpha,\alpha,\theta)$ is decreasing in $\theta$, it is not hard to see that $p_1^\ast \geq 1 - (1 - \mathcal{W}_D(\alpha,\alpha,\pi/3)^k)^t$. Therefore, $p_1^\ast \geq 1-\varepsilon$ if $t\geq \ln(1/\varepsilon)/\ln(1/(1-\mathcal{W}_D(\alpha,\alpha,\pi/3)^k))$. Regarding the choice for $k$, the trivial lower bound $k\geq 1$ leads to an upper bound on $\alpha$, which is normally the optimal choice, see~\cite{becker2016new} for more information. This means that we shall take $k=1$ in the above expressions.

LSF methods usually yield better asymptotic complexities when it comes to sieving algorithms, as shown in \cref{sec:sieving_algorithms}. However, a crucial assumption for the use of filter families over hash families is the existence of an efficient oracle that identifies any of the concatenated filters a vector passes through in time proportional to the number of relevant filters out of all concatenated filters. Becker et al.~\cite{becker2016new} developed such an oracle, called $\mathtt{EfficientListDecoding}$, by employing random product codes to efficiently obtain the set of relevant filters, which only mildly affects the overall complexities. The complexity of their oracle is summarised below.
\begin{fact}[{\cite[Lemma~5.1]{becker2016new}}]\label{fact:EfficientListDecoding}
    Let $t = 2^{\Omega(D)}$ be the number of filter buckets. There is an algorithm that returns the set of filters that a given vectors passes in average time $O(\log_2{D}\cdot t\cdot \mathcal{C}_D(\alpha))$ by mainly visiting at most $2\log_2{D}\cdot t\cdot \mathcal{C}_D(\alpha)$ nodes for a pruned enumeration.
\end{fact}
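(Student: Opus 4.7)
The plan is to prove the statement by instantiating the filter set as a structured \emph{random product code} that admits depth-first pruned enumeration, and then bounding the expected size of the resulting search tree. Concretely, I would partition the $D$ coordinates into $m = \Theta(\log_2 D)$ blocks of size $D/m$, draw an independent random subcode $C_i \subset \mathcal{S}^{D/m - 1}$ in each block with $|C_i| = t^{1/m}$, and take the filter set to be $C = \{(c_1/\sqrt{m},\dots,c_m/\sqrt{m}) : c_i \in C_i\}$, a set of $t$ unit vectors on $\mathcal{S}^{D-1}$. Writing $\mathbf{v} = (\mathbf{v}_1,\dots,\mathbf{v}_m)$ and $\mathbf{c} = (c_1/\sqrt{m},\dots,c_m/\sqrt{m})$, the inner product factorises as $\langle \mathbf{v},\mathbf{c}\rangle = \frac{1}{\sqrt{m}}\sum_{i=1}^m \langle \mathbf{v}_i,c_i\rangle$, so identifying the relevant filters for $\mathbf{v}$ becomes a blockwise search over a tree of depth $m$.

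The enumeration algorithm proceeds depth-first: at a node of depth $j$ specified by a partial choice $(c_1,\dots,c_j)$, I would branch into the $t^{1/m}$ children $c_{j+1}\in C_{j+1}$ and prune whenever the partial inner product $S_j = \frac{1}{\sqrt{m}}\sum_{i\leq j}\langle \mathbf{v}_i,c_i\rangle$, added to the maximum possible contribution of the remaining $m-j$ blocks (a deterministic quantity depending only on the norms of $\mathbf{v}_{j+1},\dots,\mathbf{v}_m$), falls below the threshold $\alpha$. Any leaf that survives this pruning rule is by construction a relevant filter, and conversely no relevant filter can be pruned, so the output is exactly the desired set. By \cref{eq:collision_probability_filter} specialised at $\theta=0$, the expected number of surviving leaves is $t\cdot \mathcal{C}_D(\alpha)$.

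The core of the analysis is to show that the \emph{expected number of internal nodes} visited at every depth $j\in\{1,\dots,m\}$ is also bounded by $t\cdot \mathcal{C}_D(\alpha)$ up to a factor of $2$. For this I would argue that when the blockwise subcodes are drawn uniformly at random on $\mathcal{S}^{D/m-1}$, the blockwise inner products $\langle \mathbf{v}_i, c_i\rangle$ are independent subgaussian random variables, and the partial sum $S_j$ concentrates sharply around $0$ with scale $\sqrt{j/m}$. A surviving prefix at depth $j$ corresponds to $S_j$ exceeding a threshold that scales continuously to $\alpha$ as $j \to m$, and a standard tail-vs.-density comparison on the sphere shows that the density of surviving prefixes at any level matches the density of surviving leaves up to a constant factor. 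Summing the per-level bound over the $m = \Theta(\log_2 D)$ levels yields the total of $2\log_2 D \cdot t \cdot \mathcal{C}_D(\alpha)$ visited nodes, and since each visited node costs $\poly(D)$ time for the pruning test while the node count dominates at $t = 2^{\Omega(D)}$, we obtain the claimed $O(\log_2 D \cdot t \cdot \mathcal{C}_D(\alpha))$ average running time.

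The main obstacle is making the ``surviving prefixes $\approx$ surviving leaves'' comparison quantitatively tight in the relevant regime $t = 2^{\Omega(D)}$, $m = \Theta(\log_2 D)$, since any slack that is exponential in $D/m$ would blow up the bound by a factor of $2^{\Omega(D/\log_2 D)}$ and ruin the statement. Handling this requires carefully calibrating the pruning threshold at each intermediate level so that the geometric growth of the branching factor $t^{1/m}$ is exactly cancelled by the geometric shrinkage of the surviving fraction, using sharp spherical-cap tail asymptotics rather than crude subgaussian bounds. A secondary technical obstacle is justifying that the collision and decoding statistics of the random product code match those of a fully independent random code of size $t$; this typically reduces to a second-moment computation verifying that the pairwise inner-product distribution between codewords is Gaussian-like in this parameter regime.
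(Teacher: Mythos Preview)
The paper does not prove this statement at all: it is recorded as a \emph{Fact} with a bare citation to \cite[Lemma~5.1]{becker2016new}, and no proof or proof sketch appears anywhere in the text. So there is nothing in the paper to compare your proposal against.

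That said, your proposal is a faithful reconstruction of the argument in the cited source. Becker, Ducas, Gama, and Laarhoven indeed instantiate the filter set as a random product code with $m = \Theta(\log D)$ blocks of equal dimension, decode by depth-first enumeration with a per-level pruning threshold, and bound the expected number of surviving nodes at each level by essentially the number of surviving leaves. Your identification of the two technical obstacles is also accurate: the per-level bound hinges on sharp spherical-cap asymptotics (not crude subgaussian tails), and one must separately verify that the product structure does not distort the collision statistics relative to a fully random code. One minor correction: in the original, the pruning test at depth $j$ is calibrated against a scaled threshold $\alpha_j$ chosen so that the expected number of surviving depth-$j$ prefixes equals $t^{j/m} \cdot \mathcal{C}_{jD/m}(\alpha_j)$, and the key inequality is that this quantity is at most $t \cdot \mathcal{C}_D(\alpha)$ for every $j$; your phrasing in terms of ``maximum possible contribution of the remaining blocks'' is a valid alternative pruning rule but makes the level-by-level analysis slightly less clean than the calibrated-threshold version.
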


\section{Quantum error correction}
\label{sec:error_correction}

Quantum circuits are usually described on a logical level by applying logical gates onto logical qubits. If one wants to implement a quantum circuit in actual physical devices, then noise should be taken into consideration. This is not only valid for classical devices, but especially true for quantum computers, where exquisite control of quantum systems is severely affected by noise. One of the greatest breakthroughs of the 90s was the realisation that redundancy could also be introduced into quantum systems to protect them against several types of noise, and therefore quantum error-correction codes exist. Starting with Shor's nine-qubit code~\cite{shor1995scheme}, several simple quantum error-correction codes were soon discovered, e.g., Steane's seven-qubit code~\cite{steane1996error}, the five-qubit code~\cite{bennett1996mixed,laflamme1996perfect}, and the CSS (Calderbank-Shor-Steane) codes~\cite{calderbank1996good,steane1996multiple}. All these codes are examples of stabiliser codes, i.e., quantum error-correction codes based on the stabiliser formalism invented by Gottessman~\cite{gottesman1996class,gottesman1997stabilizer}. In any quantum error-correction code, a set of \emph{physical} qubits are entangled in particular states and these joint states are interpreted as \emph{logical} qubits. As an example, in Shor's code~\cite{shor1995scheme} $|0_L\rangle$ is encoded as $(|000\rangle + |111\rangle)^{\otimes 3}/2\sqrt{2}$ and $|1_L\rangle$ is encoded as $(|000\rangle - |111\rangle)^{\otimes 3}/2\sqrt{2}$, which protects against an arbitrary error on a single qubit.

\subsection{Physical error model}

Several properties of quantum error-correction codes are functions of the underlying physical error model. In this work, we assume incoherent circuit-level noise for the physical qubits, meaning that each physical gate, state initialisation, and measurement outcome is affected by a random Pauli error with probability $p_{\rm phy}$. More precisely, at any point of a quantum circuit, the quantum state of a physical qubit is mapped according to
\begin{align*}
    \rho \mapsto (1- p_{\rm phy}) \rho + \frac{p_{\rm phy}}{3}\cdot \mathsf{X}\rho \mathsf{X} + \frac{p_{\rm phy}}{3}\cdot \mathsf{Y}\rho \mathsf{Y} + \frac{p_{\rm phy}}{3}\cdot \mathsf{Z}\rho \mathsf{Z}.
\end{align*}
Even though two-qubit gates are more prone to errors than single-qubit gates, we consider a single characteristic error rate $p_{\rm phys}$ for both types of gate in circuit-level noise. We will assume that $p_{\rm phy} = 10^{-5}$ throughout, which is an optimistic but not unrealistic assumption~\cite{ballance2016high,gaebler2016high,madjarov2020high,clark2021high,moskalenko2022high,ruskuc2022nuclear}.

\subsection{Surface codes}

The codes mentioned above are only resilient to very small physical errors. This was later greatly improved with the introduction of surface codes by Kitaev~\cite{kitaev1997quantum,kitaev2003fault}. In surface codes, the physical qubits are arranged in a two-dimensional array on a surface of non-trivial topology, e.g., a plane or a torus, and quantum operations are associated with non-trivial homology cycles of the surface. Surface codes have several appealing properties, e.g., very high error tolerance~\cite{wang2003confinement,ohno2004phase} and local check (stabiliser) measurements. We will not review the surface code in detail here, but we will quote important properties that will be used in our analysis. For more details on the surface code, see~\cite{dennis2002topological,fowler2012surface,Litinski2019gameofsurfacecodes,Cleland2022introduction}.

There are a few different encoding schemes for surface codes, e.g., defect-based~\cite{fowler2012surface}, twist-based~\cite{bombin2010topological}, and patch-based~\cite{horsman2012surface} encodings. Here we shall work exclusively with the latter, since surface-code patches offer lower space overhead and low-overhead Clifford gates~\cite{brown2017poking,litinski2018latticesurgery}. A (rotated) surface-code patch of distance $d$ employs $d^2$ physical qubits to encode one logical qubit and is able to correct arbitrary errors on any $\lfloor (d-1)/2\rfloor$ qubits. In order to extract information from a surface code and check for errors, its check operators are measured, which requires $d^2$ extra measurement qubits for a total of $2d^2$ physical qubits. Moreover, the subroutine of measuring check operators naturally sets a time scale in any experiment. By a \emph{code cycle} we mean the time required to measure all surface-code check operators. It is also common to define a \emph{logical cycle} as $d$ code cycles~\cite{Litinski2019gameofsurfacecodes}, since $\Omega(d)$ check-operator measurements are required to successfully discern measurement errors from physical errors. We will assume that a quantum computer can perform one code cycle every $100$~ns, which is quite an optimistic but not unrealistic assumption~\cite{Chen2021exponential,Anderson2021realization,Krinner2022realizing,Battistel2023realtime}.

One of the main results of the theory of quantum fault tolerance is the \emph{threshold theorem}~\cite{knill1996concatenated,Aharonov1997fault,kitaev1997quantum,knill1998resilient,Preskill1998reliable,gottesman1997stabilizer,nielsen2010quantum}. On a high level, it states that, under some reasonable assumptions about the noise of the underlying hardware, an arbitrary long quantum computation can be carried out with arbitrarily high reliability, provided the error rate $p_{\rm phy}$ per quantum gate is below a certain critical \emph{threshold} value $p_{\rm th}$. Applied to the surface code specifically, the threshold theorem states that the probability of a \emph{logical error} occurring on a distance-$d$ surface code after measuring the check operators and correcting for the observed physical errors vanishes exponentially with the distance $d$ as long as $p_{\rm phy}$ is below the threshold $p_{\rm th}$. This means that a quantum computation can be made arbitrarily reliant by increasing the distance of the surface-code patch. The surface code exhibits a very high threshold~\cite{wang2003confinement,ohno2004phase,Stace2010error} for most error models, and has a threshold of approximately $1\%$ for circuit-level noise~\cite{Wang2011surface,Stephens2014fault}. Therefore, under a circuit-level noise model with error $p_{\rm phy}$, the logical error rate per logical qubit per code cycle can be approximated as~\cite{fowler2018low}
\begin{align*}
    p_L(p_{\rm phy}, d) = 0.1(100 p_{\rm phy})^{(d+1)/2}.
\end{align*}
If we wish that $n$ logical qubits survive for $T$ code cycles with high probability, say $99.9\%$, then the probability that a logical error affects any logical qubit during all code cycles must be smaller than $0.1\%$. This determines the required code distance $d$ when encoding the logical qubits as
\begin{align*}
    T\cdot n \cdot p_L(p_{\rm phy},d) < 0.001.
\end{align*}

\subsection{Baseline architecture vs active-volume architecture}

It is necessary to specify a physical architecture for a general-purpose fault-tolerant quantum computer which, together with a compilation scheme, converts quantum computations into instructions for that architecture. There are mainly two types of architectures that will be taken into consideration in this work: baseline architectures with nearest-neighbor logical two-qubit interactions on a 2D grid~\cite{Litinski2019gameofsurfacecodes,fowler2018low,Chamberland2022universal,Chamberland2022building,bombin2021interleaving}, and the active-volume architecture~\cite{litinski2022active} that employs a logarithmic number of non-local connections between logical qubits. 

In baseline architectures, the most relevant parameters are the number of data qubits $n_Q$ (i.e., the number of logical qubits on a circuit-level quantum computation) and the number of non-Clifford gates, which in our case is the number of $\mathsf{Toffoli}$ gates $n_{\rm Toff}$. This is because all Clifford gates can be commuted to the end of the computation and be absorbed by final measurements~\cite{Litinski2019gameofsurfacecodes}. 
Both quantities $n_Q$ and $n_{\rm Toff}$ define the \emph{circuit volume} $n_Q \cdot n_{\rm Toff}$, which is proportional to the \emph{spacetime volume cost} of the quantum computation, i.e., the total number of logical qubits taking into consideration space overheads multiplied by the total number of logical cycles. In baseline architectures, a $n_Q$-qubit quantum computation consists roughly of $2n_Q$ logical qubits. To be more precise, using Litinksi's fast data blocks~\cite{Litinski2019gameofsurfacecodes}, an $n_Q$-qubit quantum computation requires $2n_Q + \sqrt{8n_Q} + 1$ logical qubits in total (in order to efficiently consume magic states). On the other hand, one $\mathsf{Toffoli}$ gate is executed in $6$ logical cycles, or in $4$ logical cycles if the target qubit is in the $|0\rangle$ state, which will be the case of almost all $\mathsf{Toffoli}$ gates in our circuits.

The figure of merit in baseline architectures is the circuit volume. Most of the time, however, a large portion of the circuit volume is \emph{idle volume}, i.e., volume attributed to qubits that are not part of an operation at a certain time and are thus idling. Since idling qubits have the same cost as active qubits when using surface codes, the cost of logical operations scales with the number of logical qubits $n_Q$. In active-volume architectures, on the other hand, only active qubits contribute to the spacetime volume cost of a quantum computation. More specifically, in active-volume architectures, a quantum computer is made up of modules with $d^2$ physical qubits. Each module can operate as memory or a workspace module. A memory module increases the memory capacity by one logical qubit, and a workspace module increases the computational speed by one \emph{logical block} per logical cycle. An operation is measured in terms of logical blocks and its cost is basically the amount of workspace modules it requires per logical cycle. We assume that $n_Q$ logical qubits result in $n_Q/2$ memory qubits and a speed of $n_Q/2$ logical blocks per logical cycle. The figure of merit in an active-volume architecture is the number of logical blocks, called \emph{active volume}. In order to obtain the total active volume of a quantum computation, we must simply sum up all the active volume of its constituent operations, several of which were given in~\cite{litinski2022active}, e.g., a $\mathsf{Toffoli}$ has an active volume of $12$ plus the active volume of distilling a magic state (see \Cref{sec:magic_state_distillation}). Litinski and Nickerson~\cite{litinski2022active} proposed a general-purpose active-volume architecture that executes quantum computations with spacetime volume cost of roughly twice the active volume. Contrary to baseline architectures, active-volume ones rely on non-local connections between components, which allows for several fast operations. As an example, Bell measurements can be performed in one code cycle, while in baseline architectures it requires $2$ logical cycles via lattice surgery~\cite{horsman2012surface,litinski2018latticesurgery,fowler2018low}. We point the reader to~\cite{litinski2022active} for a detailed list of assumptions.

Common to both architectures is the time required to perform a layer of single-qubit measurements (or Bell measurements in active-volume architecture), feed the measurement outcomes into a classical decoder, perform a classical decoding algorithm like minimum-weight perfect matching~\cite{Edmonds1965paths,dennis2002topological} or union-find~\cite{delfosse2020lineartime,Delfosse2021almostlineartime}, and use the result to send new instructions to the quantum computer, which is called \emph{reaction time} $\tau_r$. In this work, we shall assume a reaction time of $1$~$\mu$s, which is an optimistic assumption, as most previous works assume a reaction time of $10$~$\mu$s~\cite{Gidney_2021,litinski2023compute}. Related to the reaction time is the \emph{reaction depth} of a quantum computation, which is the number of reaction layers, i.e., layers of reactive measurements that must be classically decoded and fed back into the circuit. We thus distinguish between the time required to execute all gates in a circuit when the reaction time is zero (\emph{circuit time}) and the reaction depth times the reaction time (\emph{circuit reaction (time) limit}).

\subsection{Magic state distillation}
\label{sec:magic_state_distillation}

It is known that no quantum error-correction code can transversally implement a universal gate set~\cite{eastinKnill09}, i.e., be physically implemented on a logical qubit by independent actions of single-qubit physical gates on a subset of the physical qubits. For surface codes, this means $\mathsf{T}$ and $\mathsf{CCZ}$ gates, among others. In order to overcome this problem, a resource state is first prepared separately and subsequentially consumed to execute a non-transversal gate like a $\mathsf{T}$ or a $\mathsf{CCZ}$ gate~\cite{Bravyi2005universal}. For $\mathsf{T}$ gates, the resource state is a magic state $|T\rangle = (|0\rangle + e^{i\pi/4}|1\rangle)/\sqrt{2}$, while for $\mathsf{CCZ}$ gates the resource state is $|CCZ\rangle = \mathsf{CCZ}|+\rangle^{\otimes 3}$. A magic state $|T\rangle$ can be used to perform a $\mathsf{T}$ gate by measuring the logical Pauli product $\mathsf{Z}\otimes \mathsf{Z}$ acting on an input state and the magic state~\cite{Litinski2018quantum,Litinski2019gameofsurfacecodes,Litinski2019magicstate} akin to teleportation protocols (cf.~\cite[Figure~10.25]{nielsen2010quantum}). A similar procedure can be used to perform a $\mathsf{CCZ}$ gate by consuming one $|CCZ\rangle$ state (see~\cite[Figure~14(a)]{litinski2022active}). However, applying a physical $\mathsf{T}$ or $\mathsf{CCZ}$ gate onto a few physical qubits yields a resource state with physical error $p_{\rm phy}$. If this resource state is then used to perform a logical gate, the error rate of the logical gate will be proportional to $p_{\rm phy}$, which can be too high for a long computation and will spoil the final outcome. One common procedure to generate low-error magic states is to employ magic state distillation protocols~\cite{Bravyi2005universal}.

Magic state distillation is a short error-detecting quantum procedure to generate a low-error magic state from several high-error magic state copies. First introduced by Bravyi and Kitaev~\cite{Bravyi2005universal} and Reichardt~\cite{Reichardt2005quantum}, several different protocols have since been developed~\cite{Bravyi2012magic,Fowler2013surface,Meier2013magic,Jones2013multilevel,duclos_cianci2013distillation,duclos_cianci2015reducing,Campbell2017unified,OGorman2017quantum,Haah2018codesprotocols,Campbell2018magicstateparity,Litinski2019gameofsurfacecodes,Litinski2019magicstate}. There are a few different but equivalent ways to understand magic state distillation. One is to create the logical magic state using an error-correction code with transversal $\mathsf{T}$ gates, e.g., punctured Reed-Muller codes~\cite{Bravyi2005universal,Haah2018codesprotocols} or code-blocks~\cite{Bravyi2012magic,Jones2013multilevel,Fowler2013surface}. As an example, the 15-to-1 distillation procedure~\cite{Bravyi2005universal,Reichardt2005quantum,fowler2018low} employs a punctured Reed-Muller code to first encode a logical $|+_L\rangle$ state within $15$ physical qubits. The transversallity of the code allows to perform a logical $\mathsf{T}_L$ gate onto $|+_L\rangle$ from individual physical $\mathsf{T}$ gates, which yields $\mathsf{T}_L|+_L\rangle$. The encoding procedure is then uncomputed and the logical information is shifted to one of the physical qubits. Measuring the remaining physical qubits gives information on possible errors and on whether the procedure was successful or not. If the error probability of the $15$ $\mathsf{T}$ gates is $p_{\rm phy}$, then the error probability of the output state is $35p_{\rm phy}^3$, where the factor $35$ comes from different error configurations that are not detectable by the protocol. As a result, $15$ magic states with error $p_{\rm phy}$ are distilled down to one magic state with error $35p_{\rm phy}^3$. A similar distillation procedure exists for creating a low-error $|CCZ\rangle$ state, e.g., Gidney and Fowler~\cite{Gidney2019efficientmagicstate} proposed an 8-to-CCZ distillation protocol to output a $|CCZ\rangle$ state with error $28p_{\rm phy}^2$ from $8$ $|T\rangle$ states with error $p_{\rm phy}$.
In order to achieve lower error rates than $35p_{\rm phy}^3$ or $28p_{\rm phy}^2$, it is possible to concatenate different distillation protocols, meaning that the output states of a level-$1$ distillation protocol can serve as input magic states for a level-$2$ distillation protocol, etc.

For baseline architectures, we shall employ the magic state distillation protocols from Litinski~\cite{Litinski2019magicstate} which are, as far as we are aware, one of the best to this day. Litinski's protocols are characterised by three code distances $d_X$, $d_Z$, $d_m$ from several internal patches. As shown in~\cite{Litinski2019magicstate}, a $(15\text{-to-}1)_{d_X,d_Z,d_m}$ distillation protocol outputs a low-error magic state every $6d_m$ code cycles using $2(d_X + 4d_Z)\cdot 3d_X + 4d_m$ physical qubits. 
Similarly, a two-level protocol is described by three additional code distances $d_{X2}$, $d_{Z2}$, and $d_{m2}$, plus the number $n_{L1}$ of level-1 distillation blocks, where $n_{L1}$ is an even integer. 
As an example quoted from Litinski's paper~\cite[Table~1]{Litinski2019magicstate}, if $p_{\rm phy} = 10^{-4}$, then the $(15\text{-to-}1)^4_{7,3,3}\times (8\text{-to-CCZ})_{15,7,9}$ protocol outputs a $|CCZ\rangle$ state with error $p_{\rm out} = 7.2\cdot 10^{-14}$ in $36.1$ code cycles using $12,400$ physical qubits. 
As will be clear in \Cref{sec:results}, we shall require higher-than-two-level protocols to achieve error rates below $10^{-40}$. Even though Litinski~\cite{Litinski2019magicstate} only focuses on one and two-level distillation protocols, it is not hard to continue with their analysis and derive the resources required for a three-level distillation protocol: we simply input level-$2$ magic states into a level-$3$ protocol with code parameters $d_{X3}$, $d_{Z3}$, and $d_{m3}$, plus the number $n_{L2}$ of level-2 distillation blocks. When optimising the code distances, one usually finds that $d_X = d$, $d_Z \approx d/2$, $d_m \approx d/2$~\cite{Litinski2019magicstate,litinski2022active}. We shall then consider concatenated protocols of the form $(15\text{-to-}1)_{d/4,d/8,d/8}^{n_{L1}}\times(15\text{-to-}1)_{d/2,d/4,d/4}^{n_{L2}}\times (8\text{-to-CCZ})_{d,d/2,d/2}$. 

Regarding active-volume architectures, on the other hand, we employ the distillation protocols from~\cite{litinski2022active} of the form $(15\text{-to-}1)_{d,d/2,d/2}$ and $(8\text{-to-CCZ})_{d,d,d/2}$. Given a quantum computation with logical blocks of distance $d$, then a $(15\text{-to-}1)_{ad,ad/2,ad/2}$ protocol has an active volume of $35a^2/2$, while a $(8\text{-to-CCZ})_{ad,ad,ad/2}$ protocol has an active volume of $25a^2/2$. Therefore, a $(15\text{-to-}1)_{d/4,d/8,d/8}^{n_{L1}}\times(15\text{-to-}1)_{d/2,d/4,d/4}^{n_{L2}}\times (8\text{-to-CCZ})_{d,d,d/2}$ protocol has an active volume of $\frac{35}{32}n_{L_1}n_{L_2} + \frac{35}{8}n_{L_2} + \frac{25}{2}$. By using $n_{L_1} = 8$ level-1 protocols and $n_{L_2} = 4$ level-2 protocols, $16$ level-1 distilled $|T\rangle$ states are produced every $d/4$ code cycles, and $8$ level-2 distilled $|T\rangle$ states are produced every $d/2$ code cycles, meaning that one level-3 distilled $|CCZ\rangle$ can be produced every $d$ code cycles. Therefore, the $(15\text{-to-}1)_{d/4,d/8,d/8}^{8}\times(15\text{-to-}1)_{d/2,d/4,d/4}^{4}\times (8\text{-to-CCZ})_{d,d,d/2}$ protocol has an active volume of $65$ and produces a $|CCZ\rangle$ state every logical cycle. The output error can be calculated using the approximate expressions in~\cite{litinski2022active}, or using the Python file for the baseline-architecture distillation protocols from~\cite{Litinski2019magicstate}.

\section{Arithmetic on a quantum computer}
\label{sec:arithmetic}

In this section, we turn our attention to the resources needed to perform some simple arithmetic operations on a quantum computer that will be the building blocks for the analysis of quantum sieving. But first, we need a way to store $D$-dimensional vectors with integer entries in a quantum computer. In order to do that, we store the \emph{two's-complement} representation $x_{\kappa-1}\dots x_0$ of a $\kappa$-bit integer $x$ in a $\kappa$-qubit quantum register $\ket{x_{\kappa-1},x_{\kappa-2},\dots, x_0}$, where $x_0, \dots x_{\kappa-1}\in\{0,1\}$. We do not use a sign-magnitude representation for an integer $x = (-1)^{x_{\kappa-1}}(x_{\kappa-2}\cdot 2^{\kappa-2} + \dots + x_0 \cdot 2^0)$, as done by other works~\cite{Yang2023quantumalphatron,doriguello2022quantum}, since addition is non trivial in such representation and several known quantum adders would have to be modified to take negative numbers into consideration. The value of $\kappa$ is chosen in advance and remains the same throughout the whole computation. Increasing the value of $\kappa$ of course requires more physical resources for the algorithm execution but at the same time reduces the chance of an overflow occurring. Throughout this work, we assume $\kappa=32$, which translates to a capacity of working with integers in the range $[-2^{31}, 2^{31}-1]$. To store an entire $D$-dimensional vector, we store each of its entries separately using the above encoding, so that $D \kappa$ qubits are required in total.

We now start with reviewing fundamental arithmetic operations on a quantum computer: addition, comparison, and multiplication.

\subsection{Quantum adders}

An \emph{out-of-place quantum adder} (modulo $2^\kappa$) is a unitary that adds two $\kappa$-bit integers $x = x_{\kappa-1}\dots x_0$ and $y = y_{\kappa-1}\dots y_0$ together onto a third register,
\begin{align*}
    |x_{\kappa-1},\dots,x_0\rangle|y_{\kappa-1},\dots,y_0\rangle|0\rangle^{\otimes \kappa} \mapsto |x_{\kappa-1},\dots,x_0\rangle|y_{\kappa-1},\dots,y_0\rangle |(x+y)_{\kappa-1}, \dots, (x+y)_0\rangle.
\end{align*}
It is possible to define an \emph{in-place quantum adder} which replaces one of the inputs with the outcome, but in this work we shall focus on out-of-place adders since they have a lower $\mathsf{Toffoli}$-count~\cite{gidney2018halving}.

Several quantum adders or related circuits have been proposed in the past few decades~\cite{beckman1996efficient,gossett1998quantum,vedral1996quantum,zalka1998fast,draper2000addition,cuccaro2004new,draper2006logarithmic,lin2014qlib,jayashree2016ancilla,amy2013meet,jones2013low,gidney2018halving,munoz2019quantum,li2020efficient,li2022circuit}, see~\cite{orts2020review} for a review. As far as we are aware, the state-of-the-art quantum adder in terms of $\mathsf{Toffoli}$-count is due to Gidney~\cite{gidney2018halving}, which is an improved version of Cuccaro's adder~\cite{cuccaro2004new}. Gidney's adder (\cref{fig:adder}) concatenates several copies of the adder building-block, each of which is made of one $\mathsf{Toffoli}$ computation and its uncomputation requiring no $\mathsf{Toffoli}$ gates. In order to add two $\kappa$-bit integers, Gidney's adder requires $\kappa-1$ $\mathsf{Toffoli}$ gates in total. Even though Gidney's results are phrased in terms of $\mathsf{T}$ gates, we translate them into $\mathsf{Toffoli}$ gates. The $\mathsf{Toffoli}$-count, together with several other quantities like $\mathsf{Toffoli}$-width (maximum number of $\mathsf{Toffoli}$ gates in a single layer), reaction depth, number of logical qubits are shown in \cref{table:arithmetics}. Its active volume, on the other hand, was computed by Litinski and Nickerson~\cite[Table~1]{litinski2022active} and equals to $(\kappa-1)(39+C_{|CCZ\rangle}) + 7$, where $C_{|CCZ\rangle}$ is the active volume of distilling one $|CCZ\rangle$ state.

\begin{figure}[t]
    \centering
    \includegraphics[scale=0.8, trim={1.5cm, 21.5cm, 6.1cm, 0.85cm}, clip=true]{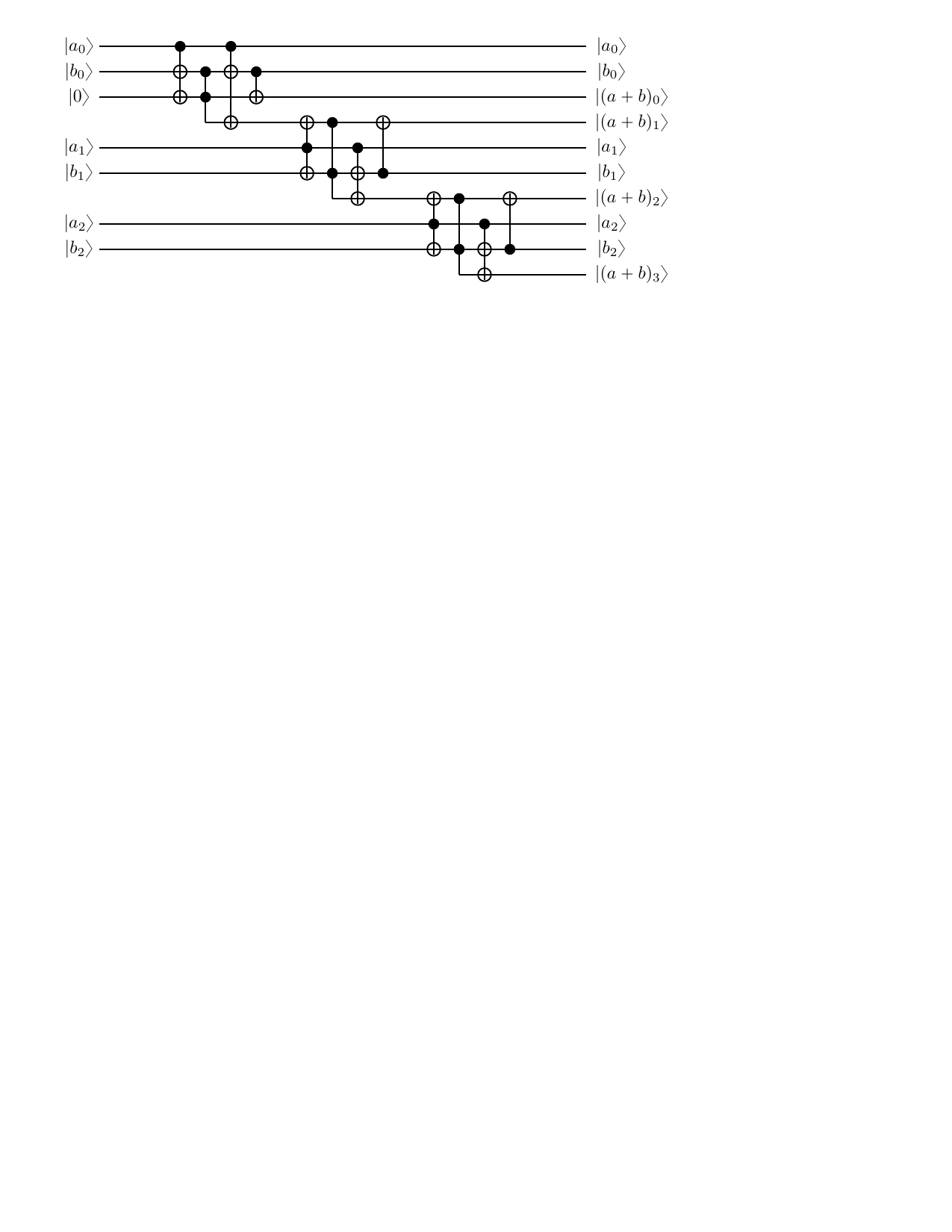}
    \caption{Gidney's out-of-place quantum adder (modulo $2^\kappa$) that adds two $\kappa$-bit numbers $a$ and $b$ stored in quantum registers.}
    \label{fig:adder}
\end{figure}

Using Gidney's quantum (out-of-place) adder, it is easy to develop a quantum controlled (out-of-place) adder (modulo $2^\kappa$): first apply the vanilla adder to get $|c\rangle |x\rangle |y\rangle |0\rangle^{\otimes 2\kappa} \mapsto |c\rangle |x\rangle |y\rangle |x+y\rangle |0\rangle^{\otimes \kappa}$, followed by $\kappa$ $\mathsf{Toffoli}$ gates to copy each bit $(a+b)_i$ onto another register controlled on $c\in\{0,1\}$. This yields $|c\rangle |a\rangle |x\rangle |x+y\rangle |c(x+y)\rangle$. It is possible to uncompute the ancillary register $|x+y\rangle$ by performing the inverse of the first adder, which uses no $\mathsf{Toffoli}$ gates. However, if we keep such ancillary register, uncomputing the whole controlled adder requires no $\mathsf{Toffoli}$ gates, as opposed to $2\kappa+O(1)$ if you call the inverse of the entire controlled adder. Therefore, we shall keep the ancillary register $|x+y\rangle$ until the uncomputation of the whole circuit. Finally, we note that the controlled copying of the ancillary register $|x+y\rangle$ can be done while the out-of-place adder is being performed. The active volume of the whole computation, while not considered by~\cite{litinski2022active}, can be easily calculated from the its separated parts. The results are described in \cref{table:arithmetics}.

\subsection{Quantum comparator}

A quantum comparator is a unitary that compares whether a $\kappa$-bit integer $x = x_{\kappa-1}\dots x_0$ is bigger than another $\kappa$-bit integer $y = y_{\kappa-1}\dots y_0$,
\begin{align*}
    |x\rangle|y\rangle|0\rangle \mapsto |x\rangle|y\rangle|\mathbf{1}[x > y]\rangle.
\end{align*}
A comparator can be obtained from the highest-order bit of the difference $x-y$. Whether we use one’s-complement or two’s-complement arithmetic, the identity $x - y = \overline{\overline{x} + y}$ holds. Therefore, it is possible to use an out-of-place adder as a comparator: complement $x$, employ a quantum adder and keep the highest-order bit, and complement the obtained highest-order bit. All the adders described in the previous section are modulo $2^\kappa$, meaning that the highest-order bit is not calculated. Nonetheless, we shall assume that there is no overflow and therefore the highest-order bit of the summation modulo $2^\kappa$ yields the correct answer. Moreover, if one of the inputs is classical, say $y$, then there is no need to complement the quantum register holding $x$, except maybe for the highest-order bit of $x-y$ depending on whether we are checking for $x>y$ or $x<y$. 

\subsection{Quantum multipliers}

Similarly to addition, we can define an \emph{out-of-place quantum multiplier} (modulo~$2^\kappa$) as the unitary that multiplies two $\kappa$-bit integers $x=x_{\kappa - 1}\dots x_0$ and $y=y_{\kappa - 1}\dots y_0$ together and places the outcome on a third register,
\begin{align*}
    |x_{\kappa-1},\dots,x_0\rangle|y_{\kappa-1},\dots,y_0\rangle|0\rangle^{\otimes \kappa} \mapsto |x_{\kappa-1},\dots,x_0\rangle|y_{\kappa-1},\dots,y_0\rangle |(x\cdot y)_{\kappa-1}, \dots, (x\cdot y)_0\rangle.
\end{align*}
Several quantum multipliers have been proposed in the past decade~\cite{lin2014qlib,jayashree2016ancilla,babu2016cost,kotiyal2014circuit,ruizperez2017quantum,munoz2019quantum,PourAliAkbar2019efficient,li2020efficient,Gayathri2021Tcount,li2022circuit,orts2023improving,Manochehri2024regular}. In terms of $\mathsf{T}$-count, the works of Li et al.~\cite{li2022circuit} and Orts et al.~\cite{orts2023improving} are the best as far as we are aware. Li et al.~\cite{li2022circuit} proposed a quantum multiplier with $16\kappa^2 -14\kappa$ $\mathsf{T}$ gates, $\kappa + 1$ ancillae, and $\mathsf{T}$-depth of $4\kappa^2 + 4\kappa + 4$. Orts et al.~\cite{orts2023improving}, on the other hand, proposed a quantum multiplier with $18\kappa^2 - 24\kappa$ $\mathsf{T}$ gates, $2\kappa^2 - 2\kappa + 2$ ancillae, and $\mathsf{T}$-depth of $14\kappa - 14$. Both $\mathsf{T}$-counts are comparable, while the trade-off is between ancillae and $\mathsf{T}$-depth.

\begin{table}[t]
    \footnotesize
    \centering
    \caption{State-of-the-art constructions for several quantum arithmetic circuits on $\kappa$-bit integers. All operations are out-of-place, modulo $2^\kappa$, and already include their inverses. The resources are broken down into $\mathsf{Toffoli}$-count, $\mathsf{Toffoli}$-width, reaction depth, qubit-width (ancillae plus input/output qubits), and active volume. Here $C_{|CCZ\rangle}$ is the active volume of distilling one $|CCZ\rangle$ state.}
    \label{table:arithmetics}
    \def\arraystretch{2}
    \resizebox{\linewidth}{!}{
    \begin{tabular}{ | c || c | c | c | c | c | }
        \hline

        Circuit / Resource & $\mathsf{Toffoli}$-count & $\mathsf{Toffoli}$-width & Reaction depth & Qubit-width & Active volume \\ \hline\hline

        Adder/Comparator & $\kappa-1$ & $1$ & $2(\kappa - 1)$ & $3\kappa$ & $(\kappa-1)(39+C_{|CCZ\rangle}) + 7$ \\ \hline

        Controlled adder & $2\kappa - 1$ & $\kappa$ & $2\kappa$ & $4\kappa+1$ & $(\kappa-1)(51 + C_{|CCZ\rangle}) + 19$ \\ \hline

        Multiplier & $\kappa^2 - \kappa + 1$ & $0.5\kappa^2 + 0.5\kappa$ & $2\kappa\log_2\kappa - 2\kappa - 2\log_2\kappa + 4$ & $2\kappa^2+\kappa$ & \makecell{$28\kappa^2 - 42\kappa + 28$ \\ $+(\kappa^2 - \kappa + 1)C_{|CCZ\rangle}$} \\ \hline

        Multiplier (hybrid) & $0.5\kappa^2 - 1.5\kappa + 1$ & $0.5\kappa$ & $2\kappa\log_2\kappa - 2\kappa - 2\log_2\kappa + 2$ & $1.5\kappa^2+0.5\kappa$ & \makecell{$20.25\kappa^2 - 48.75\kappa + 32$ \\ $+(0.5\kappa^2 - 1.5\kappa + 1)C_{|CCZ\rangle}$ }\\ \hline
 
    \end{tabular}
    }
\end{table}

Since we are mostly concerned with $\mathsf{Toffoli}$-count and are willing to use extra ancillae (including keeping dirty ones for subsequent uncomputation), we employ a quantum multiplier based on schoolbook multiplication with $\kappa-1$ out-of-place additions from \cref{table:arithmetics}. We note that a similar idea appeared before in~\cite{Sanders2020compilation}, although not modulo $2^\kappa$, and only very recently, by the time this manuscript was finalised, a similar construction with a similar $\mathsf{Toffoli}$-count was proposed by Litinski~\cite{litinski2024quantum}.

The multiplier works as follows. The input registers $|x_{\kappa-1},\dots,x_0\rangle$ and $|y_{\kappa-1},\dots,y_0\rangle$ are first copied $\kappa-1$ times: the bits $x_i$ and $y_i$ are copied $\kappa - 1 - i$ times, $i=0,\dots,\kappa-2$. This can be done with $\kappa^2 - \kappa$ $\mathsf{CNOT}$s in depth $\lceil\log_2{\kappa} \rceil$. We do not need to copy $x_i$ and $y_i$ a number of $\kappa-1$ times since the multiplication is done modulo $2^\kappa$ and high-order bits are ignored. We then perform $\kappa$ steps in parallel: in the $i$-th step, $i=0,\dots,\kappa-1$, the qubits $|x_i,\dots,x_0\rangle$ are copied onto fresh ancillae $|0\rangle^{\otimes (i+1)}$ controlled on one copy of $y_{\kappa-1-i}$ using $\mathsf{Toffoli}$ gates. At the end of this process, we have $\kappa$ registers holding all partial sums: the $i$-th one made up of $i+1$ bits, $i=0,\dots,\kappa-1$. Then, the $\kappa$ partial sums are added up using out-of-places adders until the final sum is computed. This can be done in any particular order, the amount of resources is left unchanged except for the reaction depth. The optimal reaction depth combination is tree-wise in $\lceil\log_2\kappa\rceil$ layers. For simplicity of analysis, let us assume the combination is done sequentially. More precisely, at layer $i=1,\dots,\kappa-1$, the sum of the previous layer, which has $i$ bits, is added onto the partial sum with $i+1$ bits, which requires an $i$-bit out-of-place adder (the least significant digit of the second register is just attached to the result register to form the $(i+1)$-bit answer). This means that the total $\mathsf{Toffoli}$-count (already taking into account the $\sum_{i=0}^{\kappa-1} (i+1) = (\kappa^2 + \kappa)/2$ $\mathsf{Toffoli}$ gates from the controlled copying) is 
\begin{align*}
    \frac{\kappa^2 + \kappa}{2} + \sum_{i=1}^{\kappa-1} (i-1) = \kappa^2 - \kappa + 1.   
\end{align*}
By keeping all dirty ancillae from the computation, the inverse circuit can be implemented with no $\mathsf{Toffoli}$ gates! Regarding ancillae, the initial copying requires $\kappa^2 - \kappa$ ancillae, while the controlled copying requires another $(\kappa^2 + \kappa)/2$ ancillae. The $\kappa-1$ out-of-place adders require $\sum_{i=1}^{\kappa-1} i = (\kappa^2 - \kappa)/2$ ancillae, $\kappa$ of which will be the output. There are thus $2\kappa^2 - 2\kappa$ dirty ancillae, and the total width is $2\kappa^2 + \kappa$ (ancillae plus $3\kappa$ input and output qubits). 

The active-volume calculation is similar to the $\mathsf{Toffoli}$-count. The $2\kappa^2 - 2\kappa$ $\mathsf{CNOT}$s (taking into consideration the inverse) have an active volume of $2\sum_{i=0}^{\kappa-2}(\frac{3}{2}(\kappa - 1 - i) + 2) = 1.5\kappa^2 + 2.5\kappa - 4$, while the $(\kappa^2 + \kappa)/2$ $\mathsf{Toffoli}$ gates from the controlled copying (plus inverse) have an active volume of $(14+ C_{|CCZ\rangle})(\kappa^2 + \kappa)/2$. Finally, the active volume of all adders is $\sum_{i=1}^{\kappa - 1}((i-1)(39+C_{|CCZ\rangle}) + 7) = 19.5\kappa^2 - 51.5\kappa + 32 + (0.5\kappa^2 - 1.5\kappa + 1)C_{|CCZ\rangle}$. Summing everything up yields the active volume of $28\kappa^2 - 42\kappa + 28 + (\kappa^2 - \kappa + 1)C_{|CCZ\rangle}$.

Concerning the reaction depth, assume for simplicity that $\kappa$ is a power of $2$. The controlled copying has reaction depth of $2$. On the other hand, the $\kappa - 1$ out-of-place adders are distributed in $\log_2\kappa$ layers, and at the $j$-th layer, $j=0,\dots,\log_2\kappa-1$, we sort the partial sums in increasing number of bits and add up the $i$-th partial sum with the $(\kappa/2^j - i + 1)$-th partial sum, $i=1,\dots, \kappa/2^{j+1}$. For example, at the $0$-th layer, the partial sum with $i$ bits is added to the partial sum with $\kappa-i+1$ bits, which requires an $i$-bit quantum adder, $i=1,\dots,\kappa/2$ (the $\kappa-2i+1$ least significant bits of the larger partial sum are simply attached to the result register to form the $\kappa$-bit answer). At the $j$-th layer, there are $\kappa/2^j$ partial sums, ranging from $1 + (1-2^{-j})\kappa$ bits to $\kappa$ bits. The longest addition at the $j$-th layer is the summation between the partial sums with $(1-2^{-j-1})\kappa$ and $1 + (1-2^{-j-1})\kappa$ bits, which requires an $(1-2^{-j-1})\kappa$-bit quantum adder with reaction depth of $2(1-2^{-j-1})\kappa - 2$. Therefore, the total reaction depth is
\begin{align*}
    2 + 2\sum_{j=0}^{\log_2\kappa - 1} ((1-2^{-j-1})\kappa - 1) = 2\kappa\log_2\kappa - 2\kappa - 2\log_2\kappa + 4.
\end{align*}

\paragraph*{Hybrid classical-quantum inputs.} In the case when one of the inputs is classical, say $y$, then the amount of resources decrease a bit. This is because there is no need to copy the registers $|x\rangle$ and $|y\rangle$ a number of $\kappa - 1$ times at the beginning, since $|y\rangle$ becomes classical and there is no need to parallelise $\mathsf{CNOT}$ gates. Moreover, the $\mathsf{Toffoli}$ gates used to controlled copy the register $|x_i,\dots,x_0\rangle$ using $y_{\kappa - 1 - i}$ become classically controlled $\mathsf{CNOT}$ gates. This means that the $\mathsf{Toffoli}$-count decreases by $(\kappa^2 + \kappa)/2$, while the $\mathsf{CNOT}$-count decreases to $\kappa^2 + \kappa$ (already taking the inverse into consideration), which have an active volume of $\sum_{i=0}^{\kappa-1}(\frac{3}{2}(\kappa - i) + 2) = 0.75\kappa^2 + 2.75\kappa$.

\section{Grover's quantum search algorithm} 
\label{sec:grover_search}

Unstructured search can be defined as follows. Given a function $f:\{0,1\}^n\to\{0,1\}$, find a marked element $x \in \{0,1\}^n$ such that $f(x) = 1$, or determine that with high probability, no such input exists. Classically this requires $\Theta(2^n)$ evaluations of $f$ to find such an input or determine it does not exist with high probability. By contrast, Grover~\cite{grover1996fast,grover97quantum} designed a quantum algorithm that finds a marked element with high probability and requires only $O(2^{n/2})$ calls to an binary oracle $\mathcal{U}_f$ that evaluates $f$ in superposition, $\mathcal{U}_f : |x\rangle|y\rangle \to |x\rangle|y\oplus f(x)\rangle$ for all $x\in\{0,1\}^n$ and $y\in\{0,1\}$. It was later shown~\cite{bennett97strengths,boyer1998tight,zalka99grover,beals01quantum} that Grover's algorithm is optimal for unstructured search.

Grover's algorithm is depicted in \cref{fig:grover}. By starting with the state $2^{-n/2}\sum_{x\in\{0,1\}^n} |x\rangle$ (which can be obtained from $|0\rangle^{\otimes n}$ by applying one layer $\mathsf{H}^{\otimes n}$ of $n$ Hadamard gates), the algorithm repeatedly applies the so-called \emph{Grover operator}
\begin{align*}
    \mathsf{G} = \mathsf{H}^{\otimes n} (2|0^n\rangle\langle 0^n| - \mathsf{I}_{2^n}) \mathsf{H}^{\otimes n} \mathcal{O}_f
\end{align*}
and then measures the state on the computational basis. The operator $\mathsf{D} := \mathsf{H}^{\otimes n} (2|0^n\rangle\langle 0^n| - \mathsf{I}_{2^n}) \mathsf{H}^{\otimes n}$ is called \emph{diffusion operator} and performs a conditional phase shift such that $|x\rangle \mapsto (-1)^{\mathbf{1}[x\neq 0^n]}|x\rangle$ for all $x\in\{0,1\}^n$. The oracle $\mathcal{O}_f$, on the other hand, is defined as $\mathcal{O}_f : |x\rangle \mapsto (-1)^{f(x)}|x\rangle$ for all $x\in\{0,1\}^n$. We note that it is possible to implement the phase oracle $\mathcal{O}_f$ from the binary operator $\mathcal{U}_f$ by simply applying $\mathcal{U}_f$ onto $|x\rangle|-\rangle$, where $|-\rangle := (|0\rangle - |1\rangle)/\sqrt{2}$.

Let $N = 2^n$ be the number of elements and $M$ the number of marked elements such that $f(x) = 1$. It can be shown~\cite{grover1996fast,grover97quantum,brassard2002quantum} that after $m$ iterations of $\mathsf{G}$, the probability of measuring a marked element is $\sin^2((2m+1)\theta)$, where $\sin^2\theta = \sqrt{M/N}$. Therefore, by using $m = \lfloor \frac{\pi}{4}\sqrt{N/M}\rfloor$ iterations, the measurement outcome will be a marked state with probability at least $1 - \frac{M}{N}$, which is sufficiently close to $1$ for $N \gg 1$. Each iteration requires one query to the oracle $\mathcal{O}_f$ (or $\mathcal{U}_f$) and one application of the diffusion operator. The diffusion operator, in turn, requires $2n$ Hadamard gates and one conditional phase $|x\rangle \mapsto (-1)^{\mathbf{1}[x\neq 0^n]}|x\rangle$, which is basically a slightly modified multi-controlled $\mathsf{Toffoli}$. More precisely, $2|0^n\rangle\langle 0^n| - \mathsf{I}_{2^n}$ equals $(\mathsf{X}^{\otimes n}\otimes \mathsf{I})(\mathsf{C}^{(n)}\text{-}\mathsf{X})(\mathsf{X}^{\otimes (n+1)})$ applied onto $|x\rangle|-\rangle$. The multi-controlled gate $\mathsf{C}^{(n)}\text{-}\mathsf{X}$ can be implemented using $n-1$ $\mathsf{Toffoli}$ gates and $n-2$ ancillae according to \cref{fact:mctoff} (among other resources). 

\begin{figure}[t]
    \centering
    \includegraphics[scale=0.7, trim={1.4cm, 19.2cm, 3.7cm, 0.8cm}, clip=true]{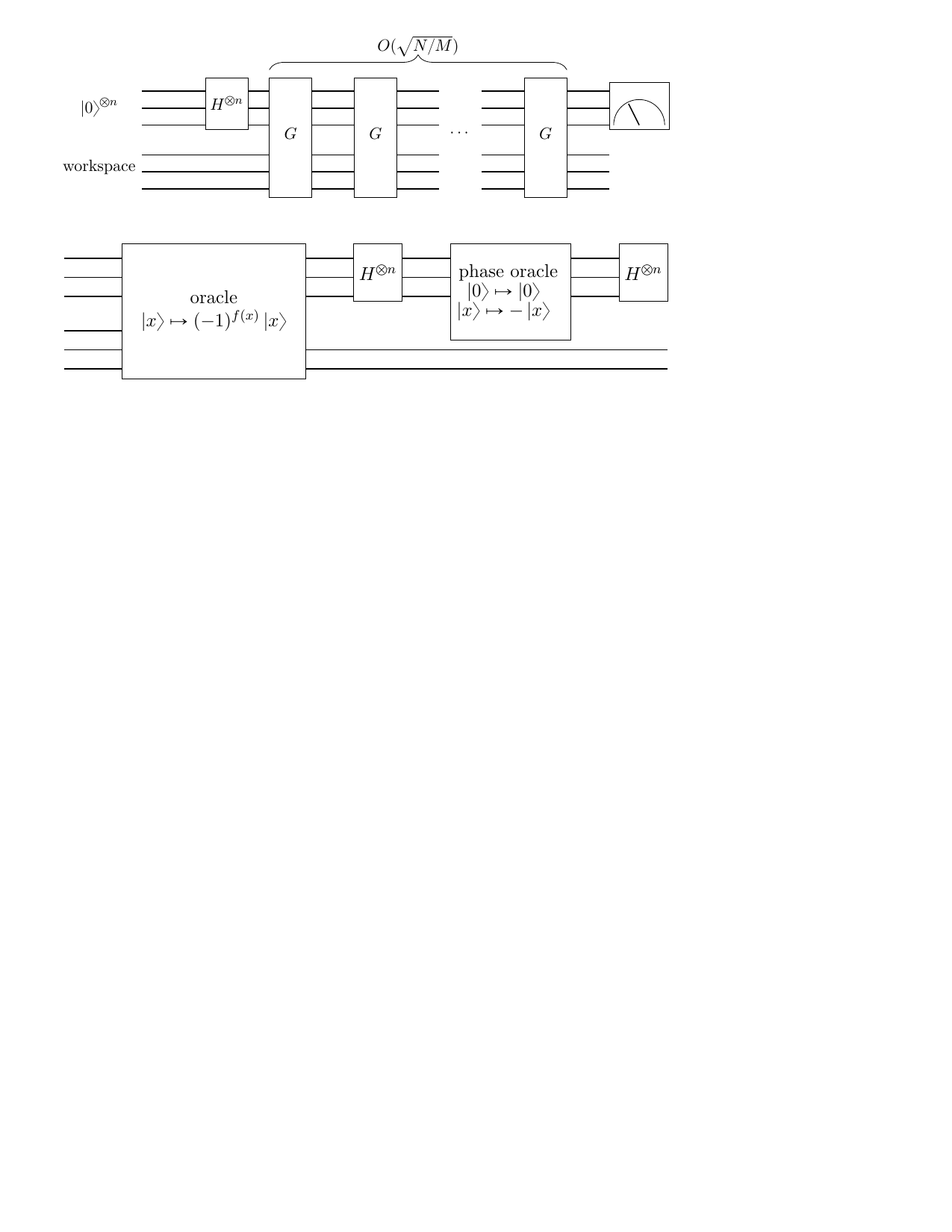}
    \caption{Circuit for Grover's search algorithm (top) and the Grover oracle $\mathsf{G}$ (bottom).}
    \label{fig:grover}
\end{figure}

We summarise the above discussion in the following result.
\begin{fact}[Grover's algorithm]\label{fact:grover_search}
    Let the positive integers $N = 2^n$ and $M$. Consider a Boolean function $f:\{0,1\}^n \to \{0,1\}$. Assume $M = |\{x\in\{0,1\}^n:f(x) = 1\}|$ is known and we have access to a quantum oracle $\mathcal{O}_f : |x\rangle \mapsto (-1)^{f(x)}|x\rangle$. Then it is possible to find one marked element of $f$ with probability at least $1 - \frac{M}{N}$ by using $\lfloor \frac{\pi}{4}\sqrt{N/M}\rfloor$ queries to $\mathcal{O}_f$ and the diffusion operator $\mathsf{D}$.
\end{fact}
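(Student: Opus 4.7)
The plan is to give the standard geometric proof of Grover's algorithm by exhibiting a two-dimensional invariant subspace on which the Grover operator acts as a rotation, and then tracking the amplitude of the ``marked'' component after $m$ iterations. Let $A = \{x \in \{0,1\}^n : f(x) = 1\}$ with $|A| = M$, and define
\begin{align*}
    |A\rangle = \frac{1}{\sqrt{M}} \sum_{x \in A} |x\rangle, \qquad |B\rangle = \frac{1}{\sqrt{N-M}} \sum_{x \notin A} |x\rangle.
\end{align*}
The initial uniform superposition $|s\rangle = \mathsf{H}^{\otimes n} |0^n\rangle = \tfrac{1}{\sqrt{N}} \sum_x |x\rangle$ decomposes as $|s\rangle = \sin\theta\, |A\rangle + \cos\theta\, |B\rangle$ with $\sin\theta = \sqrt{M/N}$. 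Both the phase oracle $\mathcal{O}_f$ (which flips the sign of $|A\rangle$ and fixes $|B\rangle$) and the diffusion $\mathsf{D} = 2|s\rangle\langle s| - \mathsf{I}$ (which fixes $|s\rangle$ and is a reflection about it) preserve $\mathrm{span}\{|A\rangle, |B\rangle\}$.

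Next I would argue that each is a reflection on this plane: $\mathcal{O}_f$ reflects about $|B\rangle$, and $\mathsf{D}$ reflects about $|s\rangle$. The composition of two reflections about lines separated by angle $\theta$ is a rotation by $2\theta$, so $\mathsf{G} = \mathsf{D}\,\mathcal{O}_f$ rotates the plane by angle $2\theta$ toward $|A\rangle$. By induction,
\begin{align*}
    \mathsf{G}^m |s\rangle = \sin((2m+1)\theta) |A\rangle + \cos((2m+1)\theta) |B\rangle,
\end{align*}
so measuring in the computational basis yields a marked element with probability $p_m = \sin^2((2m+1)\theta)$.

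To finish, I set $m = \lfloor \tfrac{\pi}{4} \sqrt{N/M} \rfloor = \lfloor \tfrac{\pi}{4\sin\theta} \rfloor$. The ``ideal'' value that makes $(2m+1)\theta = \pi/2$ is $\tilde{m} = \tfrac{\pi}{4\theta} - \tfrac{1}{2}$, and the strategy is to bound $|(2m+1)\theta - \pi/2| = 2|m - \tilde{m}|\,\theta$ and then use $\cos^2((2m+1)\theta) \le \sin^2\theta = M/N$, giving $p_m \ge 1 - M/N$. The hard part is precisely controlling this angular deviation: one has $m \le \tfrac{\pi}{4\sin\theta}$ and $m > \tfrac{\pi}{4\sin\theta} - 1$, and since $\sin\theta \le \theta$ the choice $m = \lfloor \tfrac{\pi}{4\sin\theta}\rfloor$ is slightly larger than $\lfloor \tfrac{\pi}{4\theta}\rfloor$, so one must check that the overshoot past $\pi/2$ is still bounded in absolute value by $\theta$. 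Writing $(2m+1)\theta - \pi/2 = 2\theta(m - \tilde{m})$ and using $\tilde{m} \ge \tfrac{\pi}{4\sin\theta} - \tfrac{1}{2}$ (because $\theta \ge \sin\theta$) shows $m - \tilde{m} \le \tfrac{1}{2}$, while $m - \tilde{m} \ge -\tfrac{1}{2}$ follows from $m > \tfrac{\pi}{4\sin\theta} - 1 \ge \tfrac{\pi}{4\theta} - 1$.

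Combining these bounds yields $|(2m+1)\theta - \pi/2| \le \theta$, and since $\cos$ is even and decreasing on $[0,\pi/2]$,
\begin{align*}
    \cos^2((2m+1)\theta) \le \cos^2(\pi/2 - \theta) = \sin^2\theta = M/N,
\end{align*}
so $p_m \ge 1 - M/N$ as claimed. Finally, each iteration of $\mathsf{G}$ invokes $\mathcal{O}_f$ once and $\mathsf{D}$ once, yielding the stated query count. The main conceptual subtlety is the two-dimensional invariance argument; the main technical obstacle is the rounding analysis above, since using $\sqrt{N/M}$ instead of the ``tighter'' $1/\theta$ in the iteration count forces one to exploit the inequality $\theta \ge \sin\theta$ in both directions to keep the angle within the target window.
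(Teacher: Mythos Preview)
Your geometric set-up (the two-dimensional invariant subspace, the rotation by $2\theta$, and the formula $p_m=\sin^2((2m+1)\theta)$ with $\sin\theta=\sqrt{M/N}$) is exactly the standard argument and is all the paper invokes here: the paper does not prove this Fact but just cites Grover and Brassard--H\o{}yer--Mosca--Tapp for $p_m=\sin^2((2m+1)\theta)$ and then asserts the conclusion. Up to that point your proposal is fine and matches the paper.

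The gap is in your rounding step. You claim ``$\tilde m \ge \tfrac{\pi}{4\sin\theta}-\tfrac12$ (because $\theta\ge\sin\theta$)'', but the inequality is reversed: $\theta\ge\sin\theta$ gives $\tfrac{1}{\theta}\le\tfrac{1}{\sin\theta}$, hence $\tilde m=\tfrac{\pi}{4\theta}-\tfrac12\le\tfrac{\pi}{4\sin\theta}-\tfrac12$. So from $m\le\tfrac{\pi}{4\sin\theta}$ you \emph{cannot} conclude $m-\tilde m\le\tfrac12$, and the overshoot $(2m+1)\theta-\pi/2$ can in fact exceed $\theta$. This is not merely a repairable slip: take $N=2^{13}$, $M=1261$; then $\sin\theta\approx 0.3923$, $\theta\approx 0.4032$, $m=\lfloor\tfrac{\pi}{4}\sqrt{N/M}\rfloor=2$, $(2m+1)\theta\approx 2.016>\pi/2+\theta$, and $p_m\approx 0.815<1-M/N\approx 0.846$. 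The clean bound $p_m\ge 1-M/N$ is what one obtains with $m=\lfloor\pi/(4\theta)\rfloor$ (for which $|m-\tilde m|\le\tfrac12$ is immediate from the floor); with the iteration count $\lfloor\tfrac{\pi}{4}\sqrt{N/M}\rfloor$ stated in the Fact one only gets $p_m\ge 1-O(M/N)$. In other words, the Fact as phrased is a common slight overstatement, and your argument cannot be tightened to deliver exactly $1-M/N$ for that choice of $m$.
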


\cref{fact:grover_search} assumes that the number of solutions is known beforehand, which is usually not the case. Nonetheless, there is a variant of Grover's algorithm due to Boyer, Brassard, H\o{}yer, and Tapp~\cite{boyer1998tight} (see also~\cite{zalka1999grover}) that applies to the case when the number of solutions is not known ahead of time. The main idea of their algorithm is to start with some parameter $m$, choose an integer $j$ uniformly at random such that $0\leq j < m$, and perform $j$ iterations of Grover's search. If it does not return a solution, the value $m$ is increased to $\lambda m$ for any constant $1 < \lambda < 4/3$ and the procedure is repeated. Boyer et al.~\cite{boyer1998tight} showed that this algorithm finds a solution (or determines that no solution exists) with high probability in expected time $O(\sqrt{N/M})$. A very thoroughly analysis of Grover's algorithm has been done by Cade, Folkertsma, Niesen, and Weggemans~\cite{Cade2023quantifyinggrover}, which we quote next and expand with the necessary resources for the diffusion operator.

\begin{fact}[{\cite[Lemma~4]{Cade2023quantifyinggrover}}]\label{fact:grover_search_unknown}
    Let $\delta\in(0,1)$ and the positive integer $N = 2^n$. Consider a Boolean function $f:\{0,1\}^n \to \{0,1\}$ with $|\{x\in\{0,1\}^n:f(x) = 1\}| = M$, where $0\leq M < N/4$ and the value $M$ is not necessarily known. Assume we have access to a quantum oracle $\mathcal{O}_f : |x\rangle \mapsto (-1)^{f(x)}|x\rangle$ and let $\mathsf{D} := \mathsf{H}^{\otimes n} (2|0^n\rangle\langle 0^n| - \mathsf{I}_{2^n}) \mathsf{H}^{\otimes n}$ be the diffusion operator. Then there is a quantum algorithm that, with probability at least $1 - \delta$,
    \begin{itemize}
        \item returns $x\in\{0,1\}^n$ such that $f(x) = 1$ if such a solution exists by using an expected number $\lceil 3.1\sqrt{N/M}\rceil$ of queries to $\mathcal{O}_f$ and $\mathsf{D}$,
        \item or concludes that no such solution exists by using $\lceil 9.2\sqrt{N}\log_{3}(1/\delta) \rceil$ queries to $\mathcal{O}_f$ and~$\mathsf{D}$.
    \end{itemize}
    Moreover, one call to the diffusion operator requires $n-1$ $\mathsf{Toffoli}$ gates and $n-1$ ancillae, and has reaction depth of $2\lceil\log_2{n}\rceil$, $\mathsf{Toffoli}$-width of $\lfloor n/2\rfloor$, and active volume of $(n-1)(18 + C_{|CCZ\rangle})$, where $C_{|CCZ\rangle}$ is the active volume of distilling one $|CCZ\rangle$ state.
\end{fact}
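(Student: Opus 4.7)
The plan is to treat the two assertions of the statement separately. For the query bounds, I would invoke the exponential-search analysis of Cade, Folkertsma, Niesen, and Weggemans \cite{Cade2023quantifyinggrover} essentially verbatim: their algorithm iterates, at round $k$, a Grover search with a number of iterations drawn uniformly from $\{0,\dots,\lceil m_k\rceil-1\}$ where $m_k \leftarrow \lambda m_{k-1}$ for some $\lambda\in(1,4/3)$, stopping either when a marked element is returned and verified by a single oracle call, or when the cumulative iteration budget exceeds a precomputed threshold. A careful expectation computation in their Lemma~4 yields the $\lceil 3.1\sqrt{N/M}\rceil$ bound for the ``solution exists'' branch and the $\lceil 9.2\sqrt{N}\log_{3}(1/\delta)\rceil$ budget for the ``no solution'' branch, with overall failure probability at most $\delta$. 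No modification is needed on my end.

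For the resource counts of $\mathsf{D} = \mathsf{H}^{\otimes n}(2|0^n\rangle\langle 0^n| - \mathsf{I}_{2^n})\mathsf{H}^{\otimes n}$, I would first discard the outer Hadamard layers, which are Clifford and therefore contribute nothing to the Toffoli count, ancilla count, or active-volume leading constants. As noted in the excerpt, the interior reflection, when applied with a $|-\rangle$ ancilla on the target, equals $(\mathsf{X}^{\otimes n}\otimes \mathsf{I})(\mathsf{C}^{(n)}\text{-}\mathsf{X})(\mathsf{X}^{\otimes (n+1)})$ up to a global phase, so the entire non-Clifford cost reduces to that of a single $\mathsf{C}^{(n)}\text{-}\mathsf{X}$ plus the $|-\rangle$ register. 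Invoking \cref{fact:mctoff} with the balanced binary AND-tree compilation then gives $n-1$ Toffolis and $n-2$ additional ancillae, totalling $n-1$ ancillae once the $|-\rangle$ register is counted. The tree has $\lceil\log_2 n\rceil$ layers with at most $\lfloor n/2\rfloor$ Toffolis per layer (the first layer, in which the $n$ control qubits are paired up), giving the stated $\mathsf{Toffoli}$-width; using Gidney-style measurement-based uncomputation of the tree then yields reaction depth $2\lceil\log_2 n\rceil$ (one reactive layer per compute layer, one per uncompute layer).

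For the active volume I would use the per-Toffoli cost in the active-volume architecture of \cite{litinski2022active}, namely $12+C_{|CCZ\rangle}$ for the $|CCZ\rangle$ injection itself, plus the additional per-Toffoli overhead needed to route the tree's intermediate ancillae across logical blocks and to implement the classical-feedback corrections inherent to the measurement-based uncomputation; summing these contributions over the $n-1$ Toffolis in the tree gives exactly $(n-1)(18+C_{|CCZ\rangle})$. The hardest step to get right is precisely this constant of $18$: it requires a tile-by-tile accounting of the lattice-surgery hops between tree ancillae and of the measurement-based uncomputation corrections in the style of Litinski and Nickerson, rather than any purely algorithmic argument. Everything else reduces to applying facts already quoted earlier in the excerpt.
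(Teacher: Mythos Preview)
Your proposal is correct and matches the paper's approach exactly. The paper does not supply a standalone proof of this fact: the query bounds are quoted directly from \cite{Cade2023quantifyinggrover}, and the diffusion-operator resources are obtained precisely as you describe, by reducing $2|0^n\rangle\langle 0^n|-\mathsf{I}_{2^n}$ to a single $\mathsf{C}^{(n)}\text{-}\mathsf{X}$ on a $|-\rangle$ target (as spelled out in the paragraph immediately preceding the fact) and then invoking \cref{fact:mctoff} together with the active-volume accounting of \cite{litinski2022active}.
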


We mention that there exists an exact version of Grover's algorithm that succeeds with probability~$1$ (see~\cite[Section~2.1]{brassard2002quantum} and~\cite[Exercise~7.5]{de2019quantum}). However, even though this version is query efficient, it is not necessarily gate efficient, therefore we will not use it.

\section{Quantum random access memory (QRAM)}
\label{sec:qram}

A quantum random access memory ($\mathsf{QRAM}$) is the quantum analogue of the classical random access memory (RAM) device which allows access to classical or quantum data in superposition. From a architecture perspective, a $\mathsf{QRAM}$ of size $2^n$ and precision $\kappa$ is composed of a $\kappa 2^n$-(qu)bit memory register that stores either $\kappa$ bits or qubits in each of $2^n$ different cells, an $n$-qubit address register which points to the memory cell to be addressed, a $\kappa$-qubit target address into which the content of the addressed memory cell is copied, and an $O(2^n)$-qubit auxiliary register that intermediates the copying of the memory register into the target register controlled on the address register. For more details on the architecture of $\mathsf{QRAM}$s, we point the reader to~\cite{hann2021practicality,phalak2023quantum,jaques2023qram,allcock2023constant}.

In general, a $\mathsf{QRAM}$ allows access to either classical or quantum data stored in some register. Throughout this paper, we shall work exclusively with $\mathsf{QRAM}$s that access \emph{classical} data, which are sometimes referred to as quantum random access classical memory ($\mathsf{C\text{-}QRAM}$ or $\mathsf{QRACM}$). For simplicity, we stick to the usual $\mathsf{QRAM}$ nomenclature. Moreover, we shall consider $\mathsf{QRAM}$ calls that keep a garbage register (dirty ancillae) in order to aid their uncomputation at latter stages.
\begin{definition}[Quantum random access memory ($\mathsf{QRAM}$)]
    A $\mathsf{QRAM}$ of size $2^n$ and precision $\kappa$ is a device that stores classical, indexed data $\{x_i \in \{0,1\}^\kappa: i\in[2^n]\}$ and allows oracle queries
    \begin{align*}
        \mathsf{QRAM}:\ket{i}\ket{0}^{\otimes \kappa}|\bar{0}\rangle \mapsto \ket{i}\ket{x_i}|{\rm garbage}_i\rangle, \quad \forall i\in[2^n].
    \end{align*}
\end{definition}

The first architectures for $\mathsf{QRAM}$ were proposed and formalised in~\cite{giovannetti2008quantum,giovannetti2008architectures}, namely the Fan-Out and bucket-brigade architectures. In these architectures, the memory register is accessed by a binary tree of size $O(2^n)$ and depth $n$. Each qubit of the address register controls the direction from the root down to the correct memory cell within the binary tree, i.e., the $k$-th address qubit specifies whether to go left or right at a router on the $k$-th level of the binary tree. The target is sent down the tree and is routed controlled on the address qubits at each level until the memory register, at which point the information is copied into the target and the target is sent back up the tree. The difference between the Fan-Out and bucket-brigade architectures is in how the target qubits are routed down the binary tree. We point out the reader to~\cite{giovannetti2008quantum,giovannetti2008architectures,arunachalam2015robustness,hann2021resilience} for more information.

\begin{figure}[t]
    \centering
    \includegraphics[scale=0.65, trim={1.4cm, 7.2cm, 3.7cm, 0.8cm}, clip=true]{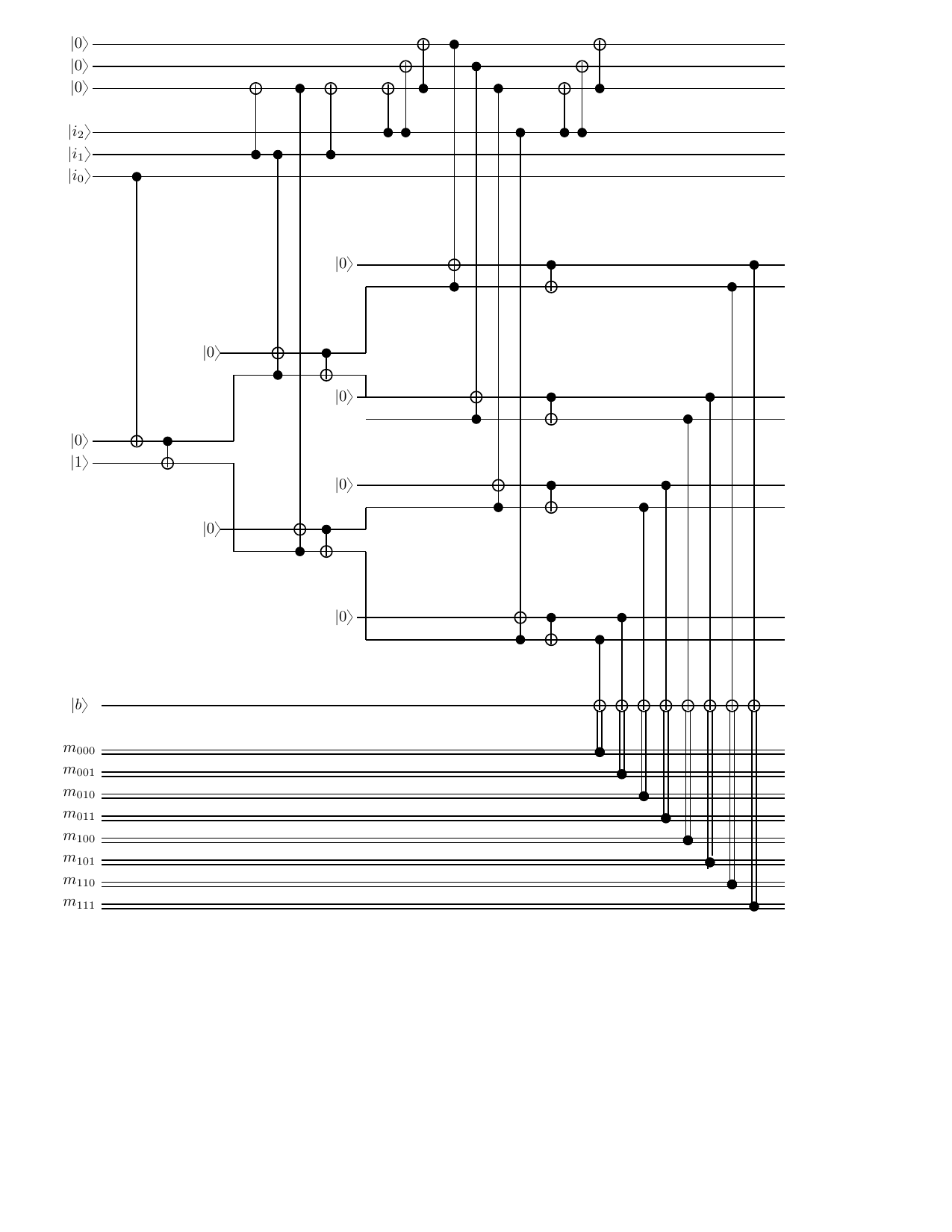}
    \caption{The bucket-brigade $\mathsf{QRAM}$ circuit from Arunachalam et al.~\cite{arunachalam2015robustness}. In every layer, before the parallel layer of $\mathsf{Toffoli}$ gates, a log-depth linear-size gadget copy the index register so the $\mathsf{Toffoli}$ gates can be executed in parallel.}
    \label{fig:bucket-brigade}
\end{figure}

Here we shall be agnostic regarding the underlying architecture of a $\mathsf{QRAM}$ and shall work with the circuit model instead. We assume nonetheless that the contents of the memory are stored statically, meaning that the classical data is stored in an external physical hardware, e.g., a tape, which is quantumly queried. This is accomplished by applying classically controlled $\mathsf{CNOT}$ gates onto the target qubit with one classical control (a bit from the memory) and one quantum control (a qubit from the last layer of the binary tree). We show a circuit implementation of $\mathsf{QRAM}$ in \Cref{fig:bucket-brigade}. Moreover, we also assume that the classical memory can be updated independently from the $\mathsf{QRAM}$ device itself. In other words, $m$ different cells from the classical memory can be rewritten in time $O(\kappa m)$ without the need to update the remaining registers from the $\mathsf{QRAM}$. This differs from Quantum Read-Only Memory ($\mathsf{QROM}$) or table lookups~\cite{babbush2018encoding} which usually encode the memory content into the circuit~layout.

The fault-tolerance resources required to implement a $\mathsf{QRAM}$ have been studied by a few works~\cite{di2020fault,Low2024tradingtgatesdirty,litinski2022active,litinski2023compute}. Di Matteo, Gheorghiu, and Mosca~\cite{di2020fault} studied the amount of $\mathsf{T}$ gates in bucket-brigade style $\mathsf{QRAM}$s, while Low, Kliuchnikov, and Schaeffer~\cite{Low2024tradingtgatesdirty} proposed a $\mathsf{T}$-efficient sequential $\mathsf{QROM}$ circuit. Litinski and Nickerson~\cite{litinski2022active} worked out the active volume of Low et al.\ proposal. Here we employ a bucket-brigade $\mathsf{QRAM}$ due to its exponentially smaller reaction depth compared to Low, Kliuchnikov, and Schaeffer's $\mathsf{QROM}$.
\begin{lemma}[Bucket-brigade $\mathsf{QRAM}$]\label{lem:qram_resources}
    One bucket-brigade $\mathsf{QRAM}$ call of size $2^n$ and precision $\kappa$ requires (already including its uncomputation) $2^n - 2$ $\mathsf{Toffoli}$ gates, $2^{n+1} - n - 1$ dirty ancillae (plus $n+\kappa$ input/output qubits), and has $\mathsf{Toffoli}$-width of $2^{n-1}$, reaction depth of $2(n-1)$, and active volume of $(25 + 1.5\kappa + C_{|CCZ\rangle})2^n$.
\end{lemma}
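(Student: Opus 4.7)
My plan is to analyze the bucket-brigade circuit depicted in \cref{fig:bucket-brigade} layer by layer and tally each resource separately. The circuit forms a depth-$n$ binary tree of routers: at tree level $k$ (for $k = 1, \ldots, n-1$), a log-depth $\mathsf{CNOT}$ fan-out gadget first prepares copies of the $k$-th address qubit, which then control a layer of parallel routing Toffolis; the final level is a classically controlled $\mathsf{CNOT}$ load that copies the $\kappa$-bit memory word of the selected leaf into the target register, and the uncomputation mirrors the forward sweep, with the Toffoli uncomputation performed via a measurement-based trick so that uncomputation contributes no extra Toffolis.

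First I would count the Toffolis by summing the routing contributions across all tree levels, which gives $2^n - 2$ gates; the Toffoli-width $2^{n-1}$ is then simply the size of the deepest routing layer, and the reaction depth $2(n-1)$ comes from $n-1$ sequential Toffoli layers in each of the forward and uncomputation sweeps (the $\mathsf{CNOT}$ fan-outs and the classical data-load layer are non-reactive). For the dirty-ancilla count, I would observe that the tree carries $2^{n+1} - 1$ intermediate qubits that remain entangled with the address register after the query; subtracting the $n$ address qubits, which together with the $\kappa$-qubit target form the $n + \kappa$ input/output qubits, yields the claimed $2^{n+1} - n - 1$ dirty ancillae.

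For the active volume, I would combine the Toffoli contribution $(14 + C_{|CCZ\rangle})(2^n - 2)$ with the active volume of the $\mathsf{CNOT}$ fan-out cascades at each tree level (summing to an $O(2^n)$-scale term) and the classically controlled $\mathsf{CNOT}$s that load the $\kappa$ bits per memory cell, which contribute roughly $1.5\kappa \cdot 2^n$ by the per-$\mathsf{CNOT}$ active-volume accounting used in \cref{sec:arithmetic}. Collecting the leading $2^n$-scale terms yields $(25 + 1.5\kappa + C_{|CCZ\rangle})2^n$. I expect the main obstacle to be this last step: getting the active-volume constants right requires carefully tracking every forward and uncomputation $\mathsf{CNOT}$ (both quantum- and classically controlled) together with their idle-qubit footprints and any Bell-measurement rebates available in the active-volume architecture, and verifying that subleading $O(n)$ corrections absorb cleanly into the stated $O(2^n)$ expression.
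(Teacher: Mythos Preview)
Your proposal is correct and follows essentially the same decomposition as the paper's proof, which likewise declares the non--active-volume resources straightforward and then splits the active volume into routing-Toffoli blocks, address-fan-out $\mathsf{CNOT}$s, and the classically controlled data-load $\mathsf{CNOT}$s (the latter contributing $\approx 1.5\kappa\cdot 2^n$ since on average only half of them fire). The one detail to watch when you work out the constants is that each of the $2^n-2$ routing nodes in \cref{fig:bucket-brigade} carries a $\mathsf{CNOT}$ \emph{in addition} to its Toffoli---the paper bundles them into blocks of active volume $14 + 2\cdot 4 + C_{|CCZ\rangle}$---which is what supplies the remaining $8$ in the leading constant $25$; your outline mentions only the fan-out $\mathsf{CNOT}$s, but since you already flag the per-$\mathsf{CNOT}$ bookkeeping as the expected sticking point, this is precisely the detail you would catch on carrying out the calculation.
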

\begin{proof}
    All the resources apart from the active volume are straightforward. In the following, we already take the uncomputation into consideration. A bucket-brigade $\mathsf{QRAM}$ can be divided into $2^n - 2$ blocks made up of one $\mathsf{Toffoli}$ and one $\mathsf{CNOT}$ gate and having active volume $14 + 2\cdot 4 + C_{|CCZ\rangle}$; $\kappa\cdot 2^n$ classically controlled $\mathsf{CNOT}$s with average active volume of $(\frac{3}{2}2^n + 1)\kappa$ (since on average half the $\mathsf{CNOT}$s is actually performed); and $2^n - n - 1$ $\mathsf{CNOT}$s to copy the address register with active volume of $2\sum_{i=0}^{n-1}(\frac{3}{2}(2^i - 1) + 2) = 3 \cdot 2^n + n - 3$. Summing all active volumes yields the result after some simple approximations.
\end{proof}


\section{The shortest vector problem and sieving algorithms}
\label{sec:sieving_algorithms}

The most important problem on lattices and that underlies many lattice-based cryptography functions~\cite{ajtai1997public,regev2004new,regev2006lattice,Micciancio2009lattice} is the \emph{shortest vector problem} (SVP). Given a set $\mathbf{B} = \{ \mathbf{b}_1, \ldots, \mathbf{b}_N \} \subset \R^D$ of $N$ linearly independent vectors, the set 
\begin{align*}
    \L(\mathbf{B}) := \left\{\sum_{j=1}^{N} \lambda_j \mathbf{b}_j : \lambda_1,\dots,\lambda_N \in \Z\right\}
\end{align*}
of all integer linear combinations of $\mathbf{B}$ is called the lattice associated with $\mathbf{B}$. The set $\mathbf{B}$ is called the basis of the lattice, while the integers $N$ and $D$ are its rank and dimension, respectively. In this work, we consider \emph{full rank} lattices, which is the case when $N = D$. The minimum distance $\lambda(\mathcal{L})$ of a lattice $\mathcal{L}$ is the length of its shortest non-zero lattice vector, $\lambda(\mathcal{L}) := \min\{\|\mathbf{x}\|:\mathbf{x}\in\mathcal{L}\setminus\{\mathbf{0}\}\}$. We shall abuse notation and write $\lambda(\mathbf{B})$ instead of $\lambda(\mathcal{L}(\mathbf{B}))$.
\begin{definition}[Shortest vector problem]
    Given a lattice basis $\mathbf{B}\in\mathbb{R}^{D\times D}$, find $\mathbf{x}\in\mathcal{L}(\mathbf{B})$ such that $\|\mathbf{x}\| = \lambda(\mathbf{B})$.
\end{definition}

SVP is known to be NP-hard under randomised reductions~\cite{Ajtai1998shortest,Micciancio1998shortest,Micciancio2001shortest} given an arbitrary basis of an arbitrary lattice. Even the approximate version of SVP, wherein one is tasked to find a lattice vector with norm at most $(1+\epsilon)\lambda(\mathbf{B})$ for $\epsilon>0$, is known to be NP-hard~\cite{Khot2004hardness,Khot2005hardness}. Nonetheless, several exponential-time algorithms have been proposed in the past few decades to tackle SVP. There are currently three main methodologies: enumeration~\cite{fincke1985improved,Kannan1983improved,Pohst1981computation}, sieving~\cite{ajtai2001sieve,Ajtai2001overview,micciancio2010faster,aggarwal2015solving}, and constructing the Voronoi cell of the lattice~\cite{Agrell2002closest,micciancio2010deterministic}. Whereas enumeration has a polynomial space complexity but a superexponential time complexity $O(2^{D\log{D}})$ on the dimension $D$ of the lattice~\cite{Pohst1981computation,Schnorr1991lattice,Schnorr1994Lattice,Kannan1983improved}, the remaining methods all have both exponential space and time complexities.

In this section, we focus on and review the major sieving algorithms since their introduction by Ajtai, Kumar, and Sivakumar~\cite{ajtai2001sieve,Ajtai2001overview}. Sieving algorithms work by sampling a long list $L = \{\mathbf{v}_1,\dots,\mathbf{v}_m\}$ of lattice vectors (either initially or during the algorithm) and considering all pair-wise differences $\mathbf{v}_i \pm \mathbf{v}_j\in\mathcal{L}(\mathbf{B})$ from the list. Most of these combinations result into longer vectors than the initial vectors $\mathbf{v}_i$ and $\mathbf{v}_j$, but some lead to shorter vectors. By keeping the resulting shorter vectors into a new list, progress is made into finding the shortest vector. The step of combining lattice vectors from a list in order to form a new list with shorter lattice vectors is called \emph{sieving}. We hope that, if a substantially large number of lattice vectors is sampled, then several sieving steps will result into a small list that contains the shortest vector.

Whereas Ajtai, Kumar, and Sivakumar~\cite{ajtai2001sieve,Ajtai2001overview} originally proved that sieving can solve SVP in time and space $2^{\Theta(D)}$,  later works improved their results and showed that sieving can provably solve SVP in time $2^{2.465D + o(D)}$ and space $2^{1.233D + o(D)}$~\cite{nguyen2008sieve,Pujol2009solving,Hanrot2011algorithms}. At first glance, these provable bounds suggest that sieving algorithms would perform poorly in practice, and that solving SVP on dimension beyond $50$ would be  impractical. Experimental works suggest otherwise and that sieving algorithms perform well in practice. This has led to new sieving proposals that can tackle SVP under \emph{heuristic} assumptions. The first proposal for a heuristic sieving algorithm was given by Nguyen and Vidick~\cite{nguyen2008sieve}, which we now review. In what follows, we assume that all the vectors have coordinates described using $\kappa$-bits. 

\subsection{The Nguyen-Vidick sieve}

Nguyen and Vidick~\cite{nguyen2008sieve} proposed the first sieving algorithm that relies on heuristic assumptions. A version of the Nguyen-Vidick sieve ($\mathtt{NVSieve}$) that already incorporates Grover's algorithm is depicted in \cref{alg:nv_sieve} (cf.~\cite[Algorithm~2]{Laarhoven2016search}). The first step is to sample a list $L$ of lattice vectors using, e.g., Klein's algorithm~\cite{klein00, GPV08trapdoors}, which samples lattice vectors from a distribution that is statistically close to a discrete Gaussian on a lattice with a reasonably small variance. 
A sieving process is then applied onto $L$ to reduce pairs by considering the differences $\mathbf{v} - \mathbf{w}$ of pairs of lattice vectors $\mathbf{v},\mathbf{w}\in L$. If $\mathbf{v} - \mathbf{w}$ yields a shorter vector than $\mathbf{v},\mathbf{w}$, it is stored in a new list $L'$. Instead of considering all pair-wise combinations of vectors from the list $L$, the $\mathtt{NVSieve}$ keeps a list of centers $S\subset L$, each covering a part of the space. Each vector $\mathbf{v}\in L$ from the list is thus combined with vectors $\mathbf{w}\in S$ from the list of centers. If the result is a shorter vector, $\mathbf{v} - \mathbf{w}$ is added to new list $L'$, otherwise the initial list vector $\mathbf{v}\in L$ is added to the list of centers $S$ to cover a part of the space which was previously left uncovered. At the end of the sieving step, $L\gets L'$. After many sieving steps as necessary, the list $L$ contains the shortest lattice vector or is left empty, in which case the whole algorithm is repeated. 

Under the heuristic assumption that the angle between two list vectors $\mathbf{v},\mathbf{w}\in L$ follows the same distribution as the angle between two uniformly random vectors over the unit sphere, Nguyen and Vidick~\cite{nguyen2008sieve} proved that an initial list of size $(4/3)^{D/2 + o(D)} = 2^{0.208D + o(D)}$ suffices to find the shortest vector. This bounds the space complexity of the $\mathtt{NVSieve}$. On the other hand, the time complexity is dominated by comparing every list vector $\mathbf{v}\in L$ to a center vector $\mathbf{w}\in S$, and since the number of center vectors is asymptotically equivalent to the number of list vectors, this means that the $\mathtt{NVSieve}$ solves SVP in time $2^{0.415D + o(D)}$. 

\begin{algorithm}[t]
    \caption{The Nguyen-Vidick sieve}\label{alg:nv_sieve}
    \SetKwFor{ForEach}{for each}{do}{endfch}
    \SetKwComment{Comment}{/* }{ */}
    
    \KwIn{Basis $\mathbf{B}$ for a $D$-dimensional lattice and parameter $\gamma \in (0,1)$}
    \KwOut{Shortest vector $\mathbf{v}^\ast$ of the lattice.}

    Sample $L\subset \mathbb{R}^D$

    \While{$L \neq \emptyset$}
    {
        $L_0 \gets L$, $L' \gets \emptyset$, $S \gets \{\mathbf{0}\}$, $R\gets \max_{\mathbf{v}\in L}\|\mathbf{v}\|$

        \ForEach{$\mathbf{v}\in L$}
        {
            $\mathbf{w} \gets \mathtt{GroverSearch}(\mathbf{w}\in S : \|\mathbf{v} - \mathbf{w}\| \leq \gamma R)$ \label{line:Grover_nv_sieve}
    
            \If(\tcp*[f]{$\exists\mathbf{w}\in S  : \|\mathbf{v} - \mathbf{w}\| \leq \gamma R$}\label{line:nvsieve_line6}){$\mathbf{w} \neq \mathtt{NULL}$} 
            {
                $L' \gets L'\cup\{\mathbf{v} - \mathbf{w}\}$ \label{line:nvsieve_line7}
            }
            \Else(\tcp*[f]{$\nexists\mathbf{w}\in S  : \|\mathbf{v} - \mathbf{w}\| \leq \gamma R$}\label{line:nvsieve_line8}) 
            {
                $S \gets S\cup \{\mathbf{v}\}$ \label{line:nvsieve_line9}
            }
        }
        $L \gets L'$
    }    
    \Return shortest vector $\textbf{v}^\ast$ in $L_0$
\end{algorithm}

\subsubsection{Numerical experiments and heuristic assumptions}
\label{sec:heuristics_nvsieve}

The asymptotic complexity hides a lot of details, especially in case of a quantum algorithm. We want a more refined analysis of the runtime of this algorithm. Since the analysis of the $\mathtt{NVSieve}$ relies on heuristic assumptions, i.e., that vectors in $L\cap B_D(\gamma, R)$ are uniformly distributed in $B_D(\gamma, R) := \{\mathbf{v}\in\mathbb{R}^D: \gamma R\leq \|\mathbf{v}\| \leq R\}$ (where $R = \max_{\mathbf{v}\in L}\|\mathbf{v}\|$), quantities like the number of sieving steps or the evolution of the list size $|L|$ can behave as random variables and are thus not determined beforehand. Nonetheless, it is possible to assert average trends and worst-case bounds through plausible assumptions and numerical experiments. In the following, we list several observations that shall be useful in forming assumptions.
\begin{enumerate}
    \item Nguyen and Vidick~\cite{nguyen2008sieve} proved that, given $\gamma\in (2/3,1)$, the maximum size of the list of centers $S$ is upper-bounded by $N_S = \lceil 3\sqrt{2\pi}(D+1)^{3/2}(\gamma\sqrt{1-\gamma^2/4})^{-D}\rceil$. By letting $\gamma \to 1$, then $N_S \to \lceil 3\sqrt{2\pi}(D+1)^{3/2}(4/3)^{D/2}\rceil$. Experimentally, Nguyen and Vidick~\cite{nguyen2008sieve} observed that the size of $S$ is upper-bounded by $\ln|S| \leq aD + b\ln{D} + c$ where $a=0.163(\pm 0.017)$, $b=0.102(\pm 0.65)$, and $c=1.73(\pm 1.72)$ if $\gamma = 0.97$. In other words, $|S| \leq 2^{0.2352D + 0.102\log_2{D} + 2.45}$.
    
    \item In practice, one samples an initial list $L$ of considerable size and runs the $\mathtt{NVSieve}$. If the shortest vector is not found and the $\mathtt{NVSieve}$ thus fails, the whole procedure is restarted but with a larger initial list. Given numerical experiments from~\cite{nguyen2008sieve} and also conducted by us, an initial list $L$ of size $D$ times that of $|S| \leq 2^{0.2352D + 0.102\log_2{D} + 2.45}$ suffices. Alternatively, $|L| = \lceil 3\sqrt{2\pi}(D+1)^{3/2}(4/3)^{D/2}\rceil$ also works.
    
    \item As pointed out by Nguyen and Vidick~\cite{nguyen2008sieve}, the list size $|L|$ decreases roughly by $(\gamma\sqrt{1-\gamma^2/4})^{-D}$ at each sieving step, provided the vectors in $L$ are well distributed in $B_D(\gamma, R)$. Indeed, in \Cref{line:nvsieve_line6,line:nvsieve_line7,line:nvsieve_line8,line:nvsieve_line9} from \Cref{alg:nv_sieve}, each $\mathbf{v}\in L$ is either selected to $L'$ (reduced by $\mathbf{w}$) or to $S$, and thus $|L'| \approx |L| - |S|$ if there are few collisions ($\mathbf{v} - \mathbf{w} = \mathbf{0}$). Numerical experiments from~\cite[Figure~2]{nguyen2008sieve} show that the number of collisions is negligible until $R/\lambda(\mathbf{B}) \approx 4/3$. Therefore, for most sieving steps, the size of $L$ is reduced by the size of $S$, which is at most $2^{0.2352D + 0.102\log_2{D} + 2.45}$.
    
    \item As $\gamma \to 1$, the expected size of $S$ decreases, while the number of sieving steps clearly increases. Nguyen and Vidick~\cite{nguyen2008sieve} used a contraction parameter $\gamma = 0.97$ in their simulations. By keeping $\gamma \approx 0.97$, we expect the upper-bound $|S| \leq 2^{0.2352D + 0.102\log_2{D} + 2.45}$ to hold. Moreover, if $\gamma$ is not too close to $1$, we can abstract away the number of sieving steps by assuming that $|L|$ roughly decreases by $|S|$.
    
    \item While the size of $S$ fluctuates within a sieving step in \Cref{alg:nv_sieve}, there are other implementations of the $\mathtt{NVSieve}$~\cite{laarhoven15angular,Laarhoven2016search} in which the list of centers $S$ is sampled from $L$ beforehand in every sieving step and, therefore, $|S|$ is kept constant.
\end{enumerate}
 
\subsubsection{Quantum oracle for Grover search} 

The $\mathtt{NVSieve}$ employs one search subroutine per sieve step, per list vector, which can be done using Grover's algorithm (\cref{line:Grover_nv_sieve} in \cref{alg:nv_sieve}). For fixed $\mathbf{v}\in L$, the search is done over the centers $S$ in order to find an element $\mathbf{w}$ such that $\|\mathbf{v} - \mathbf{w}\| \leq \gamma R$, where $R = \max_{\mathbf{v}\in L}\|\mathbf{v}\|$. 
Define the Boolean function $f_{\rm NV}:[|S|]\to\{0,1\}$ such that $f_{\rm NV}(i) = 1$ if and only if $\|\mathbf{v} - \mathbf{w}_i\| \leq \gamma R$. In order to use Grover search, we must implement the phase oracle $\mathcal{O}_{\rm NV}:|i\rangle \mapsto (-1)^{f_{\rm NV}(i)}|i\rangle$, as explained next. 

\begin{table}[t]
    \small
    \centering
    \caption{Amount of subroutines required to implement a phase oracle in each Grover search per sieve step in the following sieving algorithms: $\mathtt{NVSieve}$, $\mathtt{NVSieve}$ with LSH/LSF, $\mathtt{GaussSieve}$, and $\mathtt{GaussSieve}$ with LSH/LSF. The $\mathtt{NVSieve}$ requires only one type of Grover search, while the $\mathtt{GaussSieve}$ requires two types. All quantum adders, comparators, and multipliers are $\kappa$-bit out-of-place operations. Operations marked by ($\ast$) have hybrid classical-quantum inputs and are thus cheaper. All subroutines include their inverse.}
    \label{table:oracle_modulos}
    \begin{tabular}{ | c | c | c | c | c | c | c |}
        \hline
        Sieve/Operations & $\mathsf{QRAM}$ & Adders & Multipliers & Extra $\mathsf{CNOT}$s \\ \hline
        
        $\mathtt{NVSieve}$ & $1$ & $2D$ & $D$ & $0$ \\ \hline

        $\mathtt{NVSieve}$ + LSH/LSF & $1$ & $2D$ & $D$ & $0$ \\ \hline

        \multirow{2}{*}{$\mathtt{GaussSieve}$} & $1$ & $4D-2$ & $2D$ &  $2D\kappa + 4$\\  
        & $1$ & $D+1$ & $D^\ast$ & $4$ \\ \hline 

        \multirow{2}{*}{$\mathtt{GaussSieve}$ + LSH/LSF} & $1$ & $4D-2$ &  $2D$ & $2D\kappa + 4$\\  
        & $1$ & $D+1$ & $D^\ast$ & $4$ \\ \hline 
    \end{tabular}
\end{table}

Given any index $|i\rangle$ where $i\in[|S|]$, we start with one $\mathsf{QRAM}_{S}$ call to load $\mathbf{w}_i$ onto a $(\kappa D)$-qubit ancillary register. The list of centers $S$ is already loaded onto the $\mathsf{QRAM}$ at the beginning of every sieve step and the resources required for one call are given in \cref{lem:qram_resources}. Next we must compute the value of $f_{\rm NV}$. Rewrite the inequality defining the Boolean function $f_{\rm NV}$ as
\begin{align*}
    \sum_{j=1}^D (\mathbf{w}_i)_j (\mathbf{w}_i - 2\mathbf{v})_j =  \mathbf{w}_i \cdot (\mathbf{w}_i - 2\mathbf{v}) \leq \gamma^2R^2 - \|\mathbf{v}\|^2.
\end{align*}
In order to compute $\sum_{j=1}^D (\mathbf{w}_i)_j (\mathbf{w}_i - 2\mathbf{v})_j$, we first compute $\mathbf{w}_i - 2\mathbf{v}$ using $D$ parallel $\kappa$-bit out-of-place adders with the classical input $2\mathbf{v}$. At this point the quantum registers hold $|i\rangle|\mathbf{w}_i\rangle|\mathbf{w}_i - 2\mathbf{v}\rangle$. Next, all the terms $(\mathbf{w}_i)_j (\mathbf{w}_i - 2\mathbf{v})_j$, $j\in[D]$, are computed using $D$ parallel $\kappa$-bit out-of-place multipliers. This yields the quantum registers $|i\rangle|\mathbf{w}_i\rangle|\mathbf{w}_i - 2\mathbf{v}\rangle \bigotimes_{j=1}^D |(\mathbf{w}_i)_j (\mathbf{w}_i - 2\mathbf{v})_j\rangle$. For the next step, all $D$ terms $(\mathbf{w}_i)_j (\mathbf{w}_i - 2\mathbf{v})_j$ are summed in depth $\lceil \log_2{D}\rceil$ by using $D-1$ $\kappa$-bit out-of-place adders. Finally, we employ a $\kappa$-bit comparator (which counts as an $\kappa$-bit adder) between the quantum register holding $\sum_{j=1}^D (\mathbf{w}_i)_j (\mathbf{w}_i - 2\mathbf{v})_j$ and the classical input $\lambda := \gamma^2R^2 - \|\mathbf{v}\|^2$, but the output register of the comparator is initialised in the $|-\rangle$ state instead of the $|0\rangle$ state. This procedure introduces the phase $(-1)^{f_{\rm NV}(i)}$ as wanted. We summarise the whole chain of operations as follows:
\begin{align*}
    &|-\rangle|i\rangle|0\rangle^{\otimes \kappa(2+ 3D)} \\ \xrightarrow{\mathsf{QRAM}_{S}} &|-\rangle|i\rangle|\mathbf{w}_i\rangle|0\rangle^{\otimes \kappa(2+ 2D)} \\
    \xrightarrow{D~\text{adders}} &|-\rangle|i\rangle|\mathbf{w}_i\rangle|\mathbf{w}_i - 2\mathbf{v}\rangle|0\rangle^{\otimes \kappa(2+ D)} \\
    \xrightarrow{D~\text{multipliers}} &|-\rangle|i\rangle|\mathbf{w}_i\rangle|\mathbf{w}_i - 2\mathbf{v}\rangle\left(\bigotimes_{j=1}^D|(\mathbf{w}_i)_j (\mathbf{w}_i - 2\mathbf{v})_j\rangle\right)|0\rangle^{\otimes 2\kappa}\\
    \xrightarrow{D-1~\text{adders}} &|-\rangle|i\rangle|\mathbf{w}_i\rangle|\mathbf{w}_i - 2\mathbf{v}\rangle\left(\bigotimes_{j=1}^D|(\mathbf{w}_i)_j (\mathbf{w}_i - 2\mathbf{v})_j\rangle\right)| \mathbf{w}_i^\top(\mathbf{w}_i-2\mathbf{v})\rangle|0\rangle^{\otimes \kappa}\\
    \xrightarrow{1~\text{adder}} &(-1)^{f_{\rm NV}(i)}|-\rangle|i\rangle|\mathbf{w}_i\rangle|\mathbf{w}_i - 2\mathbf{v}\rangle\left(\bigotimes_{j=1}^D|(\mathbf{w}_i)_j (\mathbf{w}_i - 2\mathbf{v})_j\rangle\right) |\mathbf{w}_i^\top(\mathbf{w}_i-2\mathbf{v})\rangle|\mathbf{w}_i^\top(\mathbf{w}_i-2\mathbf{v})-\lambda\rangle.
\end{align*}
At the end of this chain of operations, after the phase $(-1)^{f_{\rm NV}(i)}$ has been applied, we uncompute all the $2D$ adders, $D$ multipliers, and one $\mathsf{QRAM}_{S}$ call using their suitable inverses. Even though all the extra ancillae required for adders, multipliers, and $\mathsf{QRAM}$ are not made explicit in the arguments above, dirty ancillae are kept throughout the computations in order to ease their inverses. 
In \cref{table:oracle_modulos}, we summarise all the arithmetic and memory modules required in the phase oracle $\mathcal{O}_{\rm NV}$.

\subsubsection{Using LSH/LSF in the Nguyen-Vidick sieve} 

Locality-sensitive hashing and filtering can be used to speed up sieving, particularly the $\mathtt{NVSieve}$~\cite{nguyen2008sieve}. The main idea of employing nearest-neighbour-search techniques in the $\mathtt{NVSieve}$ is to preprocess the list of centers $S$ and thus replace the brute-force list search over $\mathbf{w}\in S$ with a smaller list of probable reducible candidates. As described in \cref{sec:lsh}, we sample $k\cdot t$ random hash function $h_{i,j}\in\mathcal{H}$ from a suitable hash family $\mathcal{H}$, and using the AND and OR-compositions, introduce $t$ hash tables, each with an exponential number of buckets in the parameter $k$ ($2^k$ buckets in angular LSH and $2^{k\sqrt{D}}$ buckets in spherical LSH). Each vector $\mathbf{w}\in S$ is then added to its corresponding bucket $\mathcal{T}_i[h_i(\mathbf{v})]$ labelled by the hash $h_i(\mathbf{v})$ for each hash table $\mathcal{T}_i$. Afterwards, given $\mathbf{v}\in L$, only vectors in buckets $\mathcal{T}_1[h_1(\mathbf{v})],\dots, \mathcal{T}_t[h_t(\mathbf{v})]$ are considered as possible candidates for reduction. The search space is thus considerable reduced and many of far-away vectors in $S$ to $\mathbf{v}$ are ignored. The $\mathtt{NVSieve}$ with LSH is described in \cref{alg:nv_sieve_LSH}. A similar procedure applies to LSF: $k\cdot t$ random filter functions $f_{i,j}\in \mathcal{F}$ from a suitable filter family $\mathcal{F}$ are sampled and employed to add vectors  onto $t$ different filtered buckets $\mathcal{B}_1,\dots,\mathcal{B}_t$. A vector $\mathbf{w}\in S$ is added onto the bucket $\mathcal{B}_i$ if and only if it passes through all filters $f_{i,1},\dots,f_{i,k}$. Afterwards, a query $\mathbf{v}\in L$ is answered by recovering all the filters that $\mathbf{v}$ passes through. We note that insertions into buckets and searching over relevant filters might use filters with different internal parameters (an example being the parameter $\alpha$ in spherical LSF). The different between LSH and LSF is that, in the second case, each hash table is reduced to a single bucket of vectors that survived the filters. Another difference is the use of random product codes to efficiently find all buckets that contain a given vector.

\begin{algorithm}[t]
    \caption{The Nguyen-Vidick sieve with LSH}\label{alg:nv_sieve_LSH}
    \SetKwFor{ForEach}{for each}{do}{endfch}
    \SetKwComment{Comment}{/* }{ */}
    
    \KwIn{Basis $\mathbf{B}$ for a $D$-dimensional lattice, parameters $\gamma\in(0,1)$, $k$, and $t$, hash family $\mathcal{H}$.}
    \KwOut{Shortest vector $\mathbf{v}^\ast$ of the lattice.}

    Sample $L\subset \mathbb{R}^D$

    \While{$L \neq \emptyset$}
    {
        $L_0\gets L$, $L' \gets \emptyset$, $S\gets \{\mathbf{0}\}$, $R\gets \max_{\mathbf{v}\in L}\|\mathbf{v}\|$

        Initialise $t$ empty hash tables $\mathcal{T}_1,\dots,\mathcal{T}_t$ and sample $k\cdot t$ random hash functions $h_{i,j}\in\mathcal{H}$

        For each $i\in[t]$, add $\mathbf{0}$ to the bucket $\mathcal{T}_i[h_i(\mathbf{0})]$ in the hash table $\mathcal{T}_i$

        \ForEach{$\mathbf{v}\in L$}
        {
            $C\gets \bigcup_{i=1}^t \mathcal{T}_i[h_i(\mathbf{v})]$ is the list of candidate vectors \label{line:candidates_nvsieve}

            Construct a $\mathsf{QRAM}$ for $C$
        
            $\mathbf{w} \gets \mathtt{GroverSearch}(\mathbf{w}\in C : \|\mathbf{v} - \mathbf{w}\| \leq \gamma R)$ \label{line:Grover_nv_sieve_LSH}
    
            \If(\tcp*[f]{$\exists\mathbf{w}\in C  : \|\mathbf{v} - \mathbf{w}\| \leq \gamma R$}){$\mathbf{w} \neq \mathtt{NULL}$}
            {
                $L' \gets L'\cup\{\mathbf{v} - \mathbf{w}\}$
            }
            \Else(\tcp*[f]{$\nexists\mathbf{w}\in C  : \|\mathbf{v} - \mathbf{w}\| \leq \gamma R$})
            {
                $S\gets S\cup\{\mathbf{v}\}$
                
                For each $i\in[t]$, add $\mathbf{v}$ to the bucket $\mathcal{T}_i[h_i(\mathbf{v})]$ in the hash table $\mathcal{T}_i$ \label{line:add_vector_hash_tables}
            }
        }
        $L \gets L'$
    }    
    \Return shortest vector $\textbf{v}^\ast$ in $L_0$
\end{algorithm}

Apart from searching over the buckets $\mathcal{T}_1[h_1(\mathbf{v})],\dots, \mathcal{T}_t[h_t(\mathbf{v})]$ for a reducible vector using Grover's algorithm, another big difference between the classical and quantum versions of the $\mathtt{NVSieve}$ with LSH is obtaining the list of candidates vectors $C = \bigcup_{i=1}^t \mathcal{T}_i[h_i(\mathbf{v})]$ in the first place. Whereas classically the search over $C$ can be done while sequentially visiting all the $t$ buckets, to retain the quadratic quantum advantage, we must first classically gather all the indices (or hashes) of the vectors in the buckets $\mathcal{T}_1[h_1(\mathbf{v})],\dots,\mathcal{T}_t[h_t(\mathbf{v})]$, after which we can perform Grover's search over these vectors. If $C = \{\mathbf{w}_{j_1}, \mathbf{w}_{j_2},\dots, \mathbf{w}_{j_{|C|}}\}$, then we start with the state $|C|^{-1/2}\sum_{i=1}^{|C|}|i\rangle$ within Grover's search. To proceed, we would like to use $\mathsf{QRAM}$ to map $|i\rangle|0\rangle^{\otimes \kappa D} \mapsto |i\rangle|\mathbf{w}_{j_i}\rangle$. This can be done using the $\mathsf{QRAM}$ of \cref{fig:bucket-brigade} by classically ordering the hashes $\{j_1,j_2,\dots,j_{|C|}\}$ so that a classically controlled $\mathsf{CNOT}$ is applied based on the content $\mathbf{w}_{j_i}$, which is accessed via a $\mathsf{RAM}$ call. The phase oracle $\mathcal{O}_{\rm NV}:|i\rangle \mapsto (-1)^{f_{\rm NV}(i)}|i\rangle$, where $f_{\rm NV}:[|C|]\to\{0,1\}$ is defined by $f_{\rm NV}(i) = 1$ if and only if $\|\mathbf{v} - \mathbf{w}_{j_i}\| \leq \gamma R$, is hence implemented initially as
\begin{align}\label{eq:nv_sieve_qram}
    |i\rangle|0\rangle^{\otimes \kappa D} \xrightarrow{\mathsf{QRAM}_{C}} |i\rangle|\mathbf{w}_{j_i}\rangle,
\end{align}
after which the remaining addition and multiplication operations explained in the previous section are performed (plus overall uncomputation). The phase oracle $\mathcal{O}_{\rm NV}$ thus requires one $\mathsf{QRAM}$ call one of size $|C|$ and $|C|$ $\mathsf{RAM}$ calls of size $|L|$. All the required subroutines to implement the phase oracle are summarised in \cref{table:oracle_modulos}.

The need to take $\mathsf{QRAM}$ into consideration means that the use of Grover's search in the $\mathtt{NVSieve}$ does not improve its asymptotic scaling, since gathering the list of candidates $C$ for each $\mathbf{v}\in L$ takes $O(|L|\cdot |C|)$ time. The only improvement is moving the more expensive norm computation into the Grover's search, so that the classical cost of $O(D\cdot |L|\cdot |C|)$ per sieving step becomes a classical-quantum cost of $O(|L| \cdot |C| + D\cdot|L|\cdot\sqrt{|C|})$.

\subsubsection{NVSieve with angular LSH} 

When employing angular LSH, we hash each vector into a $k$-bit string for each of the $t$ hash tables. Each hash table has thus $2^k$ buckets. We choose $k = \log_{3/2}{t} - \log_{3/2}\ln(1/\varepsilon)$ so that nearby vectors collide with high probability. Hashing one vector requires computing $\mathbf{a}_i\cdot \mathbf{v}$ for all $i\in[k]$, so $kD$ multiplications and $k(D-1)$ additions. As pointed out by Laarhoven~\cite{laarhoven15angular}, it is possible to employ sparse vectors $\mathbf{a}_i$ for the angular hash functions while still maintaining the same performance~\cite{achlioptas2001database,achlioptas2003database,li2006very}. Laarhoven~\cite{laarhoven15angular} employed vectors $\mathbf{a}_i$ with just two non-zero entries, therefore we require $2k$ multiplications and $k$ additions to hash $\mathbf{v}$. The time spent hashing all vectors in the list $L$ into the $t$ hash tables is thus $O(k\cdot |L|\cdot t)$, or more precisely, it requires $2k\cdot |L|\cdot t$ multiplications and $k\cdot |L|\cdot t$ additions. On the other hand, the list of candidates on \cref{line:candidates_nvsieve} has size $|C| \approx |S|\cdot p_2^\ast$, where $p_2^\ast$ is the average probability that a non-reducing vector collides with another vector in at least one of the $t$ hash tables (\Cref{eq:probability_collision_angular_hash}).

\paragraph{Classical complexity.} The classical time spent searching over $C$ is $O(D\cdot |S|\cdot |L|\cdot p_2^\ast)$ per sieving step. More precisely, according to \cref{table:oracle_modulos} (the number of classical arithmetic operations is the same as in the quantum case), searching over $C$ requires $D\cdot |L|\cdot |C| = D\cdot |L| \cdot |S|\cdot p_2^\ast$ multiplications and $2D\cdot |L| \cdot |C| = 2D\cdot |L|\cdot |S|\cdot p_2^\ast$ additions per sieving step. The number of hash tables $t$ is determined by balancing the time hashing $O(k\cdot |L|\cdot t)$ with the time searching $O(D\cdot |L|\cdot|S| \cdot p_2^\ast)$. Asymptotically over all sieving steps, $t$ is determined by taking $|L| = |S| = (4/3)^{D/2 + o(D)}$ and approximating $p_2^\ast \approx t\cdot 2^{-\beta D + o(D)}$, where $\beta$ is given by \cref{eq:beta_parameter_angular_hash}. We must then have that $(4/3)^{D/2 + o(D)}\cdot 2^{-\beta D + o(D)} = 2^{o(D)} \implies \beta = \frac{1}{2}\log_2(4/3)$, which yields $t \approx 2^{0.129043D}$. Therefore, by using $t \approx 2^{0.129043D}$ hash tables and a hash length of $k\approx 0.220600D$, the overall time and space complexities are $\widetilde{O}(|L|\cdot t) = 2^{0.336562D + o(D)}$~\cite{laarhoven15angular,Laarhoven2016search}.

\paragraph{Quantum complexity.} The quantum time searching over $C$ is $O(D\cdot |L|\sqrt{|S|\cdot p_2^\ast})$ per sieving step. The number of hash tables $t$ is determined by balancing the time hashing $O(k\cdot |L|\cdot t)$ with the time searching $O(D\cdot |L|\sqrt{|S| \cdot p_2^\ast})$. Asymptotically, $t$ is determined by taking $|L| = |S| = (4/3)^{D/2 + o(D)}$ and approximating $p_2^\ast \approx t\cdot 2^{-\beta D + o(D)}$, so that $(4/3)^{D/4 + o(D)}\cdot 2^{-\beta D/2 + o(D)} = \sqrt{t} \implies \beta = \frac{1}{2}\log_2(4/3) - \frac{1}{D}\log_2{t}$. This yields $t \approx 2^{0.078430D}$. Therefore, by using $t \approx 2^{0.078430D}$ hash tables and a hash length of $k \approx 0.134077D$, the overall time and space complexities are $\widetilde{O}(|L|\cdot t) = 2^{0.285949D + o(D)}$~\cite{Laarhoven2015Quantum,Laarhoven2016search}.

\subsubsection{NVSieve with spherical LSH} 

The complexity analysis of the $\mathtt{NVSieve}$ with spherical LSH (also known as $\mathtt{SphereSieve}$~\cite{Laarhoven2015Quantum}) is similar to $\mathtt{NVSieve}$ with angular LSH. When employing spherical LSH, we hash each vector into a string in $[2^{\sqrt{D}}]^k$ for each of the $t$ hash tables. Each hash table has thus $2^{k\sqrt{D}}$ buckets. We choose $k = 6(\ln{t} - \ln\ln(1/\varepsilon))/\sqrt{D}$ so that nearby vectors collide with high probability. Hashing one vector requires comparing $\langle \mathbf{v}, \mathbf{g}_{ij}\rangle \geq D^{1/4}$ for $i\in[2^{\sqrt{D}}]$ and $j\in[k]$. The time spent hashing all vectors in the list $L$ into the $t$ hash tables is thus $O(D \cdot 2^{\sqrt{D}}\cdot k \cdot t \cdot |L|)$, or more precisely, it requires $D\cdot 2^{\sqrt{D}}\cdot k\cdot t\cdot |L|$ multiplications and $D\cdot 2^{\sqrt{D}}\cdot k\cdot t\cdot |L|$ additions. On the other hand, the list of candidates on \cref{line:candidates_nvsieve} has size $|C| \approx |S|\cdot p_2^\ast$, where $p_2^\ast$ is the average probability that a non-reducing vector collides with another vector in at least one of the $t$ hash tables (\Cref{eq:probability_collision_spherical_hash}).

\paragraph*{Classical complexity.}  Classically searching over $C$ requires $D\cdot |L|\cdot |S|\cdot p_2^\ast$ multiplications and $D\cdot |L|\cdot |S|\cdot p_2^\ast$ additions per sieving step. The number of hash tables is determined by balancing the time hashing $O(D\cdot 2^{\sqrt{D}}\cdot k\cdot t\cdot |L|)$ and the time searching $O(D\cdot |L|\cdot |S|\cdot p_2^\ast)$. Asymptotically, $t$ is determined by taking $|L| = |S| = (4/3)^{D/2 + o(D)}$ and approximating $p_2^\ast \approx 2^{-\beta D + o(D)}$, where $\beta$ is given by \cref{eq:beta_parameter_spherical_hash}. Hence $(4/3)^{D/2 + o(D)}\cdot 2^{-\beta D + o(D)} = t \implies \beta = \frac{1}{2}\log_2(4/3) - \frac{1}{D}\log_2{t}$, which yields $t \approx 2^{0.089624D}$. Therefore, by using $t \approx 2^{0.089624D}$ hash tables and $k \approx 0.372737\sqrt{D}$, the time and space complexities are $2^{0.297143D + o(D)}$~\cite{laarhoven2015faster,Laarhoven2016search}.

\paragraph*{Quantum complexity.} Quantumly searching over $C$ requires $O(D\cdot |L|\sqrt{|S|\cdot p_2^\ast})$ time. The number of hash tables is determined by balancing the time hashing $O(D\cdot 2^{\sqrt{D}}\cdot k\cdot t\cdot |L|)$ and the time searching $O(D\cdot |L|\sqrt{|S|\cdot p_2^\ast})$. Asymptotically, $t$ is determined by taking $|L| = |S| = (4/3)^{D/2 + o(D)}$ and approximating $p_2^\ast \approx 2^{-\beta D + o(D)}$, so that $(4/3)^{D/4 + o(D)}\cdot 2^{-\beta D/2 + o(D)} = t \implies \beta = \frac{1}{2}\log_2(4/3) - \frac{2}{D}\log_2{t}$. This yields $t \approx 2^{0.059581D}$. Therefore, by using $t\approx 2^{0.059581D}$ hash tables and $k\approx 0.247792\sqrt{D}$, the overall time and space complexities are $2^{0.267100D+o(D)}$~\cite{Laarhoven2015Quantum,Laarhoven2016search}.

\subsubsection{NVSieve with spherical LSF} 

The complexity analysis of the $\mathtt{NVSieve}$ with spherical LSF is a bit different than with LSH, the main reason being that each filter bucket covers an equally large region of $\mathbb{R}^D$, which simplifies the analysis. As shown in~\cite{becker2016new}, fixing $k=1$ concatenated filters per bucket is usually optimal, as it allows a larger $\alpha$ parameter (see \cref{eq:collision_probability_filter}). On the other hand, the number of filter buckets $t$ is chosen so that nearby vectors collide with high probability $p_1^\ast \geq 1-\varepsilon$, where $p_1^\ast$ is given by \cref{eq:probability_collision_spherical_filter}, meaning that $t = \ln(1/\varepsilon)/\mathcal{W}_D(\alpha,\alpha,\pi/3)$. Each vector is on average contained in $t\cdot \mathcal{C}_D(\alpha)$ buckets, meaning there are $|S|\cdot t\cdot \mathcal{C}_D(\alpha)$ vectors in total in all buckets and $|S|\cdot \mathcal{C}_D(\alpha)$ vectors in each bucket on average. The list of candidates on \cref{line:candidates_nvsieve} has size $|C|\approx |S|\cdot t\cdot \mathcal{C}_D(\alpha)^2$. Inserting vectors into relevant filters requires an efficient oracle (as mentioned in \cref{sec:spherical_lsf}). Becker et al.~\cite{becker2016new} proposed such an efficient oracle, called $\mathtt{EfficientListDecoding}$, using random product codes. According to~\cite[Lemma~5.1]{becker2016new} (see \cref{fact:EfficientListDecoding}), the time complexity of such an oracle is mainly due to visiting at most $2\log_2{D}\cdot t\cdot \mathcal{C}_D(\alpha)$ nodes for a pruned enumeration, which requires mostly addition-like operations. We thus approximate the time to insert all the vectors in $L$ into relevant filters by $2\log_2{D}\cdot |L| \cdot t \cdot \mathcal{C}_D(\alpha)$ additions.

\paragraph{Classical complexity.} The classical time spent searching over $C$ requires $D\cdot |L|\cdot |S|\cdot t\cdot \mathcal{C}_D(\alpha)^2$ additions and multiplications per sieving step. Moreover, we also need to retrieve the relevant filters of $\mathbf{v}$ before performing the search over $C$. Retrieving the relevant filters of all vectors in $L$ requires $2\log_2{D}\cdot |L| \cdot t \cdot \mathcal{C}_D(\alpha)$ additions per sieving step. 
The parameter $\alpha$ is chosen in order to minimise the time complexity coming from filtering and from searching, $O(\log{D}\cdot |L|\cdot t \cdot \mathcal{C}_D(\alpha) + D\cdot |L|\cdot |S|\cdot t\cdot \mathcal{C}_D(\alpha)^2)$. 
Asymptotically, we approximate $t = O(1/\mathcal{W}_D(\alpha,\alpha,\pi/3))$ (see \cref{sec:spherical_lsf}). On the other hand, $\mathcal{C}_D(\alpha) = \poly(D) (1-\alpha^2)^{D/2}$~\cite[Lemma~4.1]{micciancio2010faster} and $\mathcal{W}_D(\alpha,\alpha,\theta) = \poly(D)(1 - 2\alpha^2/(1+\cos\theta))^{D/2}$~\cite[Lemma~2.2]{becker2016new}. Together with $|L| = (4/3)^{D/2 + o(D)}$, this means that the total time complexity over all sieving steps is
\begin{align*}
    \widetilde{O}\left(\frac{|L|\cdot \mathcal{C}_D(\alpha)(1 + |L|\cdot \mathcal{C}_D(\alpha))}{\mathcal{W}_D(\alpha,\alpha,\pi/3)}\right) = \widetilde{O}\left(\left(\frac{4(1-\alpha^2)}{3 - 4\alpha^2} \right)^{D/2}\left(1 + \left(\frac{4(1-\alpha^2)}{3} \right)^{D/2} \right) \right).
\end{align*}
The high-order term is minimised by taking $\alpha = 1/2$. Therefore, the time complexity is $(3/2)^{D/2 + o(D)} \approx 2^{0.292481D + o(D)}$ by choosing $\alpha = 1/2$, $k=1$, and $t=(3/2)^{D/2 + o(D)}$~\cite{becker2016new,Laarhoven2016search}. The space complexity is also $(3/2)^{D/2 + o(D)}$. The list of candidates, i.e., the list of vectors that collide with a given vector, has average size $|C| = |L|\cdot t\cdot \mathcal{C}_D(\alpha)^2 = (9/8)^{D/2 + o(D)}$.

\paragraph{Quantum complexity.} The quantum time spent comparing a given vector to other vectors colliding in one of the filters is now $O(D \sqrt{|S|\cdot t\cdot \mathcal{C}_D(\alpha)^2})$. The total time complexity of one sieving step with list $L$ is thus $O(\log{D}\cdot |L|\cdot t \cdot \mathcal{C}_D(\alpha) + D\cdot |L|\cdot |S|^{1/2}\cdot t^{1/2}\cdot \mathcal{C}_D(\alpha))$. The $\alpha$ parameter is chosen in order to minimise the classical hashing time plus the quantum searching time. Asymptotically, the approximations $t = O(1/\mathcal{W}_D(\alpha,\alpha,\pi/3))$, $\mathcal{C}_D(\alpha) = \poly(D) (1-\alpha^2)^{D/2}$, and $\mathcal{W}_D(\alpha,\alpha,\theta) = \poly(D)(1 - 2\alpha^2/(1+\cos\theta))^{D/2}$, together with $|L| = (4/3)^{D/2 + o(D)}$, yield the total time complexity over all sieving steps of
\begin{align*}
    \widetilde{O}\left(\frac{|L|\cdot \mathcal{C}_D(\alpha)}{\mathcal{W}_D(\alpha,\alpha,\pi/3)} + \frac{|L|^{3/2}\cdot \mathcal{C}_D(\alpha)}{\sqrt{\mathcal{W}_D(\alpha,\alpha,\pi/3)}}\right) = \widetilde{O}\left(\left(\frac{4(1-\alpha^2)}{3 - 4\alpha^2}\right)^{D/2} + \left(\frac{8(1-\alpha^2)}{3\sqrt{3 - 4\alpha^2}} \right)^{D/2} \right).
\end{align*}
The high-order term is minimised by taking $\alpha = \sqrt{3}/4$. Hence the time complexity is $(13/9)^{D/2 + o(D)} \approx 2^{0.265257D + o(D)}$ by choosing $\alpha = \sqrt{3}/4$, $k=1$, and $t=(4/3)^{D/2 + o(D)}$~\cite{Laarhoven2016search}. The space complexity is also $(13/9)^{D/2 + o(D)}$. The list of candidate vectors has average size $|C| = |L|\cdot t\cdot \mathcal{C}_D(\alpha)^2 = (13/12)^{D + o(D)}$.

\subsection{The GaussSieve}

A few years after the work of Nguyen and Vidick, Micciancio and Voulgaris~\cite{micciancio2010faster} presented $\mathtt{ListSieve}$, a probabilistic algorithm that provably finds the shortest vector with a high probability in $2^{3.199D + o(D)}$ time and $2^{1.325D + o(D)}$ space, and a heuristic variant called $\mathtt{GaussSieve}$, which we now focus on and is described in \cref{alg:gauss_sieve}. The $\mathtt{GaussSieve}$ starts with an empty list $L$ and keeps adding shorter lattice vectors to it. At each sieve step, a new lattice vector $\mathbf{v}$ is reduced with all the points in the list $L$. By this we mean the rule:
\begin{align*}
    \text{Reduce}~\mathbf{v}~\text{with}~\mathbf{w}:\quad\text{if}~\|\mathbf{v}\pm\mathbf{w}\| < \|\mathbf{v}\| ~\text{then}~ \mathbf{v}\gets \mathbf{v}\pm\mathbf{w}.
\end{align*}
The difference between both sieves is that, in the $\mathtt{ListSieve}$, the reduced vector is then added to the list, meaning that vectors in $L$ never change, while in the $\mathtt{GaussSieve}$, we also attempt to reduce the vectors in $L$ with $\mathbf{v}$ before adding $\mathbf{v}$ to $L$. In other words, in the $\mathtt{GaussSieve}$, for all vectors $\mathbf{v}, \textbf{w} \in L$ such that $\min(\|\mathbf{v}\pm \mathbf{w}\|) < \max(\|\mathbf{v}\|, \|\mathbf{w}\|)$, the longer of $\mathbf{v}$ and $\mathbf{w}$ is replaced with the shorter of $\mathbf{v}\pm \mathbf{w}$. Consequently, all pairs of vectors in the list are always pairwise reduced: $\forall \textbf{v}, \textbf{w} \in L : \min(\|\mathbf{v} \pm \mathbf{w}\|) \geq \max(\|\mathbf{v}\|, \|\mathbf{w}\|)$. Thus any pair of vectors in the list always form a Gauss reduced basis for a two dimensional lattice, and thus the angle between any two list points is at least $\pi/3$ and the list forms a good spherical code. It follows that the size of the list never exceeds the kissing constant $\tau_D$ in $D$ dimensions. Therefore the list size (and thus the space complexity of the $\mathtt{GaussSieve}$) is bounded by $2^{0.401D}$ in theory and $2^{0.208D}$ in practice, corresponding to the asymptotic upper and lower bounds on $\tau_D$~\cite{kl78}. In contrast, there are no known bounds on the time complexity of the $\mathtt{GaussSieve}$, since the list $L$ can grow or shrink at any time. One might guess that the time complexity is quadratic in the list size since at each sieving step every pair of vectors $\mathbf{v},\mathbf{w}\in L$ is compared at least once to check for possible reductions. Furthermore, the asymptotic behaviour of the $\mathtt{GaussSieve}$ is similar to that of the $\mathtt{NVSieve}$~\cite{micciancio2010faster}. A natural conjecture is that the $\mathtt{GaussSieve}$ has time complexity $\widetilde{O}(|L|^2) = 2^{0.415D + o(D)}$.

\begin{algorithm}[t]
    \caption{The $\mathtt{GaussSieve}$}\label{alg:gauss_sieve}
    \SetKwFor{ForEach}{for each}{do}{endfch}
    \SetKw{And}{and}
    
    \KwIn{Basis $\mathbf{B}$ for a $D$-dimensional lattice and termination constant $\zeta$.}
    \KwOut{Shortest vector $\mathbf{v}^\ast$ of the lattice.}

    $L \gets \emptyset$, $S \gets \emptyset$, $K \gets 0$

    \While{$K < \zeta$}
    {
        \If{$S = \emptyset$}
        {
            $\mathbf{v}\gets \text{SampleKlein}(\mathbf{B})$
        }
        \Else
        {
            $\mathbf{v}\gets S.\text{pop}()$
        }

        \While{$\mathbf{w} \gets \mathtt{GroverSearch}(\mathbf{w}\in L : \|\mathbf{v} \pm \mathbf{w}\| < \|\mathbf{v}\|)$ \And $\mathbf{w} \neq \mathtt{NULL}$ \label{line:gauss_sieve_search1}} 
        {
            Reduce $\mathbf{v}$ with $\mathbf{w}$
        }
        \While{$\mathbf{w} \gets \mathtt{GroverSearch}(\mathbf{w}\in L : \|\mathbf{w} \pm \mathbf{v}\| < \|\mathbf{w}\|)$ \And $\mathbf{w} \neq \mathtt{NULL}$ \label{line:gauss_sieve_search2}} 
        {
            $L\gets L\setminus\{\mathbf{w}\}$
        
            Reduce $\mathbf{w}$ with $\mathbf{v}$

            \If{$\mathbf{w}\neq \mathbf{0}$}
            {
                $S \gets S\cup\{\mathbf{w}\}$
            }
        }
        \If{$\mathbf{v} ~ \mathrm{has~changed}$ \And $\mathbf{v}\neq \mathbf{0}$}
        {
            $S.\text{push}(\mathbf{v})$
        }
        \If{$\mathbf{v} = \mathbf{0}$}
        {
            $K \gets K+1$
        }
        \Else
        {
            $L \gets L\cup\{\mathbf{v}\}$
        }
    }   
    \Return shortest vector $\textbf{v}^\ast$ in $L$
\end{algorithm}

\subsubsection{Numerical experiments and heuristic assumptions}
\label{sec:heuristics_gausssieve}

Once again, the asymptotic complexity hides a lot of operations when doing a resource estimate. The overall analysis of $\mathtt{GaussSieve}$ is more complicate than the $\mathtt{NVSieve}$, since the size of the list $L$ can both increase and decrease, which hinders a bound on time complexity. Moreover, the search loops in \cref{line:gauss_sieve_search1,line:gauss_sieve_search2} are performed in an exhaustive manner, meaning that a search will be attempted while there are solutions. Nonetheless, it is still possible to gather average trends and bounds through heuristic assumptions and numerical experiments. In the following, we list several observations that shall be useful in forming assumptions.
\begin{enumerate}
    \item Schneider~\cite{schneider11} noticed that $\mathtt{GaussSieve}$’s performance in terms of runtime, iterations, list size, and collisions was not affected by the type of the underlying lattice (ideal, cyclic, and random). 
    
    \item Micciancio and Voulgaris~\cite{micciancio2010faster} proved that the list size $|L|$ never exceeds the kissing number $\tau_D$, which is defined as the highest number of points that can be placed on a $D$-dimensional sphere such that the angle between any two points is at least $\pi/3$. This theoretically bounds $|L|$ by $\tau_D \leq 2^{0.401D + o(D)}$. However, Micciancio and Voulgaris~\cite{micciancio2010faster} numerically observed that the maximum list size grows approximately as $2^{0.2D}$, which matches lower bounds on the kissing number $\tau_D \geq 2^{0.2075D + o(D)}$~\cite{conway2013sphere}. A plausible assumption is the maximum list size to be bounded by a lower bound on the kissing number, e.g., $\tau_D \geq (1+o(1))\frac{\sqrt{3\pi}}{4\sqrt{2}}\ln(3/2) D^{3/2} (4/3)^{D/2}$~\cite{fernandez2021new}. From a more numerical perspective, Schneider~\cite{schneider11} reported a maximum list size of $2^{0.2D + 2.8}$, while Mariano et al.~\cite{mariano14comprehensive} reported a maximum list size of $2^{0.199D + 2.149}$. We independently report a maximum list size of $2^{0.193D + 2.325}$.
    
    \item Schneider~\cite{schneider11} observed that the number of times a newly sampled vector from Klein's algorithm was reduced by the list vectors and the number of vectors removed from the list $L$ and pushed to the stack $S$ were approximately $10$ times the maximum list size. This means that on average the first search loop (\cref{line:gauss_sieve_search1}) is performed $10$ times. This observation was independently confirmed by us. The number of solutions to Grover's search in \cref{line:gauss_sieve_search1} varies greatly. A more pessimistic assumption is to take $M=1$ or $M=2$ solutions for each of the first $9$ calls, while the $10$-th call has $M=0$ solutions. On the other hand, the second search loop (\cref{line:gauss_sieve_search2}) is performed only once with $M=0$ number of solutions on the vast majority of cases.
    
    \item The number of sieving steps is roughly $10$ times the maximum list size (see~\cite[Figures~1 and~2]{schneider11}). Mariano et al.~\cite{mariano14comprehensive} numerically reported the number of iterations $I$ to grow as $2^{0.283D + 0.335}$, while we obtained a growth of $2^{0.283D + 0.491}$.
    
    \item A natural termination criteria proposed by Micciancio and Voulgaris~\cite{micciancio2010faster} is to stop after a certain number $\zeta$ of collisions. A conservative option for $\zeta$ adopted by~\cite{micciancio2010faster} is to set it as $10\%$ of the list size. The authors\footnote{See Appendix B of the \href{https://cseweb.ucsd.edu/~daniele/papers/Sieve.pdf}{unpublished version}.} also used an alternative criteria of $\zeta = 500$, which we independently checked to be enough to find the shortest vector. Under such criteria, the list size does not grow much beyond the point where a shortest vector is found.
    
    \item The list size $|L|$ starts from $0$ and quickly grows to an asymptote which, according to the previous point, roughly corresponds to the maximum list size. Meanwhile, collisions rarely occur before the shortest vector is found, after which the number of collisions quickly grows until the exit-condition is reached. The list size stays above $90\%$ of the maximum list size (i.e., the list size at the moment a shortest vector is found) for more than $95\%$ the number of iterations for large enough dimensions ($D>70$).
\end{enumerate}

\subsubsection{Quantum oracle for Grover search}

Similarly to the $\mathtt{NVSieve}$, the two search steps in the $\mathtt{GaussSieve}$ (\cref{line:gauss_sieve_search1,line:gauss_sieve_search2} in \cref{alg:gauss_sieve}) can be performed using Grover's algorithm. Namely, given an ordered list $L = \{\mathbf{w}_1,\mathbf{w}_2,\dots\}$ and a fixed element $\mathbf{v}\in L$, 
\begin{enumerate}
    \item Find an index $i\in[|L|]$ such that $\|\mathbf{v}\pm \mathbf{w}_i\| < \|\mathbf{v}\| \iff \mathbf{w}_i\cdot(\mathbf{w}_i \pm 2\mathbf{v}) < 0$;
    \item Find an index $i\in[|L|]$ such that $\|\mathbf{v}\pm \mathbf{w}_i\| < \|\mathbf{w}_i\| \iff |\mathbf{v}\cdot \mathbf{w}_i| \geq \|\mathbf{v}\|^2/2$.
\end{enumerate}
Define the Boolean function $f_{\rm gauss}:[|L|] \to \{0,1\}$ by $f_{\rm gauss}(i) = 1$ if and only if either $\mathbf{w}_i\cdot(\mathbf{w}_i + 2\mathbf{v}) < 0$ or $\mathbf{w}_i\cdot(\mathbf{w}_i - 2\mathbf{v}) < 0$. Similarly, let $g_{\rm gauss}:[|L|] \to \{0,1\}$ be such that $g_{\rm gauss}(i) = 1$ if and only if $|\mathbf{v}\cdot \mathbf{w}_i| \geq \|\mathbf{v}\|^2/2$. In order to use Grover search in \cref{line:gauss_sieve_search1}, we must construct the phase oracle $\mathcal{O}^{(1)}_{\rm gauss}:|i\rangle \mapsto (-1)^{f_{\rm gauss}(i)}|i\rangle$, while the Grover search in \cref{line:gauss_sieve_search2} requires the phase oracle $\mathcal{O}^{(2)}_{\rm gauss}:|i\rangle \mapsto (-1)^{g_{\rm gauss}(i)}|i\rangle$. We now describe how they can be constructed.

\paragraph{Phase oracle $\mathcal{O}^{(1)}_{\rm gauss}$.} The construction is similar to the one for the $\mathtt{NVSieve}$. We assume that the list $L$ is already stored in $\mathsf{QRAM}$. Given any index $|i\rangle$ where $i\in[|L|]$, the first step is to load $\mathbf{w}_i$ onto a $(\kappa D)$-qubit register using one $\mathsf{QRAM}_L$ call (\cref{lem:qram_resources}). Since we must check for two conditions, $\mathbf{w}_i\cdot(\mathbf{w}_i + 2\mathbf{v}) < 0$ or $\mathbf{w}_i\cdot(\mathbf{w}_i - 2\mathbf{v}) < 0$, we copy $|\mathbf{w}_i\rangle$ onto another $(\kappa D)$-qubit ancillary register using $\kappa D$ $\mathsf{CNOT}$s. We then use $2D$ $\kappa$-bit out-of-place adders in parallel to get $|i\rangle|\mathbf{w}_i\rangle^{\otimes 2}|\mathbf{w}_i + 2\mathbf{v}\rangle|\mathbf{w}_i - 2\mathbf{v}\rangle$. Next, all the terms $(\mathbf{w}_i)_j(\mathbf{w}_i\pm 2\mathbf{v})_j$, $j\in[D]$, are computed in parallel using $2D$ $\kappa$-bit out-of-place multipliers. Then, all $D$ terms $(\mathbf{w}_i)_j(\mathbf{w}_i+ 2\mathbf{v})_j$ are summed in depth $\lceil\log_2{D}\rceil$ by using $D-1$ $\kappa$-bit out-of-place adders, and similarly for the terms $(\mathbf{w}_i)_j(\mathbf{w}_i- 2\mathbf{v})_j$. In order to check whether $\mathbf{w}_i\cdot(\mathbf{w}_i \pm 2\mathbf{v})$ is smaller than $0$, it suffices to consider its highest-order bit. Since at most one of the conditions $\mathbf{w}_i\cdot(\mathbf{w}_i \pm 2\mathbf{v}) < 0$ can be true, we simply compute the parity of their high-bits instead of their logical $\mathsf{OR}$. Thus, by applying two $\mathsf{CNOT}$s controlled on the high-bits of $\mathbf{w}_i\cdot(\mathbf{w}_i \pm 2\mathbf{v})$ onto a qubit in the $|-\rangle$ state, we implement the phase $(-1)^{f_{\rm gauss}(i)}$. After that, we uncompute all the arithmetic operations, copying of $|\mathbf{w}_i\rangle$, and $\mathsf{QRAM}$ call. The amount of submodules is summarised in \cref{table:oracle_modulos}.

\paragraph{Phase oracle $\mathcal{O}_{\rm gauss}^{(2)}$.} Once again, one $\mathsf{QRAM}_L$ call is used to load $\mathbf{w}_i$, after which $D$ $\kappa$-bit hybrid multipliers are used to obtain all the $D$ terms $(\mathbf{w}_i)_j v_j$, $j\in[D]$. These $D$ terms are then summed up in depth $\lceil\log_2{D}\rceil$ using $D-1$ $\kappa$-bit out-of-place adders. At this point, one of the registers is $|\mathbf{v}\cdot \mathbf{w}_i\rangle$. In order to check for the condition $|\mathbf{v}\cdot \mathbf{w}_i| \geq \|\mathbf{v}\|^2/2$, we can first compute the sum $\mathbf{v}\cdot \mathbf{w}_i - \|\mathbf{v}\|^2/2$ by using a $\kappa$-bit adder and copy its highest-order bit onto a qubit in the $|-\rangle$ state for a phase kickback. The adder generates an ancillary register containing $|\mathbf{v}\cdot \mathbf{w}_i - \|\mathbf{v}\|^2/2\rangle$. In order to check whether $-\mathbf{v}\cdot\mathbf{w}_i \geq \|\mathbf{v}\|^2/2 \iff \mathbf{v}\cdot\mathbf{w}_i - \|\mathbf{v}\|^2/2 < -\|\mathbf{v}\|^2$, we can apply a second $\kappa$-bit adder between the ancillary register $|\mathbf{v}\cdot \mathbf{w}_i - \|\mathbf{v}\|^2/2\rangle$ and the classical input $\|\mathbf{v}\|^2$. The highest-order bit of the result $|\mathbf{v}\cdot \mathbf{w}_i + \|\mathbf{v}\|^2/2\rangle$ is then flipped, since we are checking for a negative number, and copied onto the $|-\rangle$ ancilla for a phase kickback. This implements the phase $(-1)^{g_{\rm gauss}(i)}$ as at most one condition $\mathbf{v}\cdot \mathbf{w}_i \geq \|\mathbf{v}\|^2/2$ or $-\mathbf{v}\cdot \mathbf{w}_i \geq \|\mathbf{v}\|^2/2$ can be true. After this, we uncompute all the arithmetic operations and $\mathsf{QRAM}$ call. The required submodules are summarised in \cref{table:oracle_modulos}.

\subsubsection{Using LSH/LSF in the GaussSieve} 

\begin{algorithm}[t]
    \caption{The $\mathtt{GaussSieve}$ with LSH}\label{alg:gauss_sieve_lsh}
    \SetKwFor{ForEach}{for each}{do}{endfch}
    \SetKw{And}{and}
    
    \KwIn{Basis $\mathbf{B}$ for a $D$-dimensional lattice, termination constant $\zeta$, parameters $k$ and $t$, hash family $\mathcal{H}$}
    \KwOut{Shortest vector $\mathbf{v}^\ast$ of the lattice.}

    $L \gets \emptyset$, $S \gets \emptyset$, $K \gets 0$

    Initialise $t$ empty hash tables $\mathcal{T}_1,\dots,\mathcal{T}_t$ and sample $k\cdot t$ random hash functions $h_{i,j}\in\mathcal{H}$

    \While{$K < \zeta$}
    {
        \If{$S = \emptyset$}
        {
            $\mathbf{v}\gets \text{SampleKlein}(\mathbf{B})$
        }
        \Else
        {
            $\mathbf{v}\gets S.\text{pop}()$
        }

        $C\gets \bigcup_{i=1}^t \mathcal{T}_i[h_i(\mathbf{v})]$ is the list of candidate vectors \label{line:candidates_gauss}

        Construct a $\mathsf{QRAM}$ for $C$

        \While{$\mathbf{w} \gets \mathtt{GroverSearch}(\mathbf{w}\in C : \|\mathbf{v} \pm \mathbf{w}\| < \|\mathbf{v}\|)$ \And $\mathbf{w} \neq \mathtt{NULL}$ \label{line:gauss_sieve_search1_lsh}} 
        {
            Reduce $\mathbf{v}$ with $\mathbf{w}$
        }
        \While{$\mathbf{w} \gets \mathtt{GroverSearch}(\mathbf{w}\in C : \|\mathbf{w} \pm \mathbf{v}\| < \|\mathbf{w}\|)$ \And $\mathbf{w} \neq \mathtt{NULL}$ \label{line:gauss_sieve_search2_lsh}} 
        {
            $L\gets L\setminus\{\mathbf{w}\}$

            Remove $\mathbf{w}$ from all hash tables $\mathcal{T}_1,\dots,\mathcal{T}_t$
        
            Reduce $\mathbf{w}$ with $\mathbf{v}$

            \If{$\mathbf{w}\neq \mathbf{0}$}
            {
                $S \gets S\cup\{\mathbf{w}\}$
            }
        }
        \If{$\mathbf{v} ~ \mathrm{has~changed}$ \And $\mathbf{v}\neq \mathbf{0}$}
        {
            $S.\text{push}(\mathbf{v})$
        }
        \If{$\mathbf{v} = \mathbf{0}$}
        {
            $K \gets K+1$
        }
        \Else
        {
            $L \gets L\cup\{\mathbf{v}\}$

            For each $i\in[t]$, add $\mathbf{v}$ to the bucket $\mathcal{T}_i[h_i(\mathbf{v})]$ in the hash table $\mathcal{T}_i$
        }
    }   
    \Return shortest vector $\textbf{v}^\ast$ in $L$
\end{algorithm}

Similarly to the $\mathtt{NVSieve}$, LSH/LSF can be used in the $\mathtt{GaussSieve}$ as a filter to get a preliminary set of vectors to search among: instead of using a brute-force list search, we can only search through the candidate vectors $C$ that hash to the same value (that is, they are close-by). The modified algorithm is given in \cref{alg:gauss_sieve_lsh}. The main idea is again to employ hash tables $\mathcal{T}_1,\dots,\mathcal{T}_t$ and replace the search over the entire list $L$ with a shorter list of candidates $C = \bigcup_{i=1}^t \mathcal{T}_i[h_i(\mathbf{v})]$ that collide with the target vector $\mathbf{v}$ in at least one of the buckets $\mathcal{T}_1[h_1(\mathbf{v})],\dots,\mathcal{T}_t[h_t(\mathbf{v})]$. Once again, in order to use Grover's search, we must first classically gather all the indices of the vectors that collide with~$\mathbf{v}$. If $C = \{\mathbf{w}_{j_1},\mathbf{w}_{j_2},\dots,\mathbf{w}_{j_{|C|}}\}$, then we use the indices $\{j_1,j_2,\dots,j_{|C|}\}$ to access the vectors in $C$ via $\mathsf{RAM}$ calls and thus perform the classically controlled $\mathsf{CNOT}$s in during a $\mathsf{QRAM}$ call. The phase $\mathcal{O}_{\rm gauss}^{(1)}:|i\rangle\mapsto(-1)^{f_{\rm gauss}(i)}|i\rangle$, where $f_{\rm gauss}:[|C|]\to\{0,1\}$ is defined by $f_{\rm gauss}(i) = 1$ if and only if either $\mathbf{w}_{j_i}\cdot(\mathbf{w}_{j_i}+2\mathbf{v}) < 0$ or $\mathbf{w}_{j_i}\cdot(\mathbf{w}_{j_i}-2\mathbf{v}) < 0$, is hence implemented first by one $\mathsf{QRAM}$ call and $|C|$ $\mathsf{RAM}$ calls, similarly to \cref{eq:nv_sieve_qram}, after which the remaining addition and multiplication operations explained in the previous section are performed (plus overall uncomputation). The exact same procedure is required in the phase oracle $\mathcal{O}_{\rm gauss}^{(2)}:|i\rangle\mapsto(-1)^{g_{\rm gauss}(i)}|i\rangle$, where $g_{\rm gauss}:[|C|]\to\{0,1\}$ is defined by $g_{\rm gauss}(i) = 1$ if and only if $|\mathbf{v}\cdot \mathbf{w}_{j_i}| \geq \|\mathbf{v}\|^2/2$. Both oracles $\mathcal{O}_{\rm gauss}^{(1)}$ and $\mathcal{O}_{\rm gauss}^{(2)}$ thus require one $\mathsf{QRAM}$ call of size $|C|$. \cref{table:oracle_modulos} summarises the subroutines needed to implement both phase~oracles.

\subsubsection{GaussSieve with angular LSH} 

Hashing all vectors in the list $L$ requires, similarly to $\mathtt{NVSieve}$, $2k\cdot |L|\cdot t$ multiplications and additions, where $k = \log_{3/2}{t} - \log_{3/2}\ln(1/\varepsilon)$. The list of candidates on \cref{line:candidates_gauss} has size $|C| \approx |L|\cdot p_2^\ast$, where $p_2^\ast$ is given by \Cref{eq:probability_collision_angular_hash}.

\paragraph{Classical complexity.} The classical time spent searching over $C$ is $O(D\cdot I \cdot |L|\cdot p^\ast_2)$, where $I$ is the number of iterations of $\mathtt{GaussSieve}$. To be more precise, the first search loop over $C$ (\cref{line:gauss_sieve_search1_lsh}) requires $4D - 2$ additions and $2D$ multiplications to check whether one vector can reduce another, while the second search loop over $C$ (\cref{line:gauss_sieve_search2_lsh}) requires $D+1$ additions and $D$ multiplications (see \cref{table:oracle_modulos}). The number of hash tables $t$ is determined by balancing the time hashing $O(k\cdot |L|\cdot t)$ with the time searching $O(D\cdot I \cdot |L| \cdot p_2^\ast)$. The asymptotic classical time and space complexities are $2^{0.336562D + o(D)}$~\cite{laarhoven15angular,Laarhoven2016search}. We stress that the time complexities are only conjectures, in contrast to the $\mathtt{NVSieve}$, where bounds can be proven under reasonable assumptions.

\paragraph{Quantum complexity.} The quantum time spent searching over $C$ is $O(D\cdot I \sqrt{|L|\cdot p^\ast_2})$. The number of hash tables $t$ is determined by balancing the time hashing $O(k\cdot |L|\cdot t)$ with the time searching $O(D\cdot I \sqrt{|L| \cdot p_2^\ast})$. The asymptotic quantum time and space complexities are $2^{0.285949D+o(D)}$~\cite{Laarhoven2015Quantum,Laarhoven2016search}.

\subsubsection{GaussSieve with spherical LSH} 

Hashing all vectors in the list $L$ requires, similarly to $\mathtt{NVSieve}$, $D\cdot 2^{\sqrt{D}} \cdot k \cdot t \cdot |L|$ multiplications and additions, where $k = 6(\ln{t} - \ln\ln(1/\varepsilon))/\sqrt{D}$. The list of candidates on \cref{line:candidates_gauss} has size $|C| \approx |L|\cdot p_2^\ast$, where $p_2^\ast$ is given by \Cref{eq:probability_collision_spherical_hash}.

\paragraph*{Classical complexity.} Similarly to angular LSH, the first search loop over $C$ (\cref{line:gauss_sieve_search1_lsh}) requires $4D - 2$ additions and $2D$ multiplications to check whether one vector can reduce another, while the second search loop over $C$ (\cref{line:gauss_sieve_search2_lsh}) requires $D+1$ additions and $D$ multiplications (see \cref{table:oracle_modulos}). The number of hash tables $t$ is determined by balancing the time hashing $O(D\cdot 2^{\sqrt{D}}\cdot k\cdot t \cdot |L|)$ with the time searching $O(D\cdot I \cdot |L| \cdot p_2^\ast)$. The asymptotic classical time and space complexities are $2^{0.297143D + o(D)}$~\cite{laarhoven15angular,Laarhoven2016search}. 

\paragraph*{Quantum complexity.} The quantum time spent searching over $C$ is $O(D\cdot I \sqrt{|L|\cdot p^\ast_2})$. The number of hash tables $t$ is determined by balancing the time hashing $O(k\cdot |L|\cdot t)$ with the time searching $O(D\cdot I \sqrt{|L| \cdot p_2^\ast})$. The asymptotic quantum time and space complexities are $2^{0.267100D+o(D)}$~\cite{Laarhoven2015Quantum,Laarhoven2016search}.

\subsubsection{GaussSieve with spherical LSF}

We fix $k=1$ concatenated filters per bucket and $t = \ln(1/\varepsilon)/\mathcal{W}_D(\alpha,\alpha,\pi/3)$ hash tables. Inserting all the vectors in $L$ into relevant filters requires approximately $2\log_2{D}\cdot |L| \cdot t\cdot \mathcal{C}_D(\alpha)$ additions.  The list of candidates on \cref{line:candidates_nvsieve} has size $|C|\approx |L|\cdot t\cdot \mathcal{C}_D(\alpha)^2$.

\paragraph*{Classical complexity.} Again, the first search loop over $C$  requires $4D - 2$ additions and $2D$ multiplications to check whether one vector can reduce another, while the second search loop over $C$ requires $D+1$ additions and $D$ multiplications. The parameter $\alpha$ is determined by minimising the sum of the time coming from filtering $O(\log{D}\cdot |L|\cdot t\cdot \mathcal{C}_D(\alpha))$ and the time coming from searching $O(D\cdot I\cdot |L|\cdot t\cdot \mathcal{C}_D(\alpha)^2)$. The asymptotic classical time and space complexities are $(3/2)^{D/2 + o(D)} \approx 2^{0.292481D + o(D)}$~\cite{becker2016new,Laarhoven2016search}.

\paragraph*{Quantum complexity.} The quantum time spent comparing vectors that collide on relevant filters is now $O(D\cdot I\sqrt{|L|\cdot t\cdot \mathcal{C}_D(\alpha)^2})$. The parameter $\alpha$ is determined by minimising the time required to filter plus the time required to search. The asymptotic quantum complexities are $(13/9)^{D/2 + o(D)} \approx 2^{0.265257D + o(D)}$~\cite{Laarhoven2016search}.

\section{Resource estimation analysis}
\label{sec:results}

In this section, we perform a thorough resource estimation required to implement Grover's search to speed-up the $\mathtt{NVSieve}$ and $\mathtt{GaussSieve}$ algorithms, both with and without LSH techniques. For such, we take into consideration the cost of arithmetic circuits from \Cref{sec:arithmetic} and $\mathsf{QRAM}$ from \Cref{sec:qram} in implementing the phase oracles from Grover's search (\Cref{sec:grover_search}), together with the overhead coming from quantum error correction and magic state distillation from \Cref{sec:error_correction}. Our analysis will cover several facets from the quantum computation part within the sieving algorithms: circuit size and depth, number of logical and physical qubits, and overall runtime. Moreover, we shall analyse the most expensive sieving step and the total cost of \emph{all} sieving steps (which includes smaller list sizes). We shall also gauge the impact of an error-corrected $\mathsf{QRAM}$ by suppressing its costs and comparing the end result with the full algorithmic cost. This shall be important from NIST's standpoint, because for the purpose of the standardization of post-quantum cryptographic technologies, it would be prudent to consider the possibility of a breakthrough quantum memory architecture making efficient queries possible. We start by describing how all the pieces from the previous sections fit together and the cost analysis is done in the case of lattice dimension $D = 400$, which is roughly the dimension in which SVP has to be solved to be able to break the minimally secure post-quantum cryptographic standards currently being standardised (see~\cite[Table~1]{Dilithium2021Algorithm} and~\cite[Table~4]{kyber2021}).

\subsection{Case study: $D=400$}
\label{sec:case_study}

\subsubsection{NVSieve without LSH/LSF}

Let us consider the case where the rank of the lattice is $D = 400$ and analyse the cost of employing Grover's search in the $\mathtt{NVSieve}$ without LSH/LSF from \cref{alg:nv_sieve}. For simplicity, we will focus on one Grover's search. Even though the sizes of $L$ and $S$ are random, we assume a worst-case list of centers of size $|S| = 2^{0.2352D + 0.102\log_2{D} + 2.45} \approx 2.15\cdot 10^{29}$ and a list of size $|L| = D|S| \approx 8.61\cdot 10^{31}$ as mentioned in \cref{sec:heuristics_nvsieve}. Moreover, we assume there is only one solution to each Grover's search.

\paragraph{Logical costs.} The first step is to gather all the logical costs like $\mathsf{Toffoli}$-count, number of logical qubits (circuit's width), and the circuit's active volume. According to \Cref{table:oracle_modulos}, the phase oracle $\mathcal{O}_{\rm NV}$ from Grover's search requires $1$ $\mathsf{QRAM}$ call, $2D$ $\kappa$-bit adders, and $D$ $\kappa$-bit multipliers. Since the expected number of Grover iterations is $\lceil 3.1\sqrt{|S|}\rceil$ per Grover's search, we require
\begin{align*}
    \mathsf{Toffoli}\text{-count}: \underbrace{\lceil 3.1\sqrt{|S|}\rceil}_{\rm Grover~iterations}\big(\underbrace{|S| - 2}_{\mathsf{QRAM}} + \underbrace{2D(\kappa - 1)}_{2D~\text{adders}} + \underbrace{D(\kappa^2 - \kappa + 1)}_{D~\text{multipliers}} + \underbrace{\lceil\log_2|S|\rceil - 1}_{\rm Diffusion~operator}\big) \approx 3.09\cdot 10^{44}.
\end{align*}

Regarding the number of logical qubits, ancillae can be reused from one iteration to the next, so the maximum width (dirty ancillae plus input/output qubits) of Grover's circuit comes from $\mathsf{QRAM}$ plus the arithmetic operations and diffusion operator. One $\mathsf{QRAM}$ call needs $2|S| - \lceil\log_2|S|\rceil - 1$ dirty ancillae, plus $\lceil\log_2|S|\rceil + D\kappa$ qubits from input/output registers. On the other hand, the first $D$ adders have a width of $3D\kappa$; the following $D$ multipliers have a width of $D(2\kappa^2 + \kappa)$; the subsequent $D-1$ adders have a width of $(2D-1)\kappa$; the final adder has a width of $3\kappa$. Taking into account the overlap between different widths, since the output of one step is the input of the subsequent one, the total amount of logical qubits required is
\begin{align*}
   \text{Logical qubits}: 2\big(\underbrace{2|S| + D\kappa - 1}_{\mathsf{QRAM}} + \underbrace{2D\kappa}_{D~\text{adders}} + \underbrace{2D\kappa^2}_{D~\text{multipliers}} + \underbrace{D\kappa + \kappa}_{D~\text{adders}}\big) \approx 8.61\cdot 10^{29},
\end{align*}
where the factor $2$ takes into account the space overhead coming from fast data blocks in baseline architectures, and from workspace qubits in active-volume architectures.

The active volume of the whole circuit is calculated by simply summing up the active volumes of $1$ $\mathsf{QRAM}$ call, $2D$ $\kappa$-bit adders, and $D$ $\kappa$-bit multipliers. Using the bucket-bridage $\mathsf{QRAM}$ from \Cref{lem:qram_resources} and $C_{|CCZ\rangle} = 65$, the active volume of one Grover's search is
\begin{align*}
\begin{multlined}[\textwidth]
    \text{Active volume} : \underbrace{\lceil 3.1\sqrt{|S|}\rceil}_{\rm Grover~iterations}\big(\underbrace{(25 + 1.5\kappa + C_{|CCZ\rangle})|S|}_{\mathsf{QRAM}} + \underbrace{2D((\kappa-1)(39+C_{|CCZ\rangle}) + 7)}_{\rm Adders} \\
    + \underbrace{D(28\kappa^2 - 42\kappa + 28 + (\kappa^2 - \kappa + 1)C_{|CCZ\rangle})}_{\rm Multipliers} + \underbrace{(\lceil\log_2 |S|\rceil - 1)(18+C_{|CCZ\rangle})}_{\rm Diffusion~operator}\big) \approx 4.27\cdot 10^{46}.
 \end{multlined}
\end{align*}

The reaction depth (which, in our case, is double the $\mathsf{Toffoli}$-depth) follows from a simple concatenation of all the individual operations. The reaction depth of the phase oracle is the sum of reaction depths of one $\mathsf{QRAM}$ call, one $\kappa$-bit multiplier, and $2+\lceil\log_2D\rceil$ $\kappa$-bit adders. By adding the reaction depth of the diffusion operator (\Cref{fact:grover_search_unknown}) and multiplying the result by the number of Grover iterations $\lceil 3.1\sqrt{|S|}\rceil$, the get
\begin{align*}
    \text{Reaction depth}: \underbrace{\lceil 3.1\sqrt{|S|}\rceil}_{\rm Grover~iterations}\big(&\underbrace{2\lceil\log_2|S|\rceil - 2}_{\mathsf{QRAM}} + \underbrace{2\kappa\log_2\kappa - 2\kappa - 2\log_2\kappa + 4}_{\rm Multipliers} \\
    &+ \underbrace{2(\lceil\log_2 D\rceil + 2)(\kappa - 1)}_{\rm Adders} + \underbrace{2\lceil\log_2 \lceil\log_2 |S|\rceil\rceil}_{\rm Diffusion~operator}\big) \approx 1.64\cdot 10^{18}.
\end{align*}

\paragraph{Code distance and time.} First consider a baseline architecture. Consider that there are enough distillation factories (see below) such that each $\mathsf{Toffoli}$ layer is performed every $4$ logical cycles. Then one Grover's search employs $8.61\cdot 10^{29}$ logical qubits and $3.28\cdot 10^{18}$ logical cycles, to a total spacetime volume of $2.82\cdot 10^{48}$ logical blocks of size $d^3$. In order to keep a logical error probability below $0.1\%$ per Grover's search, we must choose a code distance $d$ such that
\begin{align*}
    2.82\cdot 10^{48} \cdot d \cdot 0.1(100p_{\rm phy})^{(d+1)/2} \leq 0.001.
\end{align*}
With physical error $p_{\rm phy} = 10^{-5}$, the above is satisfied by $d=34$, which yields a logical error probability of $\approx 0.03\%$. Since each logical qubit requires $2d^2$ physical qubits (taking into account the ancillae required for the check operators measurements), one Grover's search employs $1.99\cdot 10^{33}$ physical qubits. With a code cycle of $100$~ns, the circuit time of one Grover's search is $3.53\cdot 10^{5}$ years.

Consider now an active-volume architecture. With $8.61\cdot 10^{29}$ logical qubits and an active volume of $4.27\cdot 10^{46}$, the total spacetime volume is $8.54\cdot 10^{46}$ logical blocks of size $d^3$, twice the active volume. The number of logical cycles is $2(4.27\cdot 10^{46})/(8.61\cdot 10^{29}) = 9.92\cdot 10^{16}$ per Grover's search, since only half the logical qubits, the workspace qubits, execute logical blocks in every logical cycle. In order to keep a logical error probability below $0.1\%$, we must choose a code distance $d$ such that
\begin{align*}
    2\cdot 4.27\cdot 10^{46} \cdot d \cdot 0.1(100p_{\rm phy})^{(d+1)/2} \leq 0.001.
\end{align*}
With physical error $p_{\rm phy} = 10^{-5}$, the above is satisfied by $d=34$, which yields a logical error probability of $\approx 0.001\%$. Since each logical qubit requires $d^2$ physical qubits, one Grover's search employs $9.95\cdot 10^{32}$ physical qubits. With a code cycle of $100$~ns, the circuit time of one Grover's search is $\approx 2,670$ years.

Due to the sequential natural of classical processing associated with surface-code-based quantum computation, the runtime of every circuit is limited by its reaction depth. Given the reaction depth of $1.64\cdot 10^{18}$ and a reaction time of $1$ $\mu$s, the Grover's search is thus reaction limited at $\approx 52,000$ years. This limits the active-volume architecture to a runtime of $52,000$ years, and not $\approx 2,670$ years.

\paragraph{Distillation protocol.} Finally, we determine the distillation protocol necessary for the computation, which is obtained from the $\mathsf{Toffoli}$-count. We require that the error probability of performing $3.09\cdot 10^{44}$ $\mathsf{Toffoli}$ gates be less than $0.1\%$, which means that each magic state $|CCZ\rangle$ must have an error rate below $3.23\cdot 10^{-48}$. For baseline architectures with the above code distance $d=34$, the distillation protocol $(15\text{-to-}1)_{\lceil d/4\rceil,\lceil d/8\rceil,\lceil d/8\rceil}^4\times (15\text{-to-}1)_{\lceil d/2\rceil,\lceil d/4\rceil,\lceil d/4\rceil }^4\times (8\text{-to-CCZ})_{d,\lceil d/2\rceil,\lceil d/2\rceil}$ outputs a magic state $|CCZ\rangle$ with error rate of $5.4\cdot 10^{-50}$ every $108$ code cycles by using $111,192$ physical qubits, which is enough for our needs. Since each $\mathsf{Toffoli}$ layer must be executed every $4d = 132$ code cycles, we require $\approx \frac{108}{132}|S|/2 = 8.54\cdot 10^{28}$ distillation factories running in parallel, which adds another $9.50\cdot 10^{33}$ physical qubits to a total of $1.15\cdot 10^{34}$ physical qubits. Regarding active-volume architectures, on the other hand, the same $(15\text{-to-}1)_{\lceil d/4\rceil,\lceil d/8\rceil,\lceil d/8\rceil}^4\times (15\text{-to-}1)_{\lceil d/2\rceil,\lceil d/4\rceil,\lceil d/4\rceil }^4\times (8\text{-to-CCZ})_{d,\lceil d/2\rceil,\lceil d/2\rceil}$ protocol with $d=34$ outputs a magic state $|CCZ\rangle$ with error rate of $5.4\cdot 10^{-50}$. The associated resources are already included in the active volume cost $C_{|CCZ\rangle}$.

\begin{table}[t]
    \small
    \centering
    \caption{Summary of required resources to perform one Grover's search with one solution in the $\mathtt{NVSieve}$ with and without LSH/LSF assuming baseline and active-volume physical architectures. Reaction limit and circuit time are measured in hours, and final time is the maximum between both. We assume a lattice dimension $D=400$, logical and magic distillation probability errors smaller than $10^{-3}$, and a Grover's search probability error of $10^{-3}$. $\mathtt{NVSieve}$ without LSH has list of centers of size $|S| = 2^{0.2352D + 0.102\log_2{D} + 2.45}$. $\mathtt{NVSieve}$ with LSH/LSF replaces $S$ with a list of candidates of size $|C| = |S|\cdot p_2^\ast$ for LSH and $|C| = |S|\cdot \mathcal{C}_D(\alpha)^2\cdot \ln(1/\varepsilon)/\mathcal{W}_D(\alpha,\alpha,\pi/3)$ for LSF, where $\varepsilon = 10^{-3}$.}
    \label{table:resources_D400_nvsieve}
    \def\arraystretch{1.4}
    \resizebox{0.8\linewidth}{!}{
    \begin{tabular}{ c | c | c | c | c | c |}
        \cline{2-6}
        & Resource / Sieve & $\mathtt{NVSieve}$ & \makecell{$\mathtt{NVSieve}$ +\\ angular LSH} & \makecell{$\mathtt{NVSieve}$ +\\ spherical LSH} & \makecell{$\mathtt{NVSieve}$ +\\ spherical LSF} \\ \cline{2-6}\cline{2-6}

        & List size & $2.15\cdot 10^{29}$ & $3.46\cdot 10^{21}$ & $2.71\cdot 10^{20}$ & $1.35\cdot 10^{15}$  \\ \cline{2-6}

        & Hashing parameter $k$ & - & $83$ & $5$ & $1$  \\ \cline{2-6}

        & Number hash tables $t$ & - & $2.28\cdot 10^{15}$ & $2.75\cdot 10^7$ & $2.84\cdot 10^{38}$  \\ \cline{2-6}

        & Filter angle $\alpha$ & - & - & - &  $\pi/3$ \\  \cline{2-6}
        
        & Logical qubits & $8.61\cdot 10^{29}$ & $1.39\cdot 10^{22}$ & $1.08\cdot 10^{21}$ & $5.42\cdot 10^{15}$ \\ \cline{2-6}

        & $\mathsf{Toffoli}$-count & $3.09\cdot 10^{44}$ & $6.32\cdot 10^{32}$ & $1.38\cdot 10^{31}$ & $1.55\cdot 10^{23}$ \\ \cline{2-6}

        & $\mathsf{Toffoli}$-width & $1.08\cdot 10^{29}$ & $1.73\cdot 10^{21}$ & $1.35\cdot 10^{20}$ & $6.77\cdot 10^{14}$ \\ \cline{2-6}

        & Active volume & $4.27\cdot 10^{46}$ & $8.73\cdot 10^{34}$ & $1.90\cdot 10^{33}$ & $2.13\cdot 10^{25}$ \\ \cline{2-6}

        & Reaction depth & $1.64\cdot 10^{18}$ & $1.99\cdot 10^{14}$ & $5.51\cdot 10^{13}$ & $1.19\cdot 10^{11}$ \\ \cline{2-6}

        & Reaction limit (hours) & $4.55\cdot 10^8$ & $5.51\cdot 10^4$ & $1.53\cdot 10^4$ & $3.31\cdot 10^1$  \\ \hhline{-=====}

        \multicolumn{1}{ | c | }{\multirow{5}{*}{\rotatebox{90}{Baseline} }} & Code distance & $34$ & $26$ & $25$ & $20$ \\ \cline{2-6}

        \multicolumn{1}{ |c | }{} & Distillation factories & $8.54\cdot 10^{28}$ & $1.40\cdot 10^{21}$ & $1.14\cdot 10^{20}$ & $5.08\cdot 10^{14}$ \\ \cline{2-6}

        \multicolumn{1}{ |c | }{} & Physical qubits & $1.15\cdot 10^{34}$ & $1.12\cdot 10^{26}$ & $8.78\cdot 10^{24}$ & $2.33\cdot 10^{19}$ \\ \cline{2-6}

        \multicolumn{1}{ |c | }{} & Circuit time (hours) & $3.10\cdot 10^9$ & $2.87\cdot 10^5$ & $7.65\cdot 10^4$ & $1.32\cdot 10^2$ \\ \cline{2-6}
        
        \multicolumn{1}{ |c | }{} & Final time (hours) & $3.10\cdot 10^9$ & $2.87\cdot 10^5$ & $7.65\cdot 10^4$ & $1.32\cdot 10^2$ \\ \hline \hline

        \multicolumn{1}{ | c | }{\multirow{4}{*}{\rotatebox{90}{Active-volume}}} & Code distance & $34$ & $26$ & $24$ & $20$ \\ \cline{2-6}

        \multicolumn{1}{ |c | }{} & Physical qubits & $9.95\cdot 10^{32}$ & $9.37\cdot 10^{24}$ & $6.24\cdot 10^{23}$ & $2.17\cdot 10^{18}$  \\ \cline{2-6}

        \multicolumn{1}{ |c | }{} & Circuit time (hours) & $2.34\cdot 10^7$ & $2.27\cdot 10^3$ & $5.86\cdot 10^2$ & $1.09\cdot 10^0$  \\ \cline{2-6}
        
        \multicolumn{1}{ |c | }{} & Final time (hours) & $4.55\cdot 10^8$ & $5.51\cdot 10^4$ & $1.53\cdot 10^4$ & $3.31\cdot 10^1$ \\ \hline 
    \end{tabular}}
\end{table}

\subsubsection{GaussSieve without LSH/LSF}
\label{sec:case_study_gausssieve}

We move on to analysing the cost of Grover's search in the $\mathtt{GaussSieve}$ without LSH/LSF (\Cref{alg:gauss_sieve}). The analysis of $\mathtt{GaussSieve}$ is harder since it is a heuristic algorithm with few proven properties. In each sieving step, there are two search loops that are called while a solution can be found. For simplicity, we consider one Grover's search with $M=1$ solution. Another heuristic parameter of the algorithm is the list size. Here we assume a sieving step with list size $|L| = 2^{0.193D + 2.325} \approx 8.70\cdot 10^{23}$ as reported by us (see \cref{sec:heuristics_gausssieve}).

\paragraph{Logical costs.} Once again we gather all the logical costs first. In each sieving step, there are two different search loops being performed. According to \Cref{table:oracle_modulos}, the phase oracle $\mathcal{O}_{\rm gauss}^{(1)}$ from the first loop requires $1$ $\mathsf{QRAM}$ call of size $|L|$, $4D-2$ $\kappa$-bit adders, and $2D$ $\kappa$-bit multipliers, while the phase oracle $\mathcal{O}_{\rm gauss}^{(2)}$ from the second loop requires $1$ $\mathsf{QRAM}$ call of size $|L|$, $D+1$ $\kappa$-bit adders, and $D$ $\kappa$-bit hybrid multipliers. The expected number of Grover iterations is $\lceil 3.1 \sqrt{|L|}\rceil$. The $\mathsf{Toffoli}$-count of the two search loops is
\begin{align*}
    \mathsf{Toffoli}\text{-count loop 1} &: \underbrace{\lceil 3.1 \sqrt{|L|}\rceil}_{\rm Iterations} \big(\underbrace{|L| - 2}_{\mathsf{QRAM}} + \underbrace{(4D-2)(\kappa-1)}_{4D-2~{\rm adders}} + \underbrace{2D(\kappa^2 - \kappa + 1)}_{2D ~{\rm multipliers}} + \underbrace{\lceil\log_2|L|\rceil - 1}_{\rm Difussion~operator}\big), \\
    \mathsf{Toffoli}\text{-count loop 2} &: \underbrace{\lceil 3.1 \sqrt{|L|}\rceil}_{\rm Iterations} \big(\underbrace{|L| - 2}_{\mathsf{QRAM}} + \underbrace{(D+1)(\kappa-1)}_{5D-1~{\rm adders}} + \underbrace{D(0.5\kappa^2 - 1.5\kappa + 1)}_{D ~{\rm multipliers}} + \underbrace{\lceil\log_2|L|\rceil - 1}_{\rm Difussion~operator}\big),
\end{align*}
both approximately equal to $2.51\cdot 10^{36}$.

Regarding the number of logical qubits, the first search loop requires (already taking overlaps into account) $2|L| + \kappa D - 1$ qubits for the $\mathsf{QRAM}$, $D\kappa$ qubits after copying $|\mathbf{w}_i\rangle$ once, $4D\kappa$ qubits for the parallel $2D$ $\kappa$-bit adders, $2D(2\kappa^2 - \kappa)$ qubits for the parallel $2D$ $\kappa$-bit multipliers, and finally $2\kappa(D-1)$ qubits for the final $2D-2$ $\kappa$-bit adders. The second search loop requires (already taking overlaps into account) $2|L| + \kappa D - 1$ qubits for the $\mathsf{QRAM}$, $D(1.5\kappa^2 - 0.5\kappa)$ qubits for the $D$ $\kappa$-bit hybrid multipliers, $(D-1)\kappa$ qubits for the $D-1$ $\kappa$-bit adders, and finally $3\kappa$ qubits for the last $2$ $\kappa$-bit adders. It is not hard to see that the first search loop employs the most logical qubits, which is the final count:
\begin{align*}
    \text{Logical qubits}&: 2\big(\underbrace{2|L| + D\kappa  - 1}_{\mathsf{QRAM}} + \underbrace{D\kappa}_{\rm Copying} + \underbrace{4D\kappa}_{2D~\text{adders}} + \underbrace{2D(2\kappa^2 - \kappa)}_{2D~\text{multipliers}} + \underbrace{2(D-1)\kappa}_{2D - 2 ~\text{adders}} \big) \approx 3.48\cdot 10^{24}.
\end{align*}

The active volume of both search loops is simply the sum of the individual active volumes,
\begin{align*}
    \text{Active volume loop 1} &: \underbrace{\lceil 3.1 \sqrt{|L|}\rceil}_{\rm Iterations}\big( \underbrace{(25 + 1.5\kappa + C_{|CCZ\rangle})|L|}_{\mathsf{QRAM}} + \underbrace{(\lceil\log_2|L|\rceil - 1)(18+C_{|CCZ\rangle})}_{\rm Diffusion~operator}\\
    &+ \underbrace{(4D-2)((\kappa-1)(39 + C_{|CCZ\rangle}) + 7)}_{\rm Adders} + \underbrace{4(2D\kappa + 4)}_{\rm Extra~\mathsf{CNOT}s} \\
    &+ \underbrace{2D(28\kappa^2 - 42\kappa + 28 + (\kappa^2 - \kappa + 1)C_{|CCZ\rangle})}_{\rm Multipliers} \big),\\
    \text{Active volume loop 2} &: \underbrace{\lceil 3.1 \sqrt{|L|}\rceil}_{\rm Iterations}\big( \underbrace{(25 + 1.5\kappa + C_{|CCZ\rangle})|L|}_{\mathsf{QRAM}} + \underbrace{(\lceil\log_2|L|\rceil - 1)(18+C_{|CCZ\rangle})}_{\rm Diffusion~operator}\\
    &+ \underbrace{(D+1)((\kappa-1)(39 + C_{|CCZ\rangle}) + 7)}_{\rm Adders}\\
    &+ \underbrace{D(20.25\kappa^2 - 48.75\kappa + 32 + (0.5\kappa^2 - 1.5\kappa + 1)C_{|CCZ\rangle})}_{\rm Multipliers} \big),
\end{align*}
both approximately equal to $3.47\cdot 10^{38}$, while the reaction depth of one Grover's search in each search loop is
\begin{align*}
    \text{Reaction depth} : \underbrace{\lceil 3.1 \sqrt{|L|}\rceil}_{\rm Iterations}&\big(\underbrace{2\lceil\log_2|L|\rceil - 2}_{\mathsf{QRAM}} + \underbrace{2(1+\lceil\log_2D\rceil)(\kappa-1)}_{\rm Adders} + \underbrace{2\kappa\log_2\kappa - 2\kappa - 2\log_2\kappa + 4}_{\rm Multipliers} \\
    &+ \underbrace{2\lceil\log_2\lceil\log_2|L|\rceil\rceil}_{\rm Diffusion~operator} \big) \approx 3.01\cdot 10^{15}.
\end{align*}

\paragraph{Code distance and time.} The analysis is the same to the $\mathtt{NVSieve}$ case. First consider a baseline architecture and the Grover's search with $Q = \lceil 3.1 \sqrt{|L|}\rceil$ iterations from the first search loop. Assuming enough distillation factories, each $\mathsf{Toffoli}$ layer is performed every $4$ logical cycles. The Grover's search employs a total of $3.48\cdot 10^{24}$ logical qubits and $6.03\cdot 10^{15}$ logical cycles. In order to keep a logical error probability below $0.1\%$, we choose a code distance $d$ such that
\begin{align*}
    3.48\cdot 10^{24} \cdot 6.03\cdot 10^{15} \cdot d \cdot 0.1(100p_{\rm phy})^{(d+1)/2} \leq 0.001.
\end{align*}
Give $p_{\rm phy} = 10^{-5}$, the above is satisfied by $d=29$, yielding a logical error probability of $\approx 0.006\%$. With each logical qubit requiring $2d^2$ physical qubits, $5.85\cdot 10^{27}$ physical qubits are used (excluding distillation qubits). With a code cycle of $100$~ns, the Grover's circuit time is $\approx 554$ years. The same steps can be repeated for the other search loop, which we omit here. 

Now consider an active-volume architecture. The Grover's search with $Q = \lceil 3.1 \sqrt{|L|}\rceil$ iterations from the first search loop requires $3.48\cdot 10^{24}$ logical qubits and active volume of $3.47\cdot 10^{38}$, and therefore $2(3.47\cdot 10^{38})/(3.48\cdot 10^{24}) = 6.03\cdot 10^{15}$ logical cycles. The code distance $d$ is chosen so that
\begin{align*}
    2\cdot 3.47\cdot 10^{38} \cdot d \cdot 0.1(100p_{\rm phy})^{(d+1)/2} \leq 0.001.
\end{align*}
Given $p_{\rm phy} = 10^{-5}$, the above is satisfied by $d = 28$, yielding a logical error probability of $\approx 0.006\%$. With each logical qubit requiring $d^2$ physical qubits, $2.73\cdot 10^{27}$ physical qubits are required. With a code cycle of $100$~ns, the Grover's circuit time is $\approx 4.4$ years. The same steps can be repeated for the other search loop, which we omit here. 

Finally, given a reaction depth of $3.01\cdot 10^{15}$ and a reaction time of $1$~$\mu$s, the Grover's search is thus reaction limited at $\approx 96$ years. This limits the active-volume execution time to $\approx 96$ years, and not $\approx 4.4$ years.

\paragraph{Distillation protocol.} Finally, we check whether the distillation protocol $(15\text{-to-}1)^4_{\lceil d/4\rceil,\lceil d/8\rceil,\lceil d/8\rceil}\times (15\text{-to-}1)^4_{\lceil d/2\rceil,\lceil d/4\rceil,\lceil d/4\rceil}\times (8\text{-to-CCZ})_{d,\lceil d/2\rceil,\lceil d/2\rceil}$ with code distance $d=29$ outputs magic states with error probability smaller than $0.001/(6.22\cdot 10^{36}) = 1.61\cdot 10^{-40}$. Indeed, the distillation protocol outputs magic states with error rate $1.0\cdot 10^{-43}$ every $96$ code cycles using $84,308$ physical qubits. Since each $\mathsf{Toffoli}$ layer must be executed every $4d = 116$ code cycles, we require $\frac{96}{116}|L|/2 = 3.60\cdot 10^{23}$ distillation factories, which adds another $3.03\cdot 10^{28}$ physical qubits to a total of $3.62\cdot 10^{28}$ physical qubits. For active-volume architectures, the distillation cost was already computed in $C_{|CCZ\rangle}$.

\begin{table}[t]
    \small
    \centering
    \caption{Summary of required resources to perform one Grover's search with one solution in $\mathtt{GaussSieve}$ with and without LSH/LSF assuming baseline and active-volume physical architectures. Reaction limit and circuit time are measured in hours, and final time is the maximum between both. We assume a lattice dimension $D=400$, logical and magic distillation probability errors smaller than $10^{-3}$, and a Grover's search probability error of $10^{-3}$. We focus on a Grover's search from the first loop search. $\mathtt{GaussSieve}$ has list size $|L| = 2^{0.193D + 2.325}$ and list of candidates of size $|C| = |L|\cdot p_2^\ast$ for LSH and $|C| = |L|\cdot \mathcal{C}_D(\alpha)^2\cdot \ln(1/\varepsilon)/\mathcal{W}_D(\alpha,\alpha,\pi/3)$ for LSF, where $\varepsilon = 10^{-3}$.}
    \label{table:resources_D400_gausssieve}
    \def\arraystretch{1.45}
    \resizebox{0.8\linewidth}{!}{
    \begin{tabular}{ c | c | c | c | c | c |}
        \cline{2-6}
        & Resource / Sieve & $\mathtt{GaussSieve}$ & \makecell{$\mathtt{GaussSieve}$ +\\ angular LSH} & \makecell{$\mathtt{GaussSieve}$ +\\ spherical LSH} & \makecell{$\mathtt{GaussSieve}$ +\\ spherical LSF} \\ \cline{2-6} \cline{2-6} 

        & List size & $8.70\cdot 10^{23}$ & $5.00\cdot 10^{14}$ & $3.90\cdot 10^{12}$ & $5.48\cdot 10^9$ \\  \cline{2-6}

        & Hashing parameter $k$ & - & $99$ & $7$ & $1$  \\ \cline{2-6}

        & Number hash tables $t$ & - & $1.57\cdot 10^{18}$ & $5.31\cdot 10^9$ & $2.84\cdot 10^{38}$  \\ \cline{2-6}

        & Filter angle $\alpha$ & - & - & - &  $\pi/3$ \\  \cline{2-6}

        & Logical qubits & $3.48\cdot 10^{24}$ & $2.00\cdot 10^{15}$ & $1.56\cdot 10^{13}$ & $2.19\cdot 10^{10}$ \\ \cline{2-6}

        & $\mathsf{Toffoli}$-count & $2.51\cdot 10^{36}$ & $3.47\cdot 10^{22}$ & $2.39\cdot 10^{19}$ & $1.26\cdot 10^{15}$ \\ \cline{2-6}

        & $\mathsf{Toffoli}$-width & $4.35\cdot 10^{23}$ & $2.50\cdot 10^{14}$ & $1.95\cdot 10^{12}$ & $2.74\cdot 10^9$ \\ \cline{2-6}

        & Active volume & $3.47\cdot 10^{38}$ & $4.79\cdot 10^{24}$ & $3.30\cdot 10^{21}$ & $1.73\cdot 10^{17}$ \\ \cline{2-6}

        & Reaction depth & $3.01\cdot 10^{15}$ & $6.78\cdot 10^{10}$ & $5.90\cdot 10^{9}$ & $2.17\cdot 10^8$ \\ \cline{2-6}

        & Reaction limit (hours) & $8.37\cdot 10^5$ & $1.88\cdot 10^1$ & $1.64\cdot 10^0$ & $6.03\cdot 10^{-2}$ \\ \hhline{-=====}

        \multicolumn{1}{ | c | }{\multirow{5}{*}{\rotatebox{90}{Baseline} }} & Code distance & $29$ & $19$ & $17$ & $14$ \\ \cline{2-6}

        \multicolumn{1}{ |c | }{} & Distillation factories & $3.60\cdot 10^{23}$ & $1.97\cdot 10^{14}$ & $1.72\cdot 10^{12}$ & $2.35\cdot 10^9$ \\ \cline{2-6}

        \multicolumn{1}{ |c | }{} & Physical qubits & $3.62\cdot 10^{28}$ & $8.65\cdot 10^{18}$ & $6.47\cdot 10^{16}$ & $5.51\cdot 10^{13}$ \\ \cline{2-6}

        \multicolumn{1}{ |c | }{} & Circuit time (hours) & $4.85\cdot 10^6$ & $7.16\cdot 10^1$ & $5.57\cdot 10^0$ & $1.69\cdot 10^{-1}$ \\ \cline{2-6}
        
        \multicolumn{1}{ |c | }{} & Final time (hours) & $4.85\cdot 10^6$ & $7.16\cdot 10^1$ & $5.57\cdot 10^0$ & $1.69\cdot 10^{-1}$ \\ \hline \hline

        \multicolumn{1}{ | c | }{\multirow{4}{*}{\rotatebox{90}{Active-volume}}} & Code distance & $28$ & $18$ & $16$ & $14$ \\ \cline{2-6}

        \multicolumn{1}{ |c | }{} & Physical qubits & $2.73\cdot 10^{27}$ & $6.48\cdot 10^{17}$ & $3.99\cdot 10^{15}$ & $4.29\cdot 10^{12}$ \\ \cline{2-6}

        \multicolumn{1}{ |c | }{} & Circuit time (hours) & $3.88\cdot 10^4$ & $5.98\cdot 10^{-1}$ & $4.69\cdot 10^{-2}$ & $1.54\cdot 10^{-3}$ \\ \cline{2-6}
        
        \multicolumn{1}{ |c | }{} & Final time (hours) & $8.37\cdot 10^5$ & $1.88\cdot 10^1$ & $1.64\cdot 10^0$ & $6.03\cdot 10^{-2}$ \\ \hline 

    \end{tabular}}
\end{table}

\subsubsection{NVSieve and GaussSieve with LSH/LSF}

Finally, we consider both $\mathtt{NVSieve}$ and $\mathtt{GaussSieve}$ with LSH/LSF. Once again, we focus on one Grover's search with worst-case list size $|S| = 2^{0.2352D + 0.102\log_2{D} + 2.45} \approx 2.15\cdot 10^{29}$ for $\mathtt{NVSieve}$ and $|L| = 2^{0.193D + 2.325} \approx 8.70\cdot 10^{23}$ for $\mathtt{GaussSieve}$. We assume the existence of only one solution in each Grover's search, and we focus on the first search loop in $\mathtt{GaussSieve}$. The average size of the list of candidates $C$ to be searched over with Grover's algorithm is $|C| = |L|\cdot p_2^\ast$ for LSH and $|C| = |L|\cdot t\cdot \mathcal{C}_D(\alpha)^2$ for LSF. 

The choice for the hashing parameter $k$ and the number of hash tables $t$ (and the angle $\alpha$ for LSF) is highly heuristic. For LSH the choice of $k$ is usually based on guaranteeing that nearby vectors collide with high probability in at least one hash table. This yields $k = \log_{3/2}{t} - \log_{3/2}\ln(1/\varepsilon)$ for angular LSH and $k=6(\ln{t} - \ln\ln(1/\varepsilon))/\sqrt{D}$ for spherical LSH. For spherical LSF, $k=1$ and $t = \ln(1/\varepsilon)/\mathcal{W}_D(\alpha,\alpha,\pi/3)$. Here $\varepsilon = 10^{-3}$. On the other hand, the value of $t$ for LSH is based on balancing the classical hashing time with the quantum searching time, while the parameter $\alpha$ is obtained by minimising the total runtime (classical hashing time plus quantum searching time). A precise choice for $t$ and $\alpha$ thus depends on all sieving steps and not just a single Grover's search. We refer the reader to \cref{sec:heuristics} below for a list of assumptions on the performance of $\mathtt{NVSieve}$ and $\mathtt{GaussSieve}$ that allows for precise expressions used to derive $t$ and $\alpha$. For now we just quote the values $k$, $t$, and $\alpha$ in \Cref{table:resources_D400_nvsieve,table:resources_D400_gausssieve}.

The analysis is very similar to the previous ones, so we omit most of the details and list the results in \Cref{table:resources_D400_nvsieve,table:resources_D400_gausssieve}. The expressions for $\mathsf{Toffoli}$-count, number of logical qubits, active volume, and reaction depth are basically the aforementioned ones but replacing $|L|$ or $|S|$ with $|C|$ within a Grover's search.

\Cref{table:resources_D400_nvsieve,table:resources_D400_gausssieve} show a rough estimate for one Grover's search with worst-case list size in each $\mathtt{NVSieve}$ and $\mathtt{GaussSieve}$ with and without LSH in dimension $D=400$. Even though only one Grover's search was taken into consideration, we can already grasp the order of magnitude of each resource, specially number of physical qubits and overall time, the most important ones. Moreover, it is possible to observe some of the advantages and disadvantages of each algorithm, e.g., the use of hashing has a significant impact on time and number of physical qubits as expected from searching a smaller list. However, a full and complete analysis can only come from considering all Grover's searches from all sieving steps, which we shall look at next. 

\subsection{Resource estimations via heuristic assumptions}
\label{sec:heuristics}

In this section, we employ the analysis procedure outlined above in order to gauge the required resources to fully carry out the $\mathtt{NVSieve}$ and $\mathtt{GaussSieve}$ aided by Grover's search. For the sake of comparison, we also consider a completely classical implementation where vector reductions are searched sequentially. Since these sieving algorithms involve several quantities which are difficult to precisely measure, we rely on heuristic and numerical observations from \cref{sec:heuristics_nvsieve,sec:heuristics_gausssieve} to build plausible worst-case assumptions on which the resource estimations can be performed. In the following, we assume that:
\begin{enumerate}
    \item The initial list size $|L|$ in $\mathtt{NVSieve}$ is $|L| = D\cdot 2^{0.2352D + 0.102\log_2{D} + 2.45}$.

    \item In the classical implementation of $\mathtt{NVSieve}$, the list $S$ or $C$ is scanned one full time in order to find a solution.
    
    \item In the quantum implementation of $\mathtt{NVSieve}$, there is only one solution to each Grover's search.
    
    \item The center list has size $|S| = 2^{0.2352D + 0.102\log_2{D} + 2.45}$ in each sieving step of $\mathtt{NVSieve}$ without LSH/LSF. The list size $|L|$ decreases by $|S|$ per sieving step.
    
    \item In each sieving step of $\mathtt{NVSieve}$ with LSH, $|S| = 2^{0.2352D + 0.102\log_2{D} + 2.45}$ vectors are inserted into $t$ hash tables, and the list of candidates has size $|C| = |S|\cdot p_2^\ast$, where $p_2^\ast$ is the average probability that far-away vectors collide. In $\mathtt{NVSieve}$ with LSF, $|S| = 2^{0.2352D + 0.102\log_2{D} + 2.45}$ vectors are  inserted into relevant filters out of $t$ buckets, and the list of candidates has size $|C| = |S|\cdot t\cdot \mathcal{C}_D(\alpha)^2 = |S|\cdot \mathcal{C}_D(\alpha)^2\cdot \ln(1/\varepsilon)/\mathcal{W}_D(\alpha,\alpha,\pi/3)$, where $\varepsilon = 10^{-3}$. The list size $|L|$ decreases by $|S|$ per sieving step.
    
    \item The maximum list size in $\mathtt{GaussSieve}$ is $2^{0.193D + 2.325}$, while the number of iterations $I$ grows as $2^{0.283D + 0.335}$.
    
    \item The list size $|L|$ in $\mathtt{GaussSieve}$ equals the maximum list size of $2^{0.193D + 2.325}$ for all iterations and its size therefore does not decrease.

    \item In the classical implementation of $\mathtt{GaussSieve}$, the list $L$ or $C$ in the first search loop (\cref{line:gauss_sieve_search1} in \cref{alg:gauss_sieve} and \cref{line:gauss_sieve_search1_lsh} in \cref{alg:gauss_sieve_lsh}) is scanned $10$ times: one vector reduction happens after every scan until no solutions are left after the $9$-th time. The list $L$ or $C$ in the second search loop (\cref{line:gauss_sieve_search2} in \cref{alg:gauss_sieve} and \cref{line:gauss_sieve_search2_lsh} in \cref{alg:gauss_sieve_lsh}) is scanned only once.
    
    \item In the quantum implementation of $\mathtt{GaussSieve}$, the first search loop (\cref{line:gauss_sieve_search1} in \cref{alg:gauss_sieve} and \cref{line:gauss_sieve_search1_lsh} in \cref{alg:gauss_sieve_lsh}) is performed $10$ times: $9$ times with $M=1$ solution, and $1$ final time with $M=0$ solutions. The second search loop (\cref{line:gauss_sieve_search2} in \cref{alg:gauss_sieve} and \cref{line:gauss_sieve_search2_lsh} in \cref{alg:gauss_sieve_lsh}) is performed only once with $M=0$ solutions.
    
    \item In $\mathtt{GaussSieve}$ with LSH, $|L| = 2^{0.193D + 2.325}$ vectors are inserted into $t$ hash tables and the list of candidates has size $|C| = |L|\cdot p_2^\ast$, where $p_2^\ast$ is the average probability that far-away vectors collide. In $\mathtt{GaussSieve}$ with LSF, $|L| = 2^{0.193D + 2.325}$ vectors are inserted into relevant filters out of $t$ buckets and the list of candidates has size $|C| = |L|\cdot t\cdot \mathcal{C}_D(\alpha)^2$.

    \item Angular LSH: $k = \log_{3/2}{t} - \log_{3/2}\ln(1/\varepsilon)$ and the average collision probability of far-away vectors $p_2^\ast$ is given by \cref{eq:probability_collision_angular_hash}. The total classical hashing time requires $2k\cdot t \cdot |L|$ multiplications and $k\cdot t\cdot |L|$ additions. In the quantum implementations, the number of hash tables is determined through the equality $D^2\cdot |L|\sqrt{|S|\cdot p_2^\ast} = k\cdot t\cdot |L|$ for $\mathtt{NVSieve}$ (there are $|L|/|S| = D$ sieving steps) and $D\cdot I\sqrt{|L|\cdot p_2^\ast} = k\cdot t\cdot |L|$ for $\mathtt{GaussSieve}$. 
    
    \item Spherical LSH: $k = 6(\ln{t} - \ln\ln(1/\varepsilon))/\sqrt{D}$ and the average collision probability of far-away vectors $p_2^\ast$ is given by \cref{eq:probability_collision_spherical_hash}. The total classical hashing time requires $D\cdot 2^{\sqrt{D}}\cdot k\cdot t \cdot |L|$ additions and multiplications. In the quantum implementations, the number of hash tables is determined through the equality $D^2\cdot|L|\sqrt{|S|\cdot p_2^\ast} = D\cdot 2^{\sqrt{D}}\cdot k\cdot t\cdot |L|$ for $\mathtt{NVSieve}$ (there are $|L|/|S| = D$ sieving steps) and $D\cdot I\sqrt{|L|\cdot p_2^\ast} = D\cdot 2^{\sqrt{D}}\cdot k\cdot t\cdot |L|$ for $\mathtt{GaussSieve}$. 
    
    \item Spherical LSF: $k=1$ and the number of filter buckets is $t = \ln(1/\varepsilon)/\mathcal{W}_D(\alpha,\alpha,\pi/3)$ with $\varepsilon = 10^{-3}$. The total classical time to place vectors into relevant filters requires $2\log_2{D}\cdot |L|\cdot t\cdot \mathcal{C}_D(\alpha)$ additions. In the quantum implementations, the parameter $\alpha\leq 1/2$ is determined by minimising $\log_2{D}\cdot |L|\cdot t\cdot \mathcal{C}_D(\alpha) + D^2\cdot |L|\sqrt{|S|\cdot t\cdot \mathcal{C}_D(\alpha)^2}$ for $\mathtt{NVSieve}$ and $\log_2{D} \cdot |L|\cdot t\cdot \mathcal{C}_D(\alpha) + D\cdot I\sqrt{|L|\cdot t\cdot \mathcal{C}_D(\alpha)^2}$ for $\mathtt{GaussSieve}$.

    \item Classical additions and multiplications require $1$ and $4$ computational cycles, respectively.
    
    \item The logical error probability (the error probability of correcting a logical qubit) and Grover's search error probability ($\delta$ in \cref{fact:grover_search_unknown}) are $\leq 10^{-3}$.
\end{enumerate}

\begin{table}[t]
    \scriptsize
    \centering
    \caption{Amount of classical arithmetic operations in the \emph{classical} implementation of $\mathtt{NVSieve}$ and $\mathtt{GaussSieve}$ with and without LSH/LSF. Here $|L|$ is the maximum list size of $\mathtt{NVSieve}$ or $\mathtt{GaussSieve}$, $I$ is the number of iterations of $\mathtt{GaussSieve}$, and $p_2^\ast$ is the average probability that a non-reducing vector collides with another vector in at least one of $t$ hash tables.}
    \label{table:classical_operations}
    \def\arraystretch{1.3}
    \begin{tabular}{ c | c | c | c | c |}
        \cline{2-5}
        & \multicolumn{2}{ c| }{Searching} &  \multicolumn{2}{ c| }{Hashing} \\ \hline
        \multicolumn{1}{ |c|  }{Sieve/Operations} & Additions & Multiplications & Additions & Multiplications \\ \hline\hline
        
        \multicolumn{1}{ |c|  }{$\mathtt{NVSieve}$} & $D\cdot|L|^2$ & $\frac{1}{2}D^2\cdot |L|^2$ & $0$ & $0$ \\ \hline

        \multicolumn{1}{ |c|  }{\makecell{$\mathtt{NVSieve}$ +\\ angular LSH}} & $D\cdot |L|^2\cdot p_2^\ast$ & $\frac{1}{2}D\cdot |L|^2\cdot p_2^\ast$ & $k\cdot t\cdot |L|$ & $2k\cdot t\cdot |L|$ \\ \hline

        \multicolumn{1}{ |c|  }{\makecell{$\mathtt{NVSieve}$ + \\ spherical LSH}} & $D\cdot |L|^2\cdot p_2^\ast$ & $\frac{1}{2}D\cdot |L|^2\cdot p_2^\ast$ & $D\cdot 2^{\sqrt{D}}\cdot k\cdot t\cdot |L|$ & $D\cdot 2^{\sqrt{D}}\cdot k\cdot t\cdot |L|$ \\ \hline

        \multicolumn{1}{ |c|  }{\makecell{$\mathtt{NVSieve}$ + \\ spherical LSF}} & $D\cdot |L|^2\cdot t\cdot \mathcal{C}_D(\alpha)^2$ & $\frac{1}{2}D\cdot |L|^2\cdot t\cdot \mathcal{C}_D(\alpha)^2$ & $2\log_2{D} \cdot |L| \cdot t\cdot \mathcal{C}_D(\alpha)$ & $0$ \\ \hline

        \multicolumn{1}{ |c|  }{$\mathtt{GaussSieve}$} & $(41D-19)\cdot I\cdot |L|$ & $21D\cdot I\cdot |L|$ & $0$ &  $0$ \\ \hline 

        \multicolumn{1}{ |c|  }{\makecell{$\mathtt{GaussSieve}$ +\\ angular LSH}} & $(41D-19)\cdot I\cdot |L|\cdot p_2^\ast$ & $21D\cdot I\cdot |L|\cdot p_2^\ast$ & $k\cdot t\cdot |L|$ &  $2k\cdot t\cdot |L|$ \\ \hline 

        \multicolumn{1}{ |c|  }{\makecell{$\mathtt{GaussSieve}$ +\\ spherical LSH}} & $(41D-19)\cdot I\cdot |L|\cdot p_2^\ast$ & $21D\cdot I\cdot |L|\cdot p_2^\ast$ & $D\cdot 2^{\sqrt{D}}\cdot k\cdot t\cdot |L|$ & $D\cdot 2^{\sqrt{D}}\cdot k\cdot t\cdot |L|$ \\ \hline

        \multicolumn{1}{ |c|  }{\makecell{$\mathtt{GaussSieve}$ +\\ spherical LSF}} & $(41D-19)\cdot I\cdot |L|\cdot t\cdot \mathcal{C}_D(\alpha)^2$ & $21D\cdot I\cdot |L|\cdot t\cdot \mathcal{C}_D(\alpha)^2$ & $2\log_2{D} \cdot |L| \cdot t\cdot \mathcal{C}_D(\alpha)$  &  $0$ \\ \hline 
    \end{tabular}
\end{table}

For convenience, under the above assumptions, in \cref{table:classical_operations} we collect all classical operations coming from hashing and searching for the classical implementation of the sieving algorithms.

\begin{figure}[t]
    \centering
    \begin{subfigure}[b]{0.49\textwidth}
        \includegraphics[width=\textwidth]{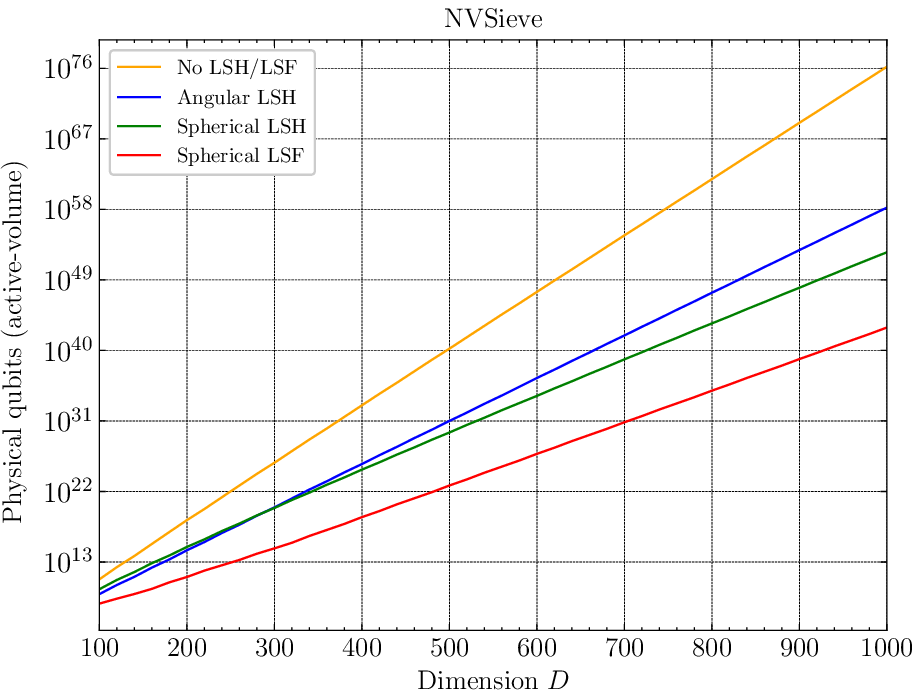}
        \caption{Active-volume physical qubits of $\mathtt{NVSieve}$}
    \end{subfigure}
    \begin{subfigure}[b]{0.49\textwidth}
        \includegraphics[width=\textwidth]{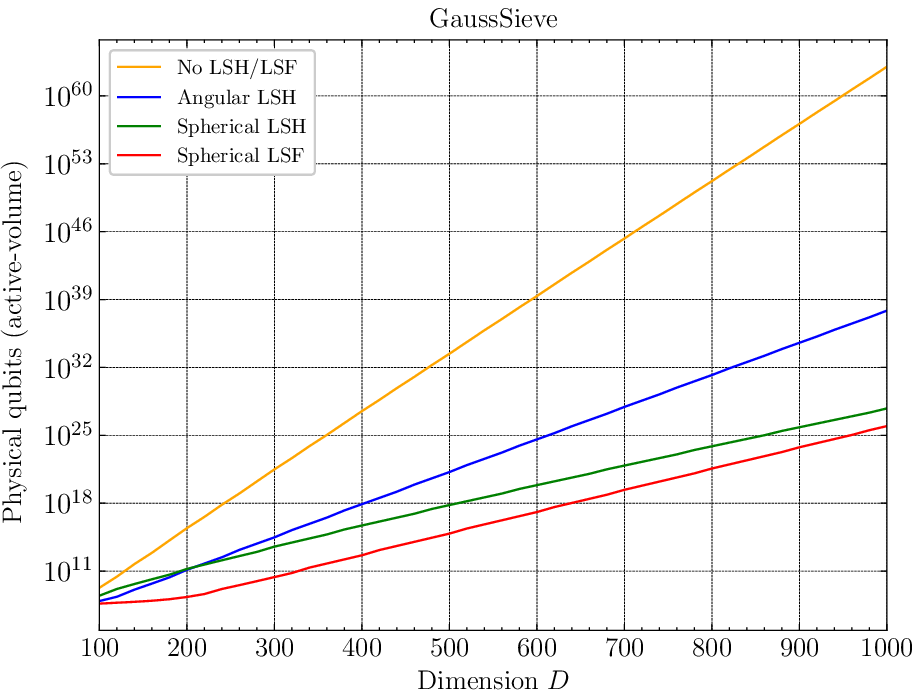}
        \caption{Active-volume physical qubits of $\mathtt{GaussSieve}$}
    \end{subfigure}
    \caption{Number of physical qubits of all Grover's searches in $\mathtt{NVSieve}$ and $\mathtt{GaussSieve}$ with and without LSH/LSF as a function of the lattice dimension $D$. We assume an underlying active-volume physical architecture. The quantities are computed based on heuristic assumptions described in the main text.}
    \label{fig:heuristic_assumptions}
\end{figure}

In \cref{fig:heuristic_assumptions,fig:heuristic_assumptions2} we compare the number of physical qubits and reaction limits from $\mathtt{NVSieve}$ and $\mathtt{GaussSieve}$ with and without LSH/LSF under an active-volume architecture. The estimated classical execution times using a $6$-GHz-clock-speed single-core classical computer are also included, where GHz means $10^9$ operations per second. For completeness, we also add the classical hashing time to the reaction limits coming from the Grover's search, although the difference is tiny. The use of locality-sensitive techniques greatly improves both quantities, specially the amount of physical qubits. It is noticeable the decrease in resources as more sophisticated hashing techniques are employed, from angular LSH to spherical LSH and LSF. Spherical LSH is more expensive than angular LSH in lower dimensions due to high lower-order terms, specially coming from the $O(2^{\sqrt{D}})$ hashing time. It is, however, asymptotically better than angular LSH as expected. At the range of proposed cryptographic dimensions $D \approx 400$, the best attack ($\mathtt{GaussSieve}$ with spherical LSF) requires around $\approx 10^{13}$ physical qubits and $\approx 10^{31}$ years to find a lattice's shortest vector. We note that most crossovers between classical and quantum time complexities happen after dimension $200$, or dimension $300$ for $\mathtt{GaussSieve}$ specifically.

\begin{figure}[p]
    \centering
    \begin{subfigure}[b]{0.81\textwidth}
        \includegraphics[width=\textwidth]{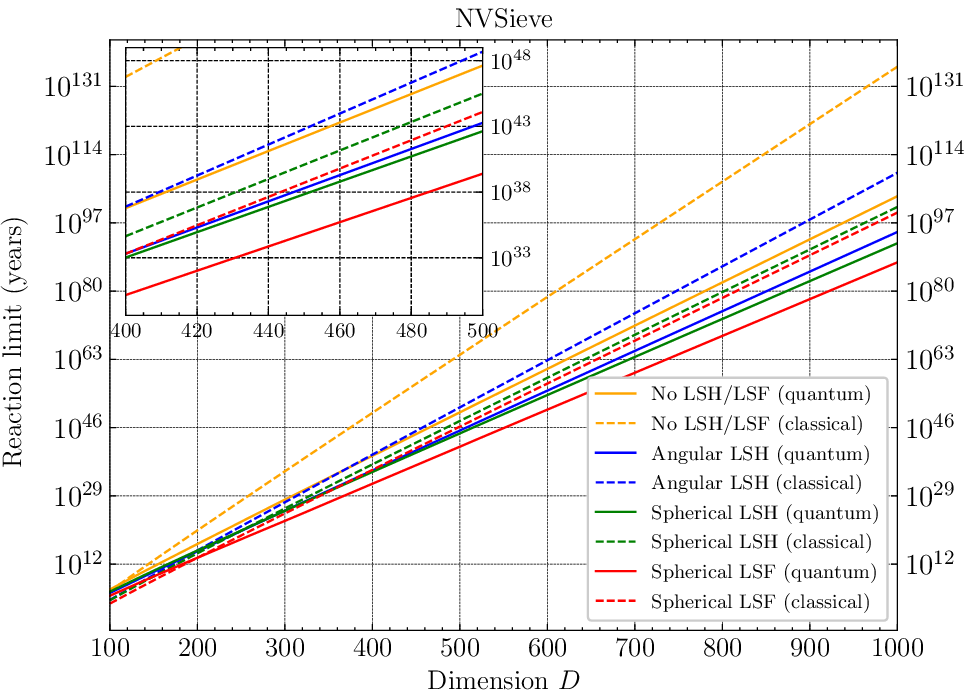}
        \caption{Reaction limit of $\mathtt{NVSieve}$}
    \end{subfigure}
    \begin{subfigure}[b]{0.81\textwidth}
        \includegraphics[width=\textwidth]{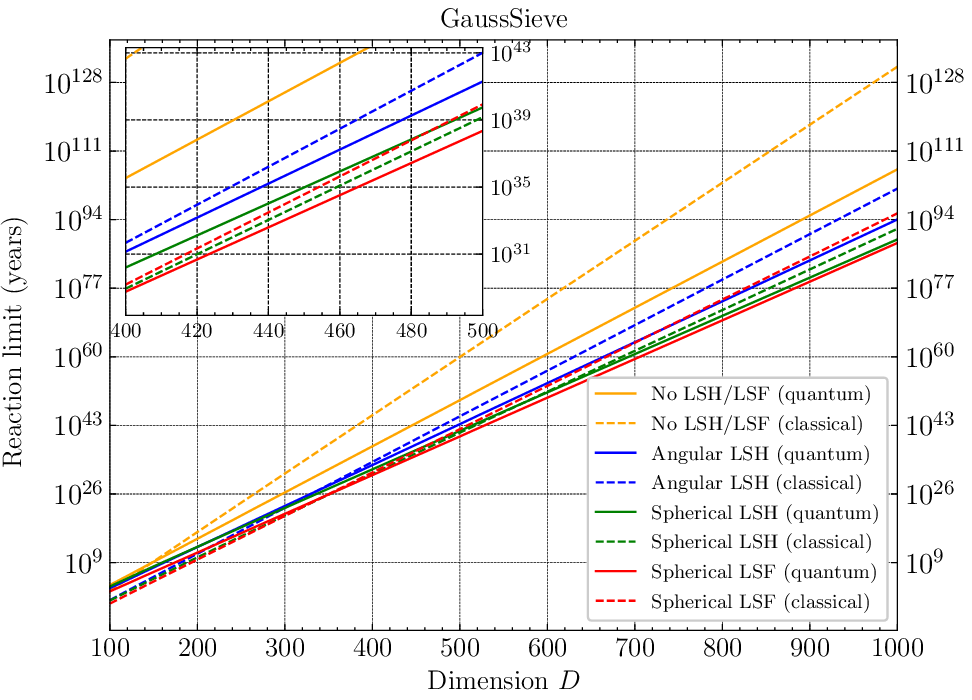}
        \caption{Reaction limit of $\mathtt{GaussSieve}$}
    \end{subfigure}
    \caption{Reaction limit of all Grover's searches in $\mathtt{NVSieve}$ and $\mathtt{GaussSieve}$ with and without LSH/LSF as a function of the lattice dimension $D$. We assume an underlying active-volume physical architecture. The reaction limits also include the classical hashing times. The quantities are computed based on heuristic assumptions described in the main text.}
    \label{fig:heuristic_assumptions2}
\end{figure}

In \cref{app:comparison} we revisit the heuristic assumptions made in this section and compare the performance of all Grover's searches under these assumptions to the performance using data from numerical simulations on classical hardware. In other words, we perform resource estimates using the evolution of the list $L$ in a real run of $\mathtt{GaussSieve}$ on classical hardware. As a brief summary, the time complexities reported in this section can probably be reduced by half under more realistic and thorough heuristic assumptions.

\subsection{The cost of QRAM}\label{sec:qramcost}

From the previous sections, specially from the cost expressions of \cref{sec:case_study}, it should be clear that $\mathsf{QRAM}$ is the most expensive component in our quantum circuits. The need to access an exponentially large dataset imposes a huge burden on size. Not only that, but the loss of sequential access to the dataset set by Grover's search implies that, when using any hashing technique, we must first gather all candidate vectors and store them separately in order to later use $\mathsf{QRAM}$. For these reasons, we analyse in this section the required resources for sieving algorithms under the scenario where $\mathsf{QRAM}$ has negligible cost, akin to Albrecht et al.~\cite{albrecht2020estimating}. This is done by repeating the procedure from the previous sections, but this time zeroing out all contributions from $\mathsf{QRAM}$ to $\mathsf{Toffoli}$-count, logical qubits, active volume, and reaction depth in the expressions from \cref{sec:case_study}. For simplicity, we focus on $\mathtt{GaussSieve}$, as it requires less resources than $\mathtt{NVSieve}$ and performs classically better in practice. The number of physical qubits under an active-volume architecture and reaction limit for $\mathtt{GaussSieve}$ with and without $\mathsf{QRAM}$ are compared in \cref{fig:resources_without_qram}.

The absence of $\mathsf{QRAM}$ has little impact on the reaction limit of $\mathtt{GaussSieve}$ for dimensions up to $1000$. According to \cref{sec:qram,sec:case_study}, $\mathsf{QRAM}$ is a shallow circuit with reaction depth of $2\lceil\log_2|L|\rceil - 2 = 390$ for $|L| = 2^{0.193\cdot 1000 + 2.325} \approx 6.29\cdot 10^{58}$, while the arithmetic part of one Grover iteration has reaction depth of $2(1 + \lceil\log_2{D}\rceil)(\kappa-1) + 2\kappa\log_2\kappa - 2\kappa - 2\log_2\kappa + 4 = 932$, hence why there is no noticiable change in the reaction limit from \cref{fig:resources_without_qram_b}. On the other hand, however, the number of physical qubits is drastically reduced from $\approx 10^{25}$ down to $\approx 10^{9}$ for $\mathtt{GaussSieve}$ with spherical LSF at dimension $D=1000$, for example. Such drastic change is expected, since a bucket-brigade-style $\mathsf{QRAM}$ with shallow reaction depth requires a number of logical qubits roughly equal to the size of the list stored in memory. We note that earlier resources estimates on Shor's algorithm placed the number of physical qubits to be in the range $10^{8}$-$10^{10}$~\cite{van2010distributed,jones2012layered,fowler2012surface,OGorman2017quantum,gheorghiu2019benchmarking}, which is comparable to our estimates of running $\mathtt{GaussSieve}$ without $\mathsf{QRAM}$ in high dimensions. From \cref{fig:resources_without_qram_a} it can be noted that the number of physical qubits has little dependence on the employed hashing techniques. Without $\mathsf{QRAM}$, the number of physical qubits comes mainly from the arithmetic modules, which are independent of the list size. Finally, the sudden changes in the number of physical qubits from \cref{fig:resources_without_qram_a} are due to integer increases in the code distance $d$ in order to maintain the error rates below $0.1\%$.

\begin{figure}[t]
    \centering
    \begin{subfigure}[b]{0.49\textwidth}
        \includegraphics[width=\textwidth]{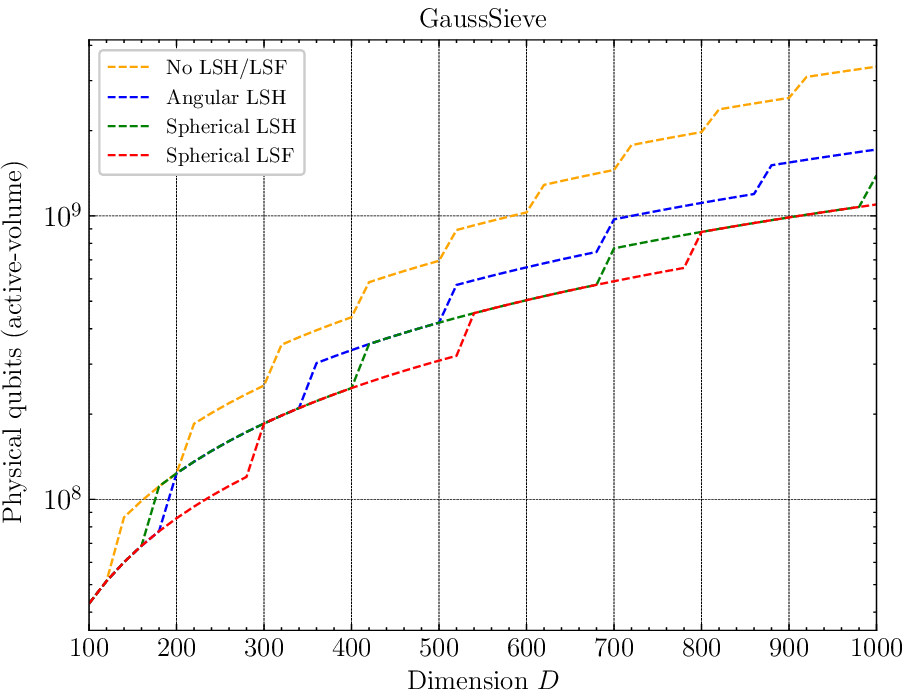}
        \caption{Physical qubits of $\mathtt{GaussSieve}$ without $\mathsf{QRAM}$}
        \label{fig:resources_without_qram_a}
    \end{subfigure}
    \begin{subfigure}[b]{0.49\textwidth}
        \includegraphics[width=\textwidth]{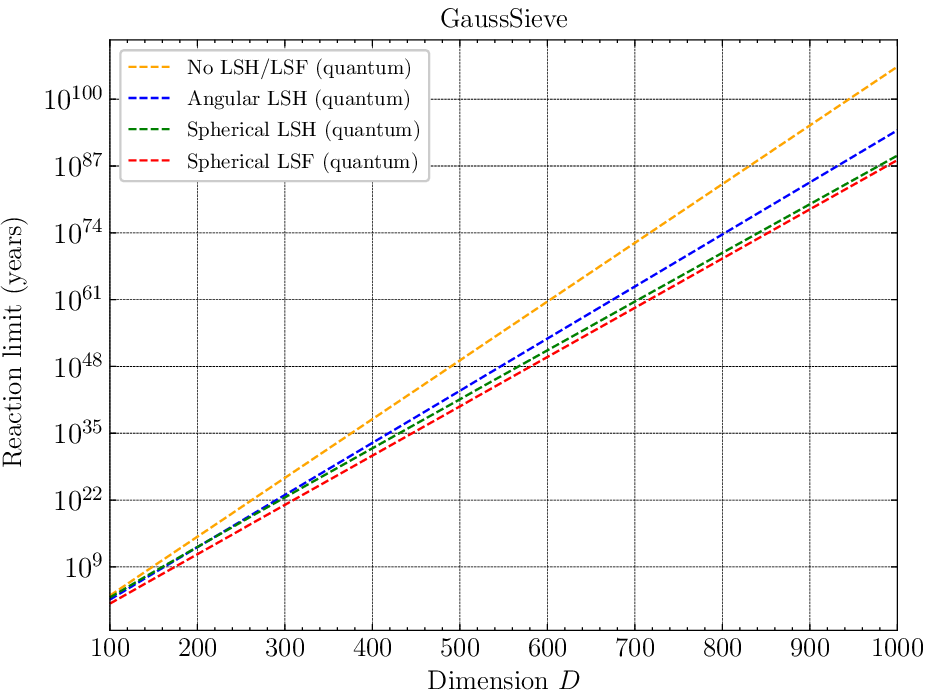}
        \caption{Reaction limit of $\mathtt{GaussSieve}$ without $\mathsf{QRAM}$}
        \label{fig:resources_without_qram_b}
    \end{subfigure}
    \caption{Number of physical qubits and reaction limit of $\mathtt{GaussSieve}$ with and without LSH/LSF as a function of the lattice dimension $D$ in the scenario where $\mathsf{QRAM}$ has negligible cost. We assume an underlying active-volume physical architecture. The quantities are computed based on heuristic assumptions described in the main text.}
    \label{fig:resources_without_qram}
\end{figure}

\subsection{Depth constraints: NIST standardisation}\label{sec:nist}

In many realistic situations, a quantum attacker would have bounded resources, e.g., be constrained by a total running time or circuit depth. In its call for proposals for the post-quantum cryptography standardisation process~\cite{nist_cfp}, NIST introduced the parameter $\mathtt{MAXDEPTH}$ to bound the circuit depth of any potential attacker, suggesting reasonably values in the range of $2^{40}$ to $2^{96}$ logical gates. As explained in their proposal~\cite[Section~4.A.5]{nist_cfp}, the value $2^{40}$ is ``the approximate number of gates that presently envisioned quantum computing architectures are expected to serially perform in a year''~\cite{jones2012layered}, while $2^{96}$ is ``the approximate number of gates that atomic scale qubits with speed of light propagation times could perform in a millennium.'' In this section, we revisit the results from \cref{sec:heuristics} and constrain the circuit depth. Since several quantities could be interpreted as the circuit depth, we set the parameter $\mathtt{MAXDEPTH}$ as a limit to the reaction depth of any Grover's search. This means that, for $\mathtt{MAXDEPTH} = 2^{40}$, any Grover's search would be time limited to $2^{40}~\mu\text{s} \approx 12.73~\text{days}$, assuming a reaction time of $1~\mu\text{s}$, while for $\mathtt{MAXDEPTH} = 2^{96}$, any Grover's search would be time limited to $2^{96}~\mu\text{s} \approx 2.51\cdot 10^{15}~\text{years}$. This, in turns, limits the number of Grover iterations. In order to meet the maximum reaction depth, we split the list $L$ in $\mathtt{GaussSieve}$ (list of centers $S$ in $\mathtt{NVSieve}$ and list of candidates $C$ when using LSH/LSF) into $F$ disjoint parts, each to be searched by a different instance of Grover algorithm. The number $F$ of sequential Grover's searches required to set a maximum reaction depth of $I$ in $\mathtt{GaussSieve}$ is thus determined by the equation
\begin{align*}
\begin{multlined}[\textwidth][b]
    I = \lceil 9.2\log_3(1/\delta) \sqrt{|L|/F}\rceil \big(2\lceil\log_2(|L|/F)\rceil  + 2(1+\lceil\log_2D\rceil)(\kappa-1) \\
    + 2\kappa\log_2\kappa - 2\kappa - 2\log_2\kappa + 2 + 2\lceil\log_2\lceil\log_2(|L|/F)\rceil\rceil \big),
\end{multlined}
\end{align*}
which simply follows from the reaction-depth expression from \cref{sec:case_study_gausssieve}. A similar equation to determining $F$ holds for $\mathtt{NVSieve}$. Here $2^{40} \leq I \leq 2^{96}$ as set by NIST. The value $F$ obtained from the above equation is then used to determine other quantities like number of physical qubits.

In \cref{fig:resources_nist} we depict the number of physical qubits and total reaction limit of $\mathtt{GaussSieve}$ with and without LSH/LSF in the scenario where each Grover's search has reaction depth at most $2^{40}$. As usual, the total number of physical qubits is the maximum number of physical qubits used by any Grover's search, while the total reaction limit is the sum of the reaction limits of all Grover's searches. For small dimensions, the reaction limit of Grover's search is smaller than $\mathtt{MAXDEPTH} = 2^{40}$, so there are no differences between \cref{fig:heuristic_assumptions,fig:resources_nist}. However, the depth restriction begins to take effect for dimensions higher than $\approx 250$. As a consequence, the number of physical qubits becomes mostly constant since only lists of at most a certain size can be searched. On the other hand, the reaction limit of the whole algorithm increases more rapidly with the dimension $D$, since employing $F$ sequential Grover's searches over list of size $|L|/F$ is less time efficient than employing a single Grover's search over the whole list $L$. The end result is a considerable decrease in number of physical qubits, while the time increases by a few orders of magnitude, specially in $\mathtt{GaussSieve}$ without LSH/LSF, whose circuit depth is capped in smaller dimensions. A similar effect would be observed for a different $\mathtt{MAXDEPTH}$. We remark that the reaction depth of Grover's search is smaller than $2^{96}$ for dimensions $D \lesssim 800$, hence why we omit an analysis for $\mathtt{MAXDEPTH} = 2^{96}$.

\begin{figure}[t]
    \centering
    \begin{subfigure}[b]{0.49\textwidth}
        \includegraphics[width=\textwidth]{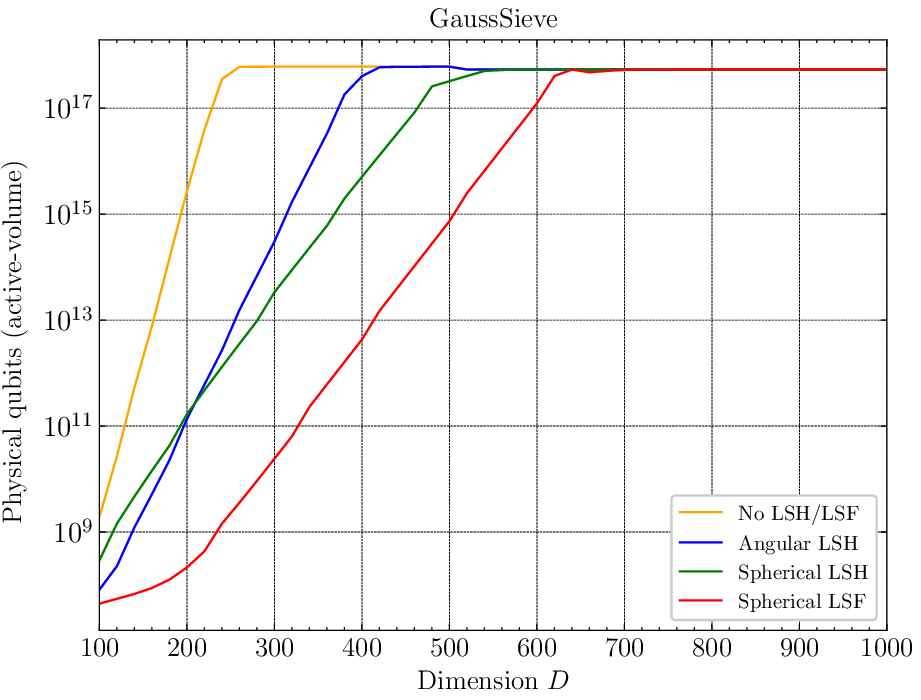}
        \caption{Physical qubits of $\mathtt{GaussSieve}$ with limited depth}
        \label{fig:resources_nist_a}
    \end{subfigure}
    \begin{subfigure}[b]{0.49\textwidth}
        \includegraphics[width=\textwidth]{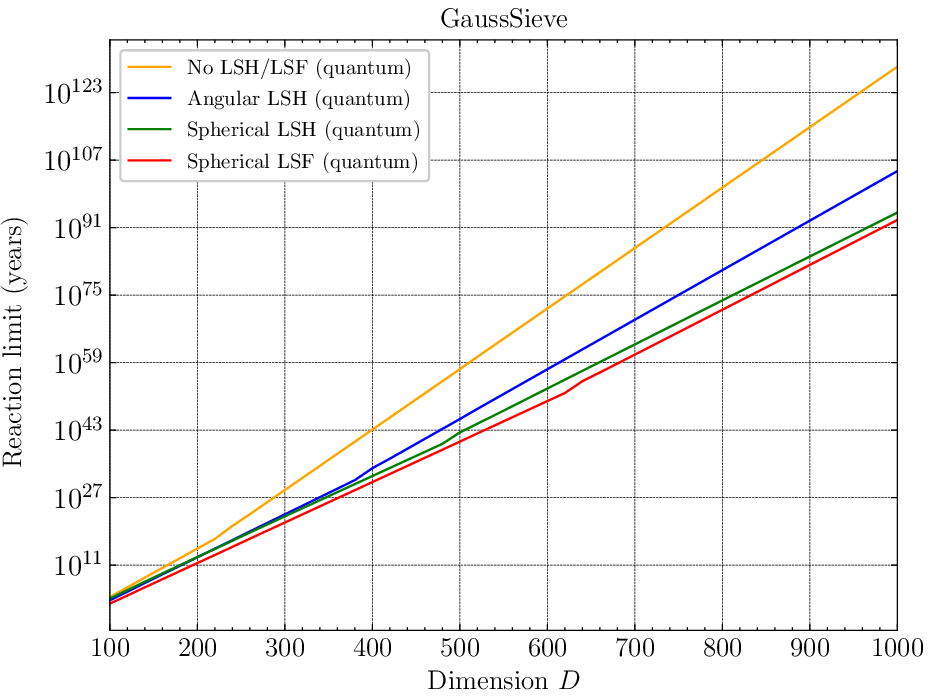}
        \caption{Reaction limit of $\mathtt{GaussSieve}$ with limited depth}
        \label{fig:resources_nist_b}
    \end{subfigure}
    \caption{Number of physical qubits and reaction limit of $\mathtt{GaussSieve}$ with and without LSH/LSF as a function of the lattice dimension $D$ in the scenario where the reaction depth of each Grover's seach is at most $2^{40}$. We assume an underlying active-volume physical architecture. The quantities are computed based on heuristic assumptions described in the main text.}
    \label{fig:resources_nist}
\end{figure}

\section{Discussions and open problems}
\label{sec:discussionandopen}

In this paper, we considered the most important sieving algorithms ($\mathtt{NVSieve}$ and $\mathtt{GaussSieve}$) and gave rigorous estimates on the time and space required to execute internal searching subroutines with Grover's search. Our estimation analysis took into consideration fixed-point quantum arithmetic, the cost of $\mathsf{QRAM}$, different physical architectures like baseline and active-volume ones, and quantum error correction. For the sake of comparison, we also consider equivalent classical implementations where Grover's search was replaced with sequential classical searching operations. We note that using BKZ to break the security of level-1 NIST candidate cryptosystems like Kyber-512, Falcon-512, and DiLithium require us to solve SVP in dimensions (block sizes of) over $400$. At this lattice dimension, even $\mathtt{GaussSieve}$ with spherical LSF under an active-volume architecture would require $\approx 10^{13}$ physical qubits and $\approx 10^{31}$ years to execute all Grover's search subroutines, which also takes into consideration classical hashing operations but ignores memory allocation. Most of the required qubits are due to $\mathsf{QRAM}$, meaning that any quantum advantage will only be possible if $\mathsf{QRAM}$ becomes substantially less costly. On the other hand, a single-core classical computer with $6$ GHs clock rate would also require $\approx 10^{31}$ years to solve SVP at dimension $400$, meaning that there is little advantage at dimensions of cryptographic interest. 

We have not explored the possibility of parallelising the list search by breaking it into smaller parts and employing different Grover's searches on each part. However, it is well known that Grover's search does not parallelise well~\cite{zalka99grover}, meaning that $F$ parallel Grover algorithms running on $F$ separate search spaces have a total width that is larger by a factor of $F$ compared to a single Grover algorithm on the whole search space while only reducing the depth by a factor of $\sqrt{F}$. We expect to observe a decrease in total runtime (\cref{fig:heuristic_assumptions}) via parallelisation by $n$ order of magnitude in exchange to an increase in number of physical qubits by roughly $2n$ orders of magnitude.

The hash parameter $k$ and the number of hash tables $t$ were chosen so that nearby vectors collide with high probability and the classical time hashing is balanced out by the quantum time searching. A very precise choice for $t$ would
require sorting out the constant factors in each of these complexities, which we did not consider. We leave it to future works a more careful analysis on the choice of $k,t,\alpha$.

We saw that the introduction of LSH or LSF requires a classical pre-search to gather all candidate vectors from the buckets with the same hash as a given vector and place them on a $\mathsf{QRAM}$. Albrecht et al.~\cite{albrecht2020estimating} (partially)\footnote{The problem is still present when considering LSF in their $\mathtt{ListDecodingSearch}$~\cite[Algorithm~4]{albrecht2020estimating}.} evaded such a problem by employing just one hash table and considering more than one bucket via a ``XOR and population count trick.'' By using their popcount filter and amplifying the amplitude of the vectors that pass such filter via quantum amplitude amplification~\cite{brassard2002quantum}, Grover's search is performed only on a subset of vectors which are close to the target vector with high probability. This lessens the cost coming from quantum arithmetic circuits in Grover's oracle. Even though Albrecht et al.\ obtained a cost expression for this ``filtered'' Grover's search, it requires strong bounds on the number of solutions $M$. In particular, the authors assumed that the number of solutions to the filtered search is known exactly beforehand, which we deem too strong of an assumption. It would be interesting to obtain more rigorous cost expressions on their filtered Grover's search (akin to Ref.~\cite{Cade2023quantifyinggrover}) and perform a complete resource estimation on sieving algorithms employing it.

Finally, we leave to future works a rigorous resource estimate on quantum-random-walk-based sieving~\cite{chailloux21,bonnetain2023finding} and enumeration algorithms and the consideration of metrics other than time and number of physical qubits like energy consumption.

\section*{Acknowledgements}
We thank Divesh Aggrawal, Martin Albrecht, Hugo Cable, Anupam Chattopadhyay, Craig Gidney, Andr\'as Gily\'en, Daniel Litinski, Markus M\"uller, and Adithya Sireesh for useful discussions. JFD acknowledges funding from ERC grant No.\ 810115-DYNASNET. Research at CQT is funded by the National Research Foundation, the Prime Minister’s Office, and the Ministry of Education, Singapore under the Research Centres of Excellence programme’s research grant R-710-000-012-135. We also acknowledge funding from the Quantum Engineering Programme (QEP 2.0) under grant NRF2021-QEP2-02-P05. This work was done in part while JFD was visiting the Simons Institute for the Theory of Computing, supported by NSF QLCI Grant No.\ 2016245.

\bibliographystyle{plain}
\bibliography{paper_sieving}

\appendix

\section{Comparison between heuristic assumptions and numerical simulations}
\label{app:comparison}

Some of the heuristic assumptions from \cref{sec:heuristics} are worst-case simplifications, e.g., the assumption that any Grover's search in $\mathtt{NVSieve}$ and $\mathtt{GaussSieve}$ has at most one solution, or that the list size is maximum throughout all iterations. In reality, we expect $\mathtt{NVSieve}$ and $\mathtt{GaussSieve}$ to perform better than described in \cref{sec:heuristics}: the list size should be quite smaller in many iterations than its maximum size at the end of the algorithm, and several solutions could exist when performing Grover's search. 

In this appendix, we compare the results of \cref{sec:heuristics} to those obtained from actual numerical simulations. To be more precise, we solved SVP on a random lattice of dimension $40 \leq D \leq 71$ using $\mathtt{GaussSieve}$ (without LSH/LSF) on a classical hardware and recorded the list $L_i$ and number of solutions $M_i$ at each step $i$. This creates a history of list and number-of-solution pairs $\{(L_i,M_i)\}_i$. Given a list size $|L_i|$ and a number of solutions $M_i$, it is then possible to estimate the amount of resources that would be required by Grover's search at that given step $i$ of $\mathtt{GaussSieve}$ by following \cref{sec:case_study}. The total amount of physical qubits employed by one particular $\mathtt{GaussSieve}$ run is the maximum number of physical qubits that would be required for any search step $i$, while the total quantum time complexity due to all Grover algorithms is the sum of the quantum time complexities of each individual search step $i$. Since $\mathtt{GaussSieve}$ is a randomised algorithm, we repeat this procedure a few times and take averages of the final number of physical qubits and quantum time complexity\footnote{For $40\leq D \leq 70$ we repeated the procedure $10$ times, while for $D = 71$ we repeated it $3$ times.}. The results are displayed in \cref{fig:resources_simulations}.

\cref{fig:resources_simulations_a} compares the number of physical qubits, both under baseline and active-volume architectures, that result from following the heuristic assumptions of \cref{sec:heuristics} and from considering the history of list and number of solutions $\{(L_i,M_i)\}_i$ of an average run of $\mathtt{GaussSieve}$. There is little difference between both approaches in the number of physical qubits, which is to be expected since the number of physical qubits is a function of the maximum list size and its heuristic value of $2^{0.193D + 2.325}$ is a fitting of actual numerical data. More interestingly, though, \cref{fig:resources_simulations_b} compares several time complexities (reaction limit and circuit time under baseline and active-volume architectures) between heuristic and numerical data. As anticipated, we can observe that $\mathtt{GaussSieve}$ has lower quantum time complexities in ``practice'' than under the heuristic assumptions of \cref{sec:heuristics}. The improvement in time complexity is around $50\%$, meaning that $\mathtt{GaussSieve}$ with Grover's search should be two times faster than reported in \cref{sec:results} for dimensions $40 \leq D \leq 71$. It is not unreasonable to extend such advantage to larger dimensions and to $\mathtt{GaussSieve}$ with hashing techniques.

\begin{figure}[t]
    \centering
    \begin{subfigure}[b]{0.49\textwidth}
        \includegraphics[width=\textwidth]{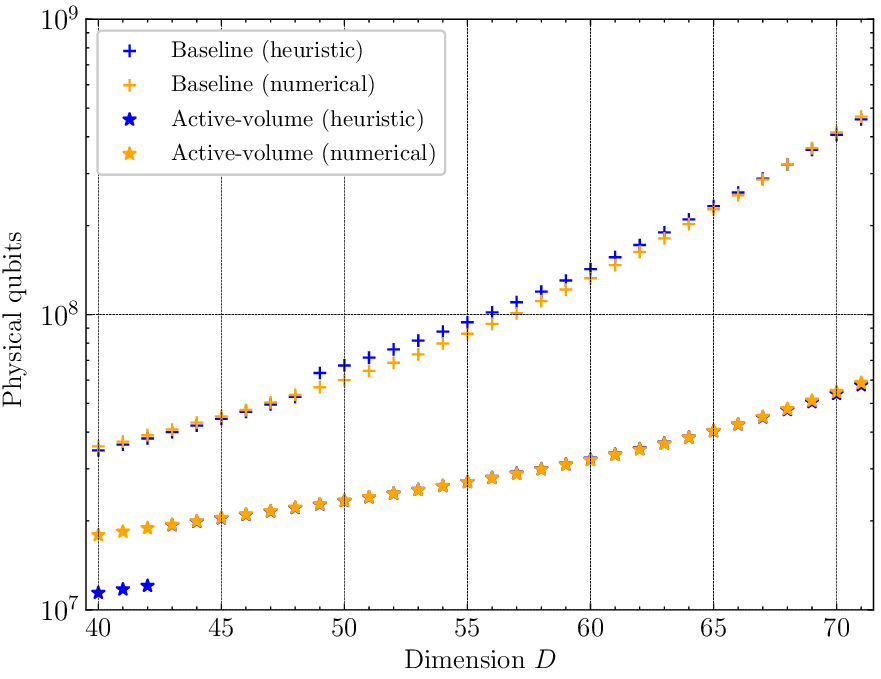}
        \caption{Physical qubits of $\mathtt{GaussSieve}$}
        \label{fig:resources_simulations_a}
    \end{subfigure}
    \begin{subfigure}[b]{0.49\textwidth}
        \includegraphics[width=\textwidth]{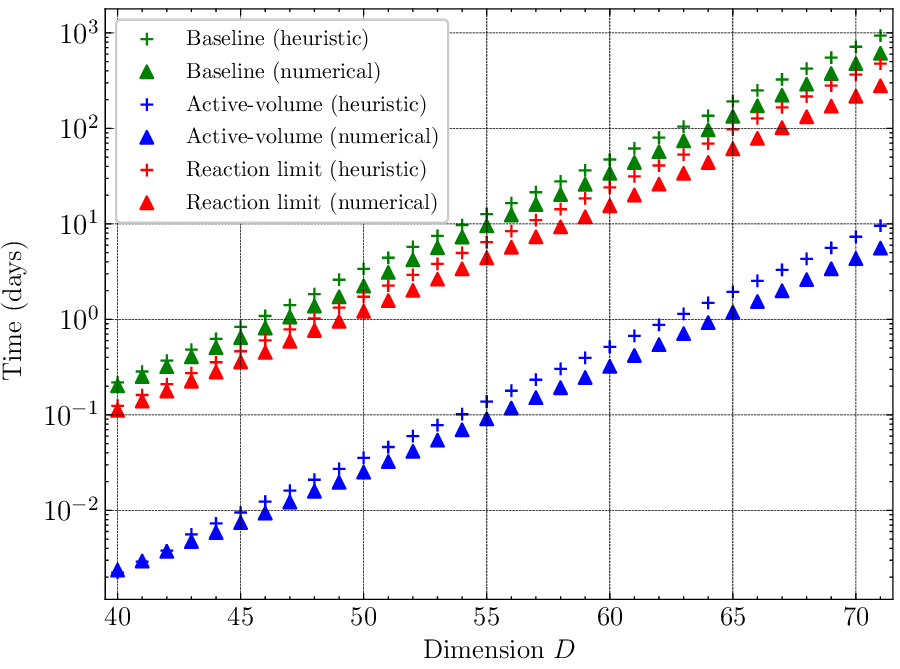}
        \caption{Time estimate of $\mathtt{GaussSieve}$}
        \label{fig:resources_simulations_b}
    \end{subfigure}
    \caption{Comparison between quantum resources of $\mathtt{GaussSieve}$ obtained through heuristic assumptions and through numerical data. (a) Comparison between physical qubits under baseline and active-volume architectures. (b) Comparison between reaction limit and circuit time under baseline and active-volume architectures. The heuristic data is obtained through the assumptions from \cref{sec:heuristics}, while the numerical data is obtained by running $\mathtt{GaussSieve}$ on a classical hardware and employing its internal parameters (list size and number of solutions) at every step.}
    \label{fig:resources_simulations}
\end{figure}

\section{Extra results}

In this appendix we provide more results that were omitted from \cref{sec:results}, e.g., number of physical qubits and circuit time under baseline and active-volume physical architectures for both $\mathtt{NVSieve}$ and $\mathtt{GaussSieve}$ with and without LSH/LSF. \cref{fig:extra_result1,fig:extra_result2,fig:extra_result3} describe results for $\mathtt{NVSieve}$ and $\mathtt{GaussSieve}$ with $\mathsf{QRAM}$, without $\mathsf{QRAM}$, and with Grover's searches reaction-depth limited to $2^{40}$, respectively.

\begin{figure}[t]
    \centering
    \begin{subfigure}[b]{0.49\textwidth}
        \includegraphics[width=\textwidth]{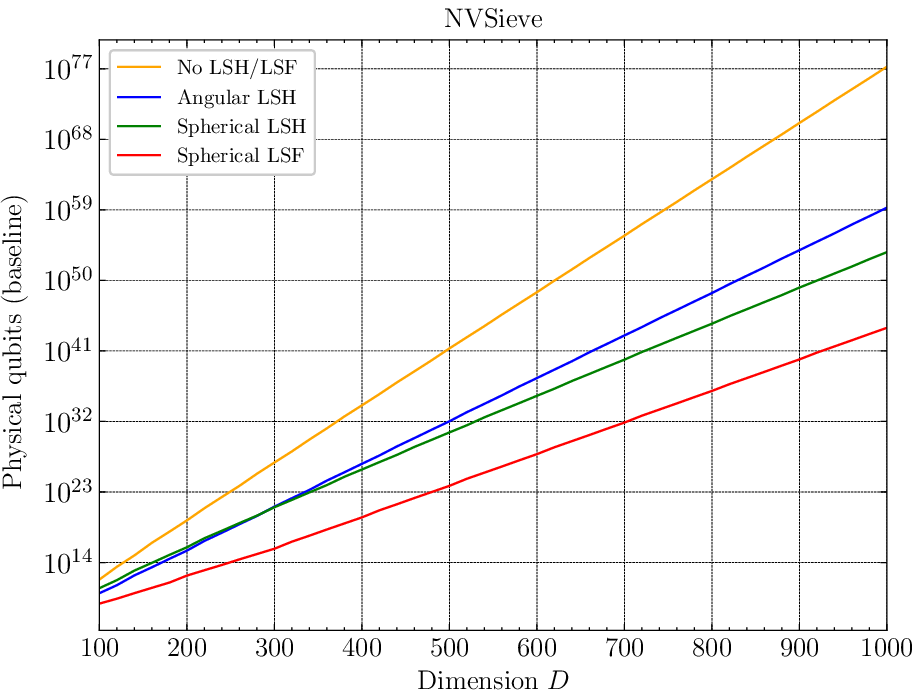}
        \caption{Baseline physical qubits of $\mathtt{NVSieve}$}
    \end{subfigure}
    \begin{subfigure}[b]{0.49\textwidth}
        \includegraphics[width=\textwidth]{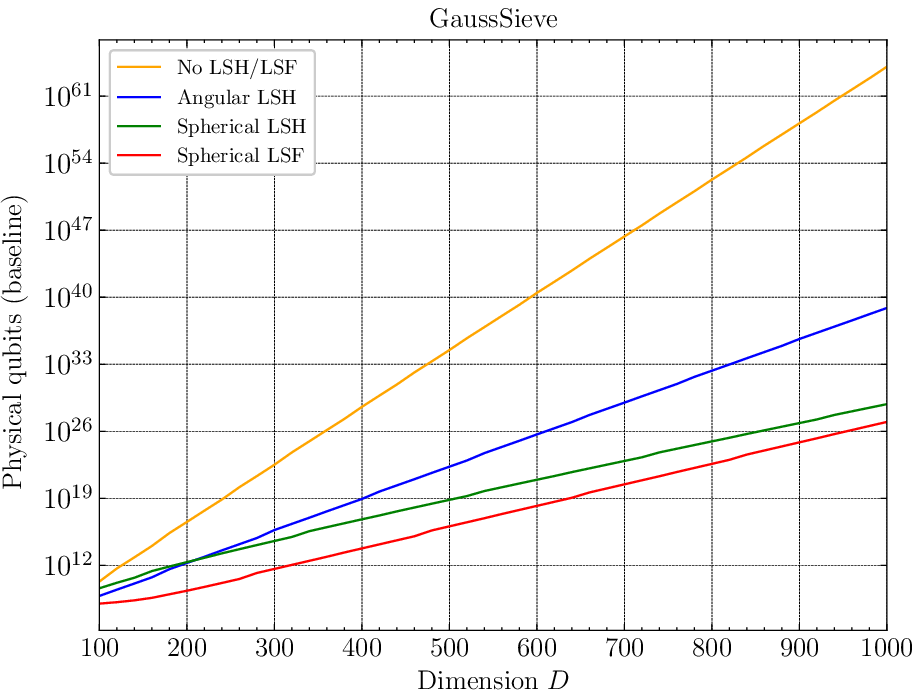}
        \caption{Baseline physical qubits of $\mathtt{GaussSieve}$}
    \end{subfigure}
    \begin{subfigure}[b]{0.49\textwidth}
        \includegraphics[width=\textwidth]{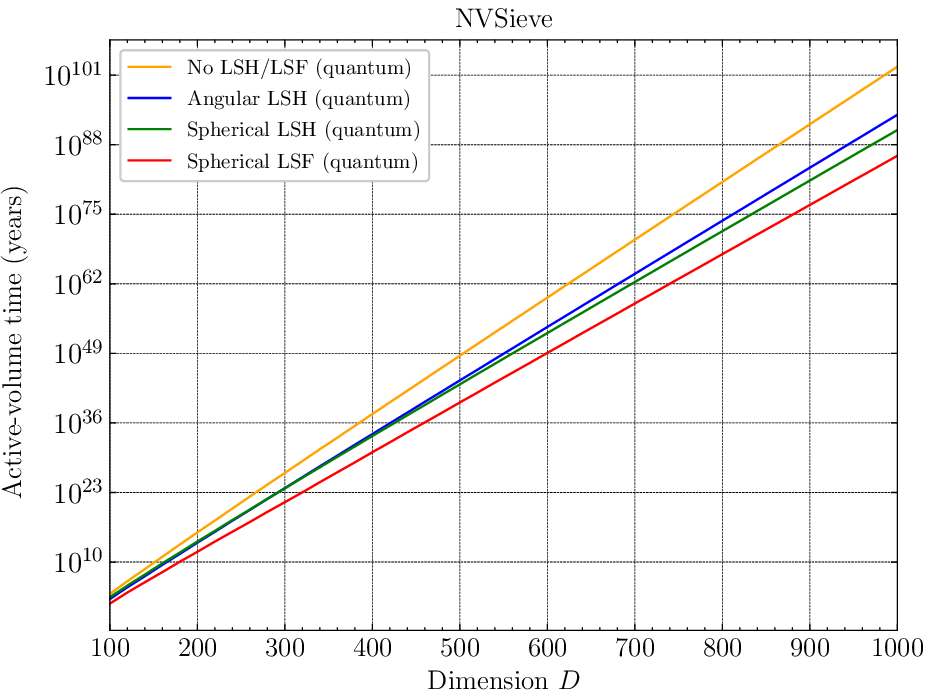}
        \caption{Active-volume circuit time of $\mathtt{NVSieve}$}
    \end{subfigure}
    \begin{subfigure}[b]{0.49\textwidth}
        \includegraphics[width=\textwidth]{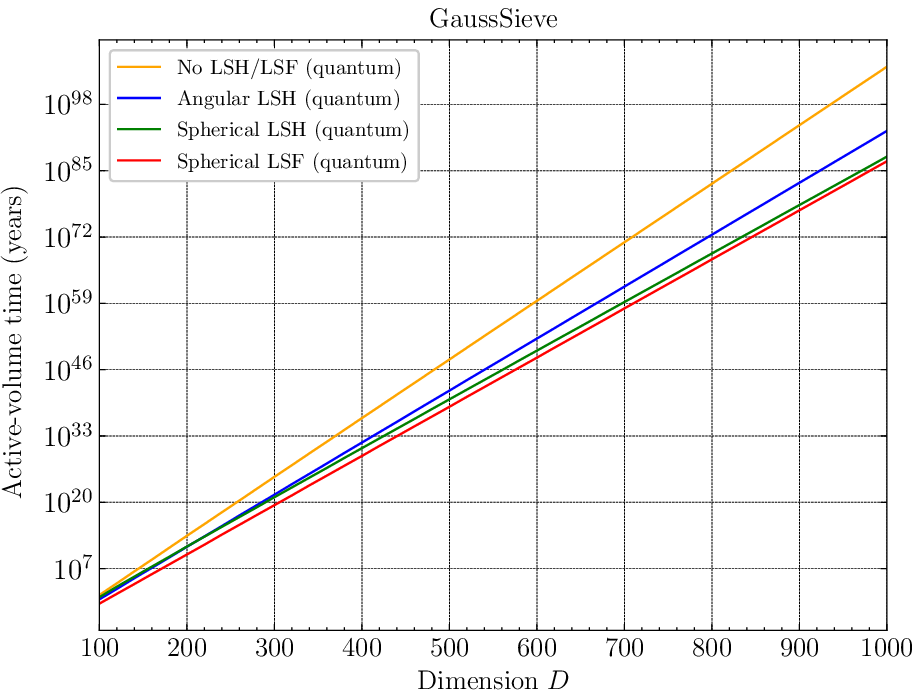}
        \caption{Active-volume circuit time of $\mathtt{GaussSieve}$}
    \end{subfigure}
    \begin{subfigure}[b]{0.49\textwidth}
        \includegraphics[width=\textwidth]{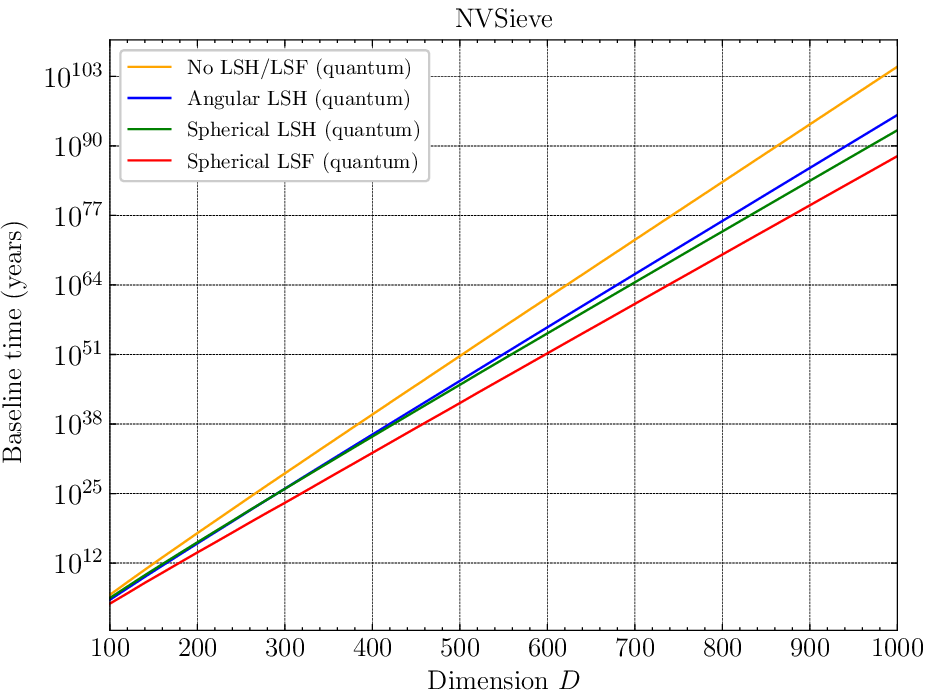}
        \caption{Baseline circuit time of $\mathtt{NVSieve}$}
    \end{subfigure}
    \begin{subfigure}[b]{0.49\textwidth}
        \includegraphics[width=\textwidth]{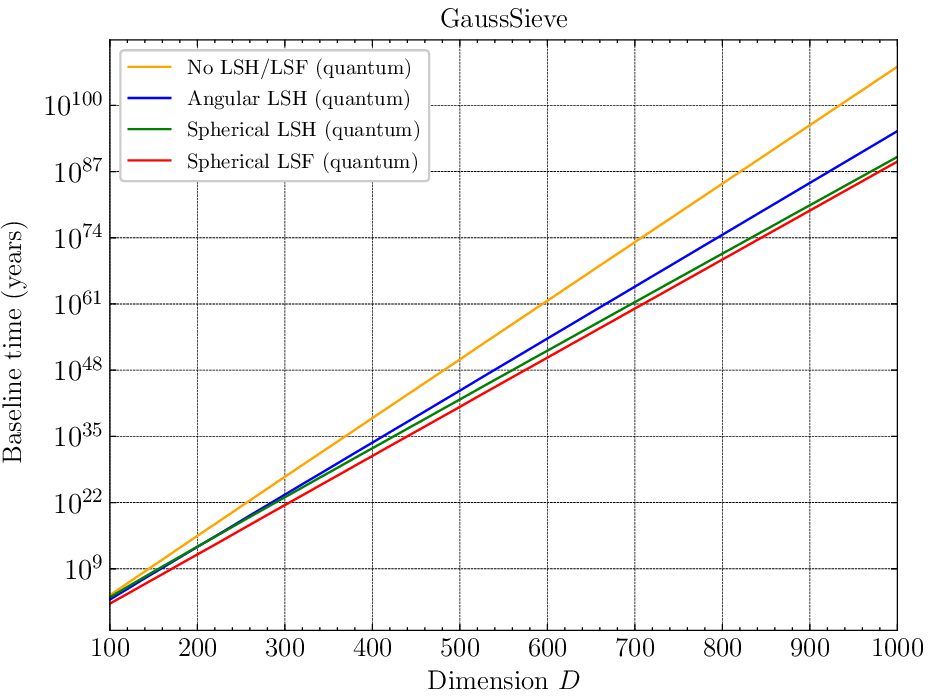}
        \caption{Baseline circuit time of $\mathtt{GaussSieve}$}
    \end{subfigure}
    \caption{Number of physical qubits and circuit times in $\mathtt{NVSieve}$ and $\mathtt{GaussSieve}$ with and without LSH/LSF under baseline and active-volume physical architectures as a function of lattice dimension~$D$. Reaction limits and circuit time also include classical hashing time.}
    \label{fig:extra_result1}
\end{figure}

\begin{figure}[t]
    \centering
    \begin{subfigure}[b]{0.49\textwidth}
        \includegraphics[width=\textwidth]{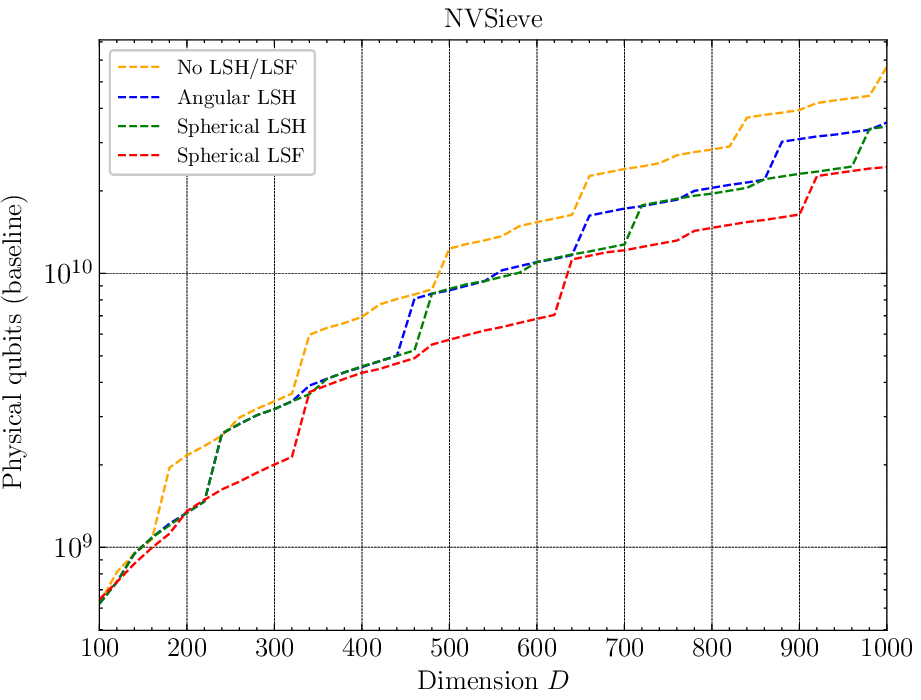}
        \caption{\centering Baseline physical qubits of $\mathtt{NVSieve}$ without $\mathsf{QRAM}$}
    \end{subfigure}
    \begin{subfigure}[b]{0.49\textwidth}
        \includegraphics[width=\textwidth]{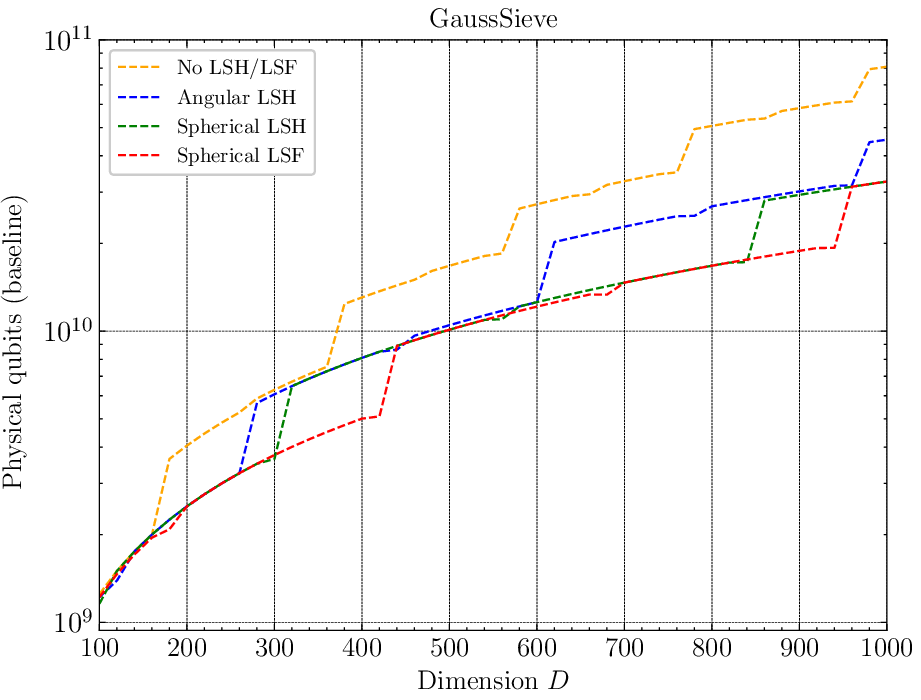}
        \caption{\centering  Baseline physical qubits of $\mathtt{GaussSieve}$ without $\mathsf{QRAM}$}
    \end{subfigure}
    \begin{subfigure}[b]{0.49\textwidth}
        \includegraphics[width=\textwidth]{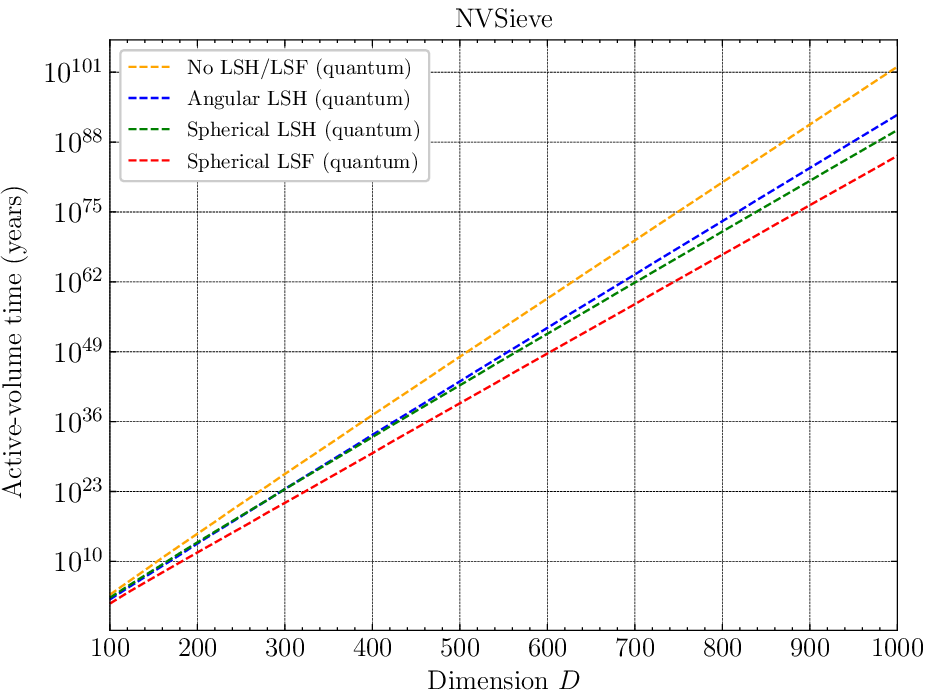}
        \caption{\centering Active-volume circuit time of $\mathtt{NVSieve}$ without $\mathsf{QRAM}$}
    \end{subfigure}
    \begin{subfigure}[b]{0.49\textwidth}
        \includegraphics[width=\textwidth]{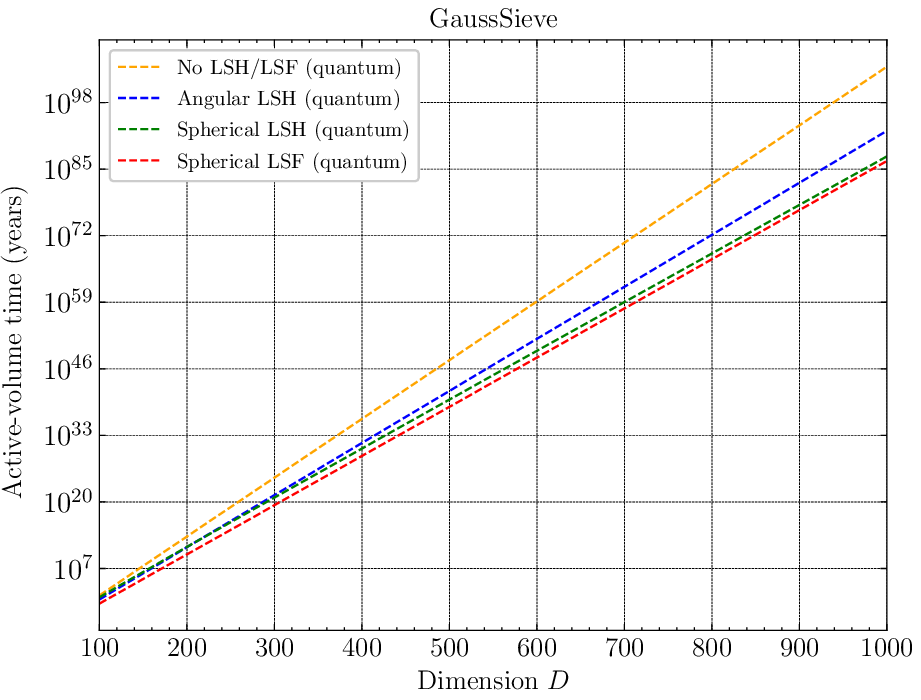}
        \caption{\centering Active-volume circuit time of $\mathtt{GaussSieve}$ without $\mathsf{QRAM}$}
    \end{subfigure}
    \begin{subfigure}[b]{0.49\textwidth}
        \includegraphics[width=\textwidth]{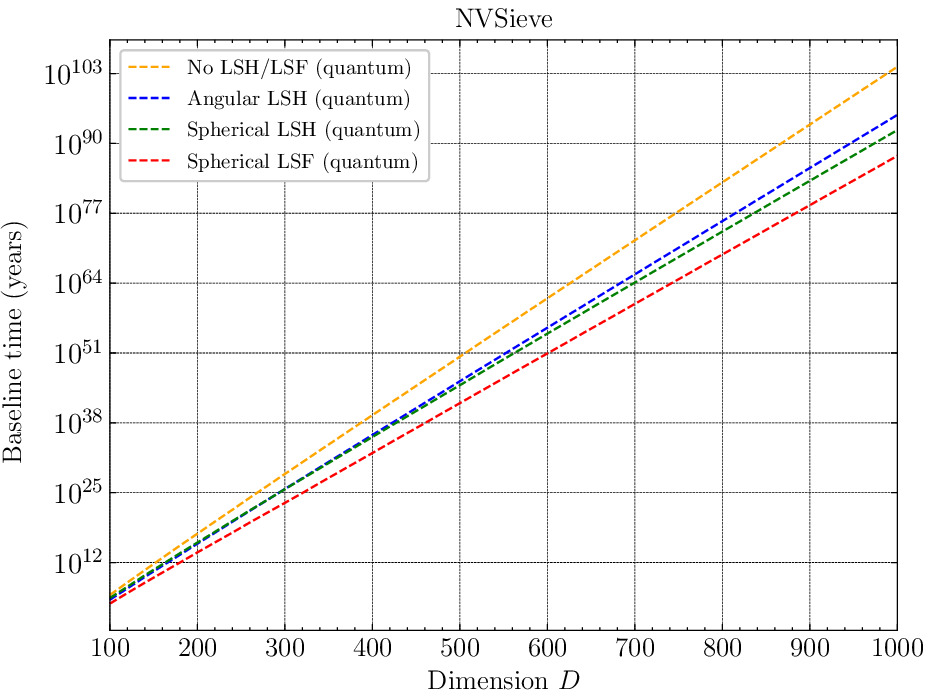}
        \caption{Baseline circuit time of $\mathtt{NVSieve}$ without $\mathsf{QRAM}$}
    \end{subfigure}
    \begin{subfigure}[b]{0.49\textwidth}
        \includegraphics[width=\textwidth]{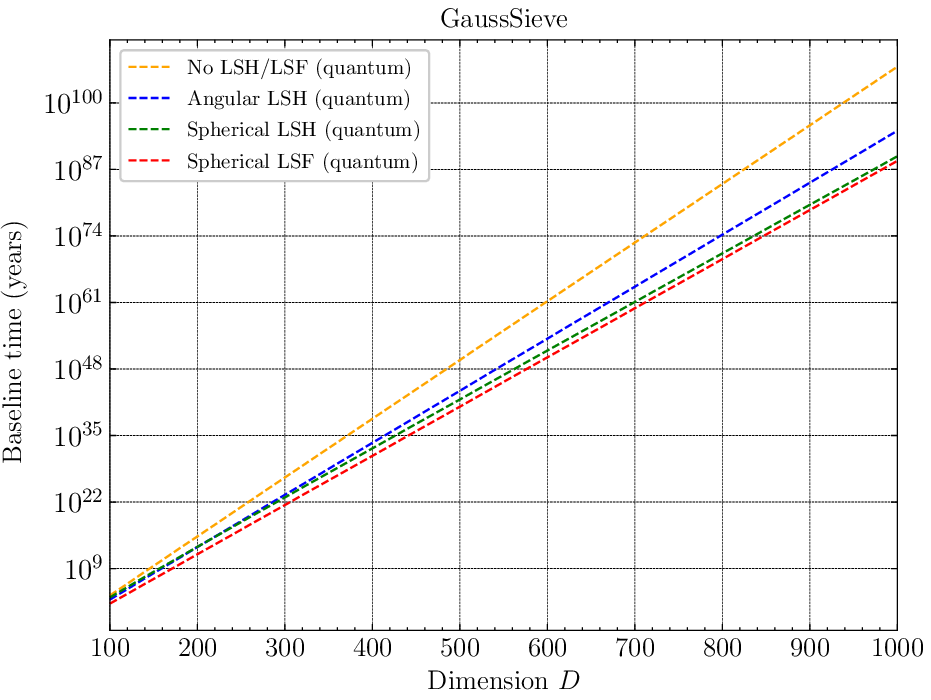}
        \caption{Baseline circuit time of $\mathtt{GaussSieve}$ without $\mathsf{QRAM}$}
    \end{subfigure}
    \caption{Number of physical qubits and circuit times in $\mathtt{NVSieve}$ and $\mathtt{GaussSieve}$ with and without LSH/LSF under baseline and active-volume physical architectures as a function of lattice dimension~$D$ in the scenario where $\mathsf{QRAM}$ has negligible cost. Reaction limits and circuit time also include classical hashing time.}
    \label{fig:extra_result2}
\end{figure}

\begin{figure}[t]
    \centering
    \begin{subfigure}[b]{0.49\textwidth}
        \includegraphics[width=\textwidth]{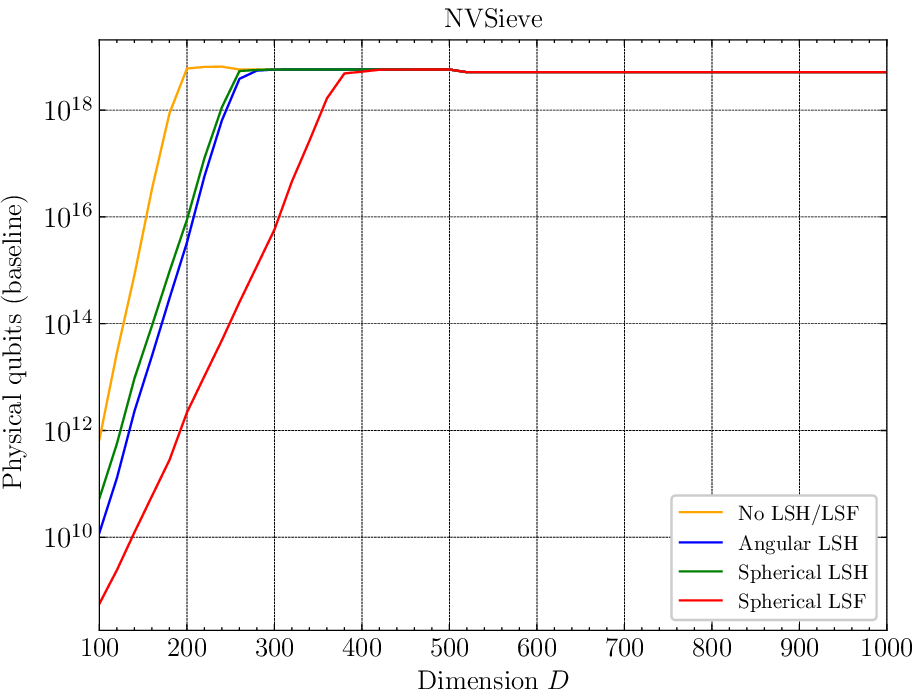}
        \caption{\centering Baseline physical qubits of $\mathtt{NVSieve}$ with limited depth}
    \end{subfigure}
    \begin{subfigure}[b]{0.49\textwidth}
        \includegraphics[width=\textwidth]{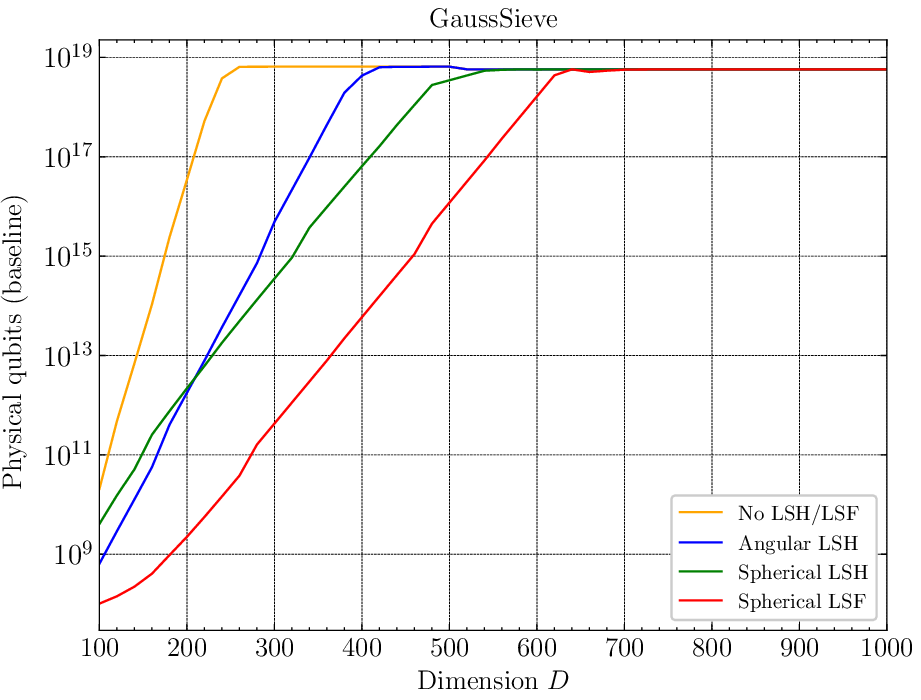}
        \caption{\centering Baseline physical qubits of $\mathtt{GaussSieve}$ with limited depth}
    \end{subfigure}
    \begin{subfigure}[b]{0.49\textwidth}
        \includegraphics[width=\textwidth]{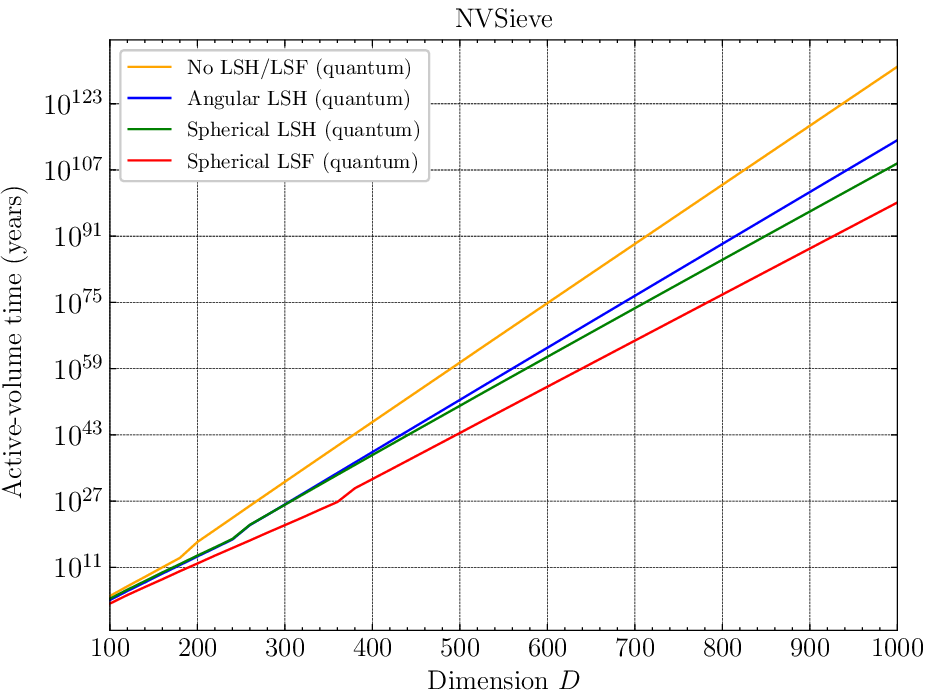}
        \caption{\centering Active-volume circuit time of $\mathtt{NVSieve}$ with limited depth}
    \end{subfigure}
    \begin{subfigure}[b]{0.49\textwidth}
        \includegraphics[width=\textwidth]{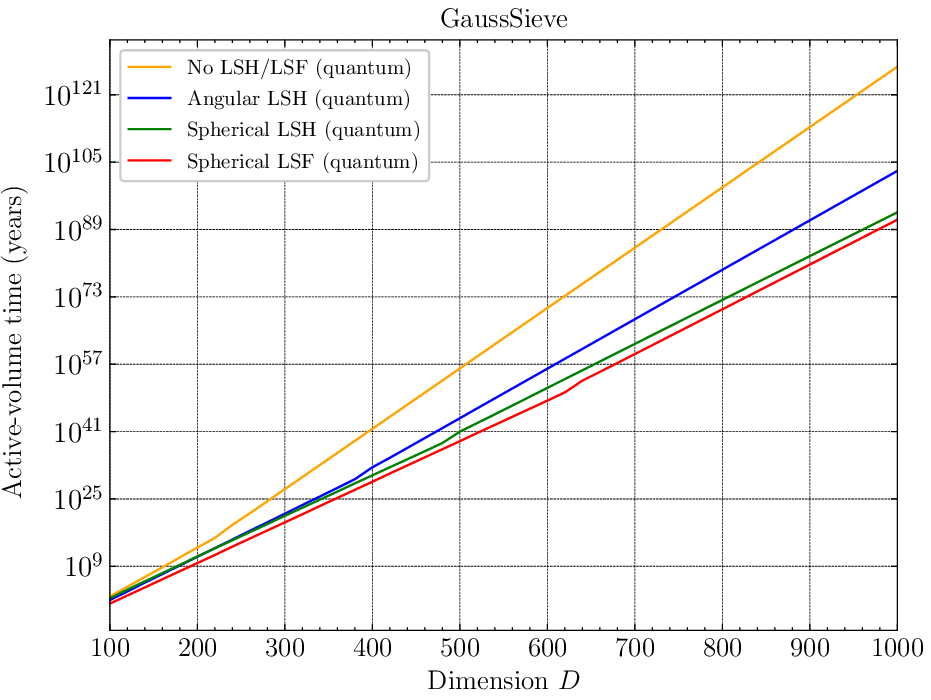}
        \caption{\centering Active-volume circuit time of $\mathtt{GaussSieve}$ with limited depth}
    \end{subfigure}
    \begin{subfigure}[b]{0.49\textwidth}
        \includegraphics[width=\textwidth]{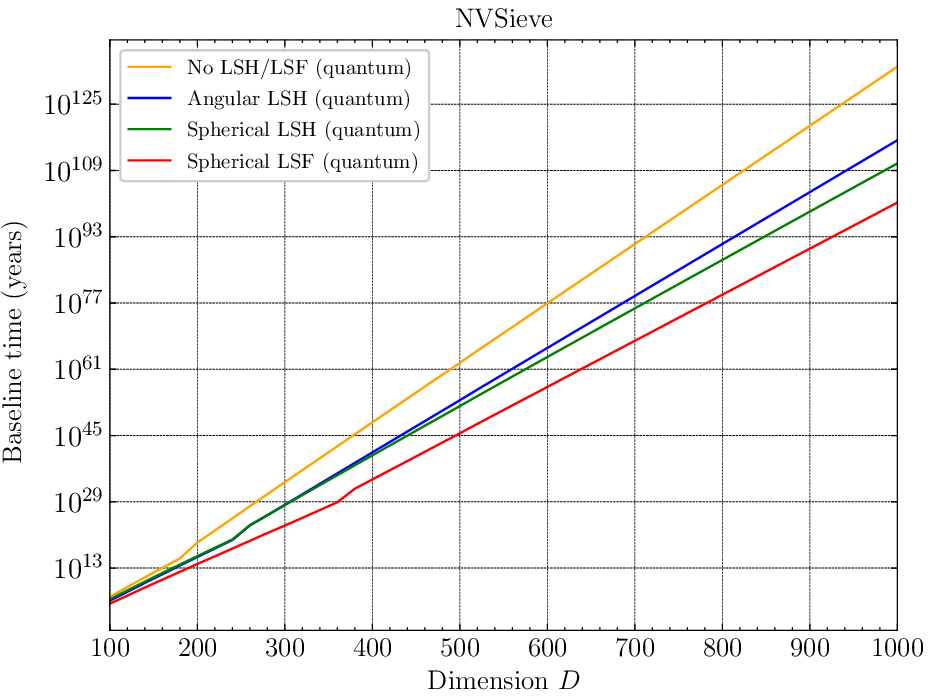}
        \caption{\centering Baseline circuit time of $\mathtt{NVSieve}$ with limited depth}
    \end{subfigure}
    \begin{subfigure}[b]{0.49\textwidth}
        \includegraphics[width=\textwidth]{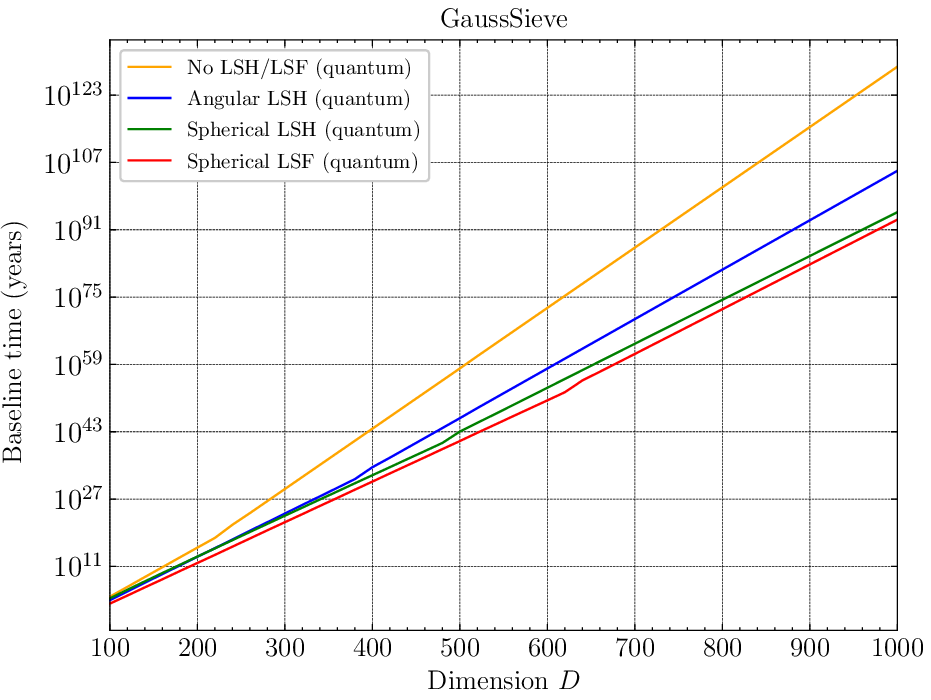}
        \caption{\centering Baseline circuit time of $\mathtt{GaussSieve}$ with limited depth}
    \end{subfigure}
    \caption{Number of physical qubits and circuit times in $\mathtt{NVSieve}$ and $\mathtt{GaussSieve}$ under baseline and active-volume physical architectures as a function of lattice dimension $D$ in the scenario where the reaction depth of each Grover's search is at most $2^{40}$. Circuit time includes classical hashing time.}
    \label{fig:extra_result3}
\end{figure}

\end{document}